\numberwithin{equation}{section}
\newtheorem{theorem}{Theorem}[section]
\newtheorem{lemma}[theorem]{Lemma}
\newtheorem{proposition}[theorem]{Proposition}
\newtheorem{corollary}[theorem]{Corollary}
\newtheorem{remark}[theorem]{Remark}
\newtheorem{definition}[theorem]{Definition}
\theoremstyle{definition}
\newcommand{\C}{\mathbb{C}}
\newcommand{\R}{\mathbb{R}}
\newcommand{\N}{\mathbb{N}}
\newcommand{\E}{\mathbb{E}}
\newcommand{\D}{\mathbb{D}}
\newcommand{\hf}{\frac{_1}{^2}}
\def\eps{\varepsilon}
\newcommand{\bz}{\textbf{z}}
\newcommand{\bx}{\textbf{x}}
\newcommand{\by}{\textbf{y}}
\newcommand{\Vg}[2]{V_{#1}^{#2}}
\newcommand{\Vgd}[2]{V_{{#1},{\delta}}^{#2}}
\newcommand{\opn}{\operatorname}
\newcommand{\nug}[2]{\nu_{{#1},{#2}}^\delta}
\newcommand{\gen}{\mathbf{g}}
\newcommand{\met}{\opn{Met}(\Sigma)}
\newcommand{\dif}{\mathcal{D}(\Sigma)}
\newcommand{\meth}{\opn{Met}_{\opn{H}}(\Sigma)}
\newcommand{\modu}{\mathcal{M}(\Sigma)}
\newcommand{\sect}[2]{\Gamma^{\opn{#1}}_{#2}(S T_2 \Sigma)}
\newcommand{\sectt}[2]{\Gamma^{\opn{#1}}_{#2}(S T^2 \Sigma)}
\newcommand{\sym}{\sect{}{}}
\newcommand{\trtr}{T_g^{\opn{tt}}\met}
\begin{document}

\title[Stress-Energy in LCFT]{Stress-Energy in Liouville Conformal Field Theory on Compact Riemann Surfaces}

\author[Joona Oikarinen]{Joona Oikarinen}
\address{University of Helsinki, Department of Mathematics and Statistics,
         P.O. Box 68 , FIN-00014 University of Helsinki, Finland}
\email{ joona.oikarinen@helsinki.fi}

\begin{abstract}
We derive the conformal Ward identities for the correlation functions of the Stress--Energy tensor in probabilistic Liouville Conformal Field Theory on compact Riemann surfaces by varying the correlation functions with respect to the background metric. The conformal Ward identities show that the correlation functions of the Stress--Energy tensor can be expressed as a differential operator with meromorphic coefficient acting on the correlation functions of the primary fields of Liouville Conformal Field Theory.

Variations of the metric come in three different forms: reparametrizations, conformal scalings and deformations of the conformal structure. Conformal symmetry makes it easy to treat variations of the metric that do not deform the conformal structure. Variations  that deform the conformal structure have to be treated separately, and this part of the computation relies on regularity and integrability properties of the correlation functions of Liouville Conformal Field Theory.
\end{abstract}

\maketitle
	
	\tableofcontents
	
\section{Introduction and Main Result}

\noindent Conformal field theories (CFTs) are field theories with conformal symmetry. The study of two-dimensional CFTs was initiated by Belavin, Polyakov and Zamolodchikov (BPZ) in \cite{BPZ}, where the concept of local conformal symmetry was emphasized. They showed that, in the context of two-dimensional Euclidean quantum field theory, the Lie algebra corresponding to conformal symmetry has infinitely many generators, and this infinite-dimensional symmetry leads to strong constrains on the theory. This led BPZ to formulate the conformal bootstrap hypothesis, which produces explicit expressions for the correlation functions of CFTs. The mathematical formulation of local conformal symmetry and the BPZ approach to CFT has been a goal for mathematicians ever since.

The notion of infinite-dimensional conformal symmetry in two-dimensional Euclidean space is rather elusive, since in this setting there exists no corresponding infinite-dimensional symmetry group \cite{schottenloher}. However, global conformal symmetry for diffeomorphism covariant field theories still makes sense in terms of a property called the 
\emph{Weyl invariance}. This gives a notion of global conformal symmetry in any dimension. The two-dimensional case stands out when one formulates a local version of this symmetry. What happens is that the object that encodes symmetries in field theories, (a component of) the \emph{Stress-Energy Tensor} (SE-tensor), is holomorphic in two-dimensional Weyl invariant field theories. Moreover, in the presence of the \emph{primary fields}, the SE-tensor picks up poles at the locations of the primary fields, leading to a meromorphic SE-tensor. This meromorphicity is the reason why one expects projective unitary representation of the Witt algebra, which are equivalent to unitary representations of the Virasoro algebra, to appear in the Hilbert space picture of CFT.

The picture is slightly complicated by a phenomenon called \emph{Weyl anomaly}, which usually arises when quantizing a Weyl invariant theory. The Weyl anomaly describes the breaking of Weyl invariance. The resulting effect for the SE-tensor is that it fails to be holomorphic in regions where the underlying space has curvature. The Weyl anomaly is also how the \emph{central charge}, an important parameter in CFTs, appears, as the measure of the strength of the Weyl anomaly. In the Hilbert space picture for CFTs, the central charge parametrizes the different central extensions of the symmetry algebra corresponding to local conformal symmetry, the Witt algebra. The central extension is called the Virasoro algebra.

Classical Liouville theory and its conformal symmetry have appeared in the mathematics literature already in the works of Poincaré and Picard in relation to uniformization of punctured Riemann surfaces \cite{unif}. Even though they did not explicitly talk about conformal symmetry, the classical SE-tensor appeared and played a crucial role in their work. Especially, they discovered the classical version of the so-called conformal Ward identities, but they were not able to derive explicit formulae for certain coefficients\footnote{These coefficients are called the \emph{accessory parameters} in complex geometry.} appearing in these identities. It turns out that discovering the correct formula is almost automatic from the point of view of CFT. The main result of the present work is the derivation of the conformal Ward identities in the quantized setting, starting from the probabilistic construction of the path integral of Liouville CFT, and then differentiating it with respect to the background metric.

The rigorous study of Euclidean Liouville quantum field theory was initiated in \cite{DKRV}, and since then there has been a lot of progress in understanding the conformal bootstrap framework in the case of probabilistic Liouville CFT, which is based on a probabilistic construction of the path integral. The path integral can be studied by means of Gaussian Multiplicative Chaos, which in this case is a random measure constructed using the exponential of the Gaussian Free Field. 

Before describing the main result and the rigorous probabilistic constructions of the relevant objects, we describe the general picture of conformal symmetry in two-dimensional Euclidean field theory, giving extra attention to Liouville field theory.

\subsection{Global conformal symmetry}

We start by giving a short description of conformal symmetry on a classical level. Let $(\Sigma,g)$ be an orientable compact  Riemannian surface. We say that a map $\psi: \Sigma \to \Sigma$ is conformal if it is a diffeomorphism and  $\psi^* g = e^\omega g$ for some $\omega \in C^\infty(\Sigma,\R)$, where $\psi^*$ denotes the pull-back and $C^\infty(\Sigma,\R)$ denotes the space of smooth real-valued functions on $\Sigma$. We denote the space of conformal maps on $(\Sigma,g)$ by $\opn{Conf}(\Sigma,g)$.

Consider a classical field theory on $(\Sigma,g)$ described by an action functional $S$. Maybe the simplest way to try to formulate global conformal symmetry would be to require that $S$ is \emph{conformally invariant}
\begin{align}\label{S_conformal_symmetry}
S(\varphi \circ \psi,g) = S(\varphi,g)\,, \quad \psi \in \opn{Conf}(\Sigma,g)\,,
\end{align}
for all field configurations $\varphi$ and conformal maps $\psi$. The symmetry group of $S$ is the biggest subgroup of $\opn{Diff}(\Sigma)$ for which \eqref{S_conformal_symmetry} holds. If one were to capture the full idea of infinite-dimensional symmetry envisioned by BPZ, one would expect the symmetry group of an action of a CFT to be bigger than just $\opn{Conf}(\Sigma,g)$, which is always finite-dimensional on Riemannian surfaces (sometimes even empty) \cite{schottenloher}, and it seems unlikely that this would lead to an infinite-dimensional local conformal symmetry. Because of this, we formulate conformal symmetry in a different way, which we describe next.

\subsubsection{Weyl invariance} Let $\opn{Diff}(\Sigma)$ be the set of all diffeomorphisms on the surface $\Sigma$. If we assume that the action functional $S$ is \emph{diffeomorphism covariant}
\begin{align}\label{S_diffeo_covariance}
S(\varphi \circ \xi^{-1}, g) = S(\varphi, \xi^* g)\,, \quad \xi \in \opn{Diff}(\Sigma)\,,
\end{align}
 then we can generalize the conformal invariance property \eqref{S_conformal_symmetry}. Indeed, if we apply the inverse of a conformal map $\psi$ to a diffeomorphism covariant action, we get
\begin{align}
S(\varphi \circ \psi^{-1},g) = S(\varphi, \psi^* g) = S(\varphi, e^\omega g)\,,
\end{align}
for some $\omega \in C^\infty(\Sigma,\R)$. Thus, if the action functional is \emph{Weyl invariant}
\begin{align}\label{S_weyl_invariance}
S(\varphi,e^\omega g) = S(\varphi,g)\,, \quad \omega \in C^\infty(\Sigma,\R)\,,
\end{align}
then \eqref{S_conformal_symmetry} holds. The upshot is that \eqref{S_weyl_invariance} makes sense even when no conformal maps exist on $(\Sigma,g)$. This motivates us to take the two properties \eqref{S_diffeo_covariance} and \eqref{S_weyl_invariance} as the definition of global conformal symmetry in classical field theory. Note that this definition does not make it clear what is the corresponding symmetry group (as a subgroup of $\opn{Diff}(\Sigma)$), but it allows us to formulate a fruitful notion of local conformal symmetry, which we  describe next.

\subsection{Local conformal symmetry}

Next we sketch how diffeomorphism covariance and Weyl invariance together lead to a description of local conformal symmetry on the classical level, which is the first step in understanding the conformal symmetry described by BPZ. Diffeomorphism covariance \eqref{S_diffeo_covariance} says that symmetries of $S$ are encoded in its $g$-dependency\footnote{By symmetry we mean that the value of $S$ does not change when we apply some diffeomorphism to $\varphi$. The diffeomorphism covariance \eqref{S_diffeo_covariance} allows us to move the action of this diffeomorphism to the metric, and in this sense symmetries of $S$ are encoded in its $g$-dependency.}. It is then natural to describe local symmetries in terms of the \emph{classical stress-energy tensor}
\begin{align}
T_{\alpha \beta}(z) &:= - 4 \pi \frac{\delta S(\varphi_\star,g)}{\delta g^{\alpha \beta}(z)}\,,
\end{align}
where $(g^{\alpha \beta})_{\alpha,\beta}$ is the inverse matrix of $(g_{\alpha \beta})_{\alpha,\beta}$, and  $\varphi_\star$ is the stationary value of the action $S$, i.e. the classical solution. We expect that diffeomorphism covariance and Weyl invariance translate into special properties for the SE-tensor. Indeed, if we take a flow of diffeomorphisms $(\psi_\eps)_{\eps \geq 0}$ generated by a smooth vector field $u$, then diffeomorphisms covariance implies
\begin{align*}
0 &= \partial_\eps|_0S(\varphi_\star \circ \psi_\eps,\psi_\eps^* g) = \partial_\eps|_0 S(\varphi_\star \circ \psi_\eps,g) + \partial_\eps|_0 S(\varphi_\star,\psi_\eps^* g) = \partial_\eps|_0 S(\varphi_\star,\psi_\eps^* g)\,,
\end{align*}
where we used the fact that $\partial_\eps|_0 S(\varphi_\star \circ \psi_\eps,g)=0$ because $\varphi_\star$ is the stationary value of $S$. Using the fact that the flow $(\psi_\eps)_{\eps \geq 0}$ is generated by $u$, we get that
\begin{align*}
(\psi_\eps^* g)^{\alpha \beta} &=  g^{\alpha \beta} + \eps(\nabla^\alpha u^\beta + \nabla^\beta u^\alpha) + \mathcal{O}(\eps^2)\,,
\end{align*}
where $\nabla$ is the gradient and $g^{\alpha \beta}$ denotes components of the inverse of $g$. It follows from the definition of the functional derivative (or formally, the chain rule), that
\begin{align*}
\partial_\eps|_0 S(\varphi_\star,\psi_\eps^* g) = \int_\Sigma \partial_\eps|_0 (\psi_\eps^* g)^{\alpha \beta}(z) \tfrac{\delta S(\varphi_\star,g)}{\delta g^{\alpha \beta}(z)} dv_g(z) = \int_\Sigma(\nabla^\alpha u^\beta(z) + \nabla^\beta u^\alpha(z))T_{\alpha \beta}(z) dv_g(z) &= 0 \,,
\end{align*}
where we use the Einstein summation convention. By integrating by parts and using the fact that $u$ is arbitrary, we get
\begin{align}\label{continuity_eq}
\nabla^\alpha T_{\alpha \beta} = 0\,.
\end{align}
The above equation is the local version of diffeomorphism covariance \eqref{S_diffeo_covariance}.

Next, to probe the implications of Weyl invariance, we take a family of smooth real-valued functions $(\omega_\eps)_{\eps \geq 0}$ with $\omega_0=0$. Then we have
\begin{align*}
 (e^{\omega_\eps} g)^{\alpha \beta} &= (1 + \eps \dot \omega) g^{\alpha \beta} + \mathcal{O}(\eps^2)\,,
\end{align*}
where $\dot \omega= \partial_\eps|_0 \omega_\eps$. Now, Weyl invariance leads to
\begin{align*}
\partial_\eps|_0 S(\varphi, e^{\omega_\eps} g) &= \int_\Sigma \partial_\eps|_0 (e^{\omega_\eps} g)^{\alpha \beta}(z) \tfrac{\delta S(\varphi,g)}{\delta g^{\alpha \beta}(z)} dv_g(z) = \frac{1}{4\pi} \int_\Sigma \dot \omega(z) g^{\alpha \beta}(z) T_{\alpha \beta}(z) dv_g(z)\,.
\end{align*}
Because $\dot \omega$ is an arbitrary smooth function, we get
\begin{align}\label{traceg}
\opn{tr}_g(T) &= g^{\alpha \beta} T_{\alpha \beta} =  0\,,
\end{align}
which is a local version of Weyl anomaly.

Two-dimensional CFT is often studied in conformal coordinates. Given a Riemannian metric, we call a conformal atlas (in two-dimensions) an atlas of charts where the metric takes the form
\begin{align*}
g(z) &= e^{\sigma(z)} |dz|^2 = \hf e^{\sigma(z)}(dz \otimes d \bar z + d \bar z \otimes dz)\,.
\end{align*}
A collection of such charts yields a complex structure, meaning that the transition maps between charts in a conformal atlas turn out to be holomorphic, see e.g. Theorem 3.11.1 in \cite{Jost}. A maximal conformal atlas is called a conformal structure.

It can then be checked that in a conformal coordinate $z$, the combination of the equations \eqref{continuity_eq} and \eqref{traceg} leads to the holomorphicity condition
\begin{align}\label{T_holom}
\partial_{\bar z} T_{zz} = 0\,.
\end{align}
This conclusion is special for the two-dimensional case, and could be seen as the main reason for the richness of local conformal symmetry in two-dimensional space. Indeed, one could now expand
\begin{align}\label{T_series}
T_{zz}(z) &= \sum_{n=0}^\infty \ell_n z^n\,,
\end{align}
and in the Hilbert space picture (see e.g. \cite{Kup}) the complex coefficients $\ell_n$ would become operators on the Hilbert space of the quantized theory.

\subsection{Conformal symmetry in Euclidean quantum field theory}
In the context of Euclidean quantum field theory\footnote{By this we mean that the underlying space is Euclidean (or Riemannian), so there is no time dimension.}, for a given action functional $S$, expected values of observables are formally given in terms of a \emph{path integral}
\begin{align}\label{path_integral}
\langle F \rangle_g &:= \int F(\varphi) e^{-S(\varphi,g)} \text{d} \varphi\,,
\end{align}
where $\text{d} \varphi$ is a formal Lebesgue measure on the space of field configurations. The above should be viewed as the integral of the functional $F$ with respect to the measure $e^{-S(\varphi,g)} \text{d} \varphi$ on some space of functions. In the context of CFT we then expect \eqref{path_integral} to satisfy \emph{diffeomorphism covariance}
\begin{align}
\langle F \rangle_{\Sigma,\psi^* g} &= \langle \psi_* F \rangle_{\Sigma,g}\,,  \quad \psi \in \opn{Diff}(\Sigma)\,,
\end{align}
where $(\psi_* F)(\varphi) := F(\varphi \circ \psi)$, and \emph{Weyl anomaly}
\begin{align}\label{F_axioms}
\langle F \rangle_{\Sigma,e^\omega g} &= e^{c A(\omega,g)} \langle F \rangle_{\Sigma,g}\,, \quad \omega \in C^\infty(\Sigma,\R)\,,
\end{align}
where $c$ is a constant called the \emph{central charge}, and  $A$ is a functional describing the conformal anomaly, given by 
\begin{align}\label{anomaly}
A(\varphi,g) &= \frac{1}{48 \pi} \int_\Sigma ( \tfrac{1}{2} |d \varphi|_g^2 +  K_g \varphi) dv_g\,,
\end{align}
where we denoted by $d$ the exterior derivative, by $|\cdot|_g$ the norm induced by $g$, by $K_g$ the scalar curvature, and by $dv_g$ the volume form of the metric $g$. Heuristically, the exact Weyl invariance present in the classical setting \eqref{S_weyl_invariance} does not appear in the path integral setting due to the singular nature of the measure $\text{d}{\varphi}$, which breaks the Weyl invariance \cite{MaMi}.

In CFT the interest is on the \emph{correlation functions} of \emph{primary fields} $V^\alpha(x)$
\begin{align}\label{intro1}
\langle \prod_{j=1}^N V^{\alpha_j}(x_j) \rangle_{\Sigma,g}
\end{align}
The defining properties for primary fields, in the case of scalar fields, is that we have the diffeomorphism covariance
\begin{align}\label{intro_diffeo_weyl}
\langle \prod_{j=1}^N V^{\alpha_j}(x_j) \rangle_{\Sigma,\psi^* g} &= \langle \prod_{j=1}^N V^{\alpha_j}(\psi(x_j)) \rangle_{\Sigma,g}\,,
\end{align}
and the Weyl anomaly
\begin{align}\label{intro_weyl}
\langle \prod_{j=1}^N V^{\alpha_j}(x_j) \rangle_{\Sigma,e^\omega g} &= e^{cA(\omega,g) -\sum_{j=1}^N \Delta_{\alpha_j} \omega(x_j)} \langle \prod_{j=1}^N V^{\alpha_j}(x_j) \rangle_{\Sigma,g}\,,
\end{align}
where the constant $\Delta_\alpha$ is the \emph{conformal weight} of the primary field $V^\alpha$. The difference between the two Weyl anomalies \eqref{F_axioms} and \eqref{intro_weyl} is due to the singular nature of the fields $V^\alpha$.

In this setting, the SE-tensor is formally defined by varying \eqref{path_integral} with respect to the inverse of the metric
\begin{align*}
\langle T_{\mu \nu}(z) F \rangle_{\Sigma,g} &:= 4 \pi \tfrac{\delta}{\delta g^{\mu \nu} (z)} \langle F \rangle_{\Sigma,g} = -4\pi \int\tfrac{\delta S(\varphi,g)}{\delta g^{\mu \nu}(z)} F(\varphi) e^{-S(\varphi,g)} \text{d} \varphi\,.
\end{align*}
What we will be interested in are the higher variations with respect to $g$ of the correlation functions
\begin{align}\label{introdg}
\langle \prod_{k=1}^n T_{\mu_k \nu_k}(z_k) \prod_{j=1}^N V^{\alpha_j}(x_j) \rangle_{\Sigma,g} &:= (4\pi)^n \prod_{k=1}^n \tfrac{\delta}{\delta g^{\mu_k \nu_k}(z_k)} \langle \prod_{j=1}^N V^{\alpha_j}(x_j) \rangle_{\Sigma,g}\,,
\end{align}
where $g^{\mu \nu}$ denotes the components of the inverse of $g$, that is, $g^{\alpha \beta} g_{\beta \gamma} = \delta^\alpha_\gamma$. The precise definition of the SE-tensor will be established later in Proposition \ref{T_pointwise}. The symmetries \eqref{intro_diffeo_weyl} and \eqref{intro_weyl} strongly constrain the correlation functions \eqref{introdg} of the SE-tensor, which can be seen from the fact that an arbitrary smooth metric $g$ on $\Sigma$ can be written in the form
\begin{align}\label{intro2}
g &= e^\omega \psi^* g_\tau\,,
\end{align}
where $(g_\tau)_{\tau \in \modu}$ is a collection of distinguished metrics, parametrized by the \emph{moduli space} $\modu$ of the surface $\Sigma$. The moduli space can be seen as the space of different conformal (or complex) structures of the surface, and in view of Equations \eqref{intro_diffeo_weyl} and \eqref{intro_weyl}, the correlation functions \eqref{intro1} depend non-trivially only on the conformal structure generated by $g$, i.e. on the metric $g_\tau$ appearing in \eqref{intro2}. Since the dependence on $g_\tau$ of the correlation functions is not determined by the symmetries \eqref{intro_diffeo_weyl} and \eqref{intro_weyl}, it has to be studied separately in each CFT. For compact surfaces, $\modu$ is finite-dimensional, and thus conformal symmetry determines the behaviour of the SE-tensor up to a finite dimensional degree of freedom of the metric.

Working on the plane (and hence with trivial moduli space and no curvature), BPZ argued with theoretical physics level of rigour, that in conformal coordinates the correlation functions of the SE-tensor are meromorphic in the $z_k$ variables and can be written as a differential operator acting on correlation functions of the primary fields in the $x_k$ variables. The resulting identities are called the \emph{conformal Ward identities}, and they completely describe the poles of the correlation function.

Weyl invariance implies that the SE-tensor is traceless, and in flat regions of space this holds even in the presence of Weyl anomaly. This leaves two non-trivial components for the SE-tensor, since symmetric tensors have at most three non-trivial components. The two non-trivial components of the SE-tensor then are
\begin{align}
T_{zz} &= \tfrac{1}{4}(T_{11} - T_{22} - 2 i T_{12})\,, \\
T_{\bar z \bar z} &= \tfrac{1}{4}(T_{11}-T_{22} + 2 i T_{12})\,,
\end{align}
where $z = x_1+ix_2$ and the components on the right-hand side denote the components of $T$ in the coordinate $(x_1,x_2)$. Later, Eguchi and Ooguri in \cite{EgOo} studied the SE-tensor of CFT on arbitrary compact Riemann surfaces, including the variations of the moduli $\tau \in \modu$ in their computations. They showed, with the theoretical physics level of rigour, that also in the presence of modular variations the correlation functions of the SE-tensor remain meromorphic and single-valued, and they satisfy the conformal Ward identities.

\subsection{Conformal symmetry in classical Liouville field theory}

Classical Liouville field theory originally appeared in the context of uniformization of punctured Riemann surfaces. The punctures are located at points $x_i \in \Sigma$ and to each puncture there is an associated weight $\chi_i \in \R$. For $y=(y_1,\hdots,y_n) \in \Sigma^n$ and $\chi=(\chi_1,\hdots,\chi_n) \in \R^n$ the action functional, roughly speaking, is given by
\begin{align}\label{punctured action}
S_L(\varphi,g,\chi,y) &= \int_\Sigma \big( \hf |d\varphi|_g^2 + K_g \varphi + 2 e^\varphi \big) dv_g  -2\pi \sum_{i=1}^n \chi_i (\varphi(y_i) + \sigma(y_i) )\,,
\end{align}
where $g=e^\sigma \delta$ in a fixed conformal coordinate chart $\C$. In reality the above action has to be regularized and the fields over which it is minimized is a certain space of functions that have logarithmic singularities at the points $y_i$ \cite{LRV}. The minimizer $\varphi_\star^{(\chi,y)}$ of this action has the property that the metric $e^{\varphi_\star^{(\chi,y)}}g$ has a constant negative curvature $-2$ outside of the punctures $y_i$, and the metric has conical singularities at the punctures.

The action functional has the following diffeomorphism covariance and anomalous Weyl invariance (Weyl anomaly) properties
\begin{align}\label{S_L_symmetry}
S_L(\psi \circ \varphi, \psi^*g,\chi,y) &= S_L(\varphi,g,\chi, \psi(y))\,, \quad \psi \in \opn{Diff}(\Sigma)\,, \nonumber \\
S_L(\varphi,e^\omega g,\chi,y) &= S_L(\varphi+\omega,g,\chi,y) - A(\omega,g) +  \sum_{i=1}^n  \chi_i(1- \tfrac{\chi_i}{4} ) \omega(y_i) \,, \quad \omega \in C^\infty(\Sigma,\R)\,,
\end{align}
where we used the abuse of notation $\psi(y) = (\psi(y_1),\hdots,\psi(y_n))$. The form of the action functional implies that the field $\varphi_\star^{(\chi,y)}$ has logarithmic singularities at the points $y_i$. This has the consequence that the SE-tensor will no longer be holomorphic, but will pick up first and second order poles at these points. The formula describing these poles is the classical conformal Ward identity, which in a slightly simplified form\footnote{The formula holds in regions where $g$ is Euclidean, i.e. $\sigma=0$.} is given by
\begin{align}\label{i_T}
T_{zz}(z) &=  \sum_{i=1}^n \Big( \frac{\chi_i(1-  \frac{\chi_i}{4}) }{(z-y_i)^2} - \frac{\partial_{y_i} S_L(\varphi_\star^{(\chi,y)},g,\chi,y)}{z-y_i} \Big)\,,
\end{align}
The form of this identity directly follows from the symmetries \eqref{S_L_symmetry}, the second order pole originating from the Weyl anomaly, and the first order pole originating from the diffeomorphism covariance, which can be quickly seen by the following heuristic. Fix a conformal coordinate $z$ and consider an infinitesimal perturbation of the metric $g$, given by
\begin{align*}
(g_\eps)^{zz} &= g^{zz} + \eps \delta_{g,z}
\end{align*}
where $\eps$ is infinitesimal and $\delta_{g,z}$ is the delta function at $z$. For simplicity we work on the sphere $\Sigma = \mathbb{S}^2$ and we assume that the metric is Euclidean around the points $z$ and $y_i$, i.e. $g(w)=e^{\sigma(w)} |dw|^2$ and  $\sigma(w) = 0$ for $w$ close enough to $z$ or $y_i$. Any metric $\hat g$ on the sphere can be written in the form $\hat g = e^\omega \psi^* g$. Thus, our perturbed metric also admits the form $e^{\omega_\eps}\psi_\eps^* g$. One can then check that in a conformal coordinate $w$ the diffeomorphism $\psi_\eps$ satisfies the partial differential equation
\begin{align*}
\partial_{\bar w} \psi_\eps(w) &= - \tfrac{\eps}{4} \delta_{g,z}(w) e^{\sigma(w)} + \mathcal{O}(\eps^2)\,.
\end{align*}
The partial derivative $\partial_{\bar w}$ is inverted by the Cauchy transform, leading to the Cauchy kernel
\begin{align}\label{i_psi}
\psi_\eps(w) &= - \frac{\eps}{4\pi} \frac{1}{w-z} + \mathcal{O}(\eps^2)\,.
\end{align}
It can also be checked that 
\begin{align}\label{i_phi}
\omega_\eps(w) &= \frac{\eps }{4\pi} \Big( \frac{1}{(w-z)^2} + \frac{\partial_w \sigma(w)}{w-z} \Big) + \mathcal{O}(\eps^2)\,.
\end{align}
Now the Ward identity can be derived by using the symmetries of the action functional as follows
\begin{align*}
T_{zz}(z) &= 4 \pi  \partial_\eps|_0 S_L(\varphi_\star^{(\chi,y)},e^{\omega_\eps} \psi_\eps^* g,\chi, y) \\
&= 4\pi \partial_\eps|_0 \big( \sum_{i=1}^n \tfrac{1}{2}(\chi_i- \tfrac{\chi_i^2}{4}) \omega_\eps( \psi_\eps(y_i)) - A(\omega_\eps \circ \psi_\eps , g) +  S_L(\varphi_\star^{(\chi,\psi_\eps(y))} \circ \psi_\eps  + \omega_\eps \circ \psi_\eps,  g,\chi, \psi_\eps(y))   \big)\,.
\end{align*}
By using the fact that $\varphi_\star^{(\chi,y)}$ is the stationary value and that $\sigma(z)=0$, the above derivative simplifies to
\begin{align*}
T_{zz}(x) &= 4 \pi \partial_\eps|_0 \big(  \sum_{i=1}^n \chi_i(1 - \frac{\chi_i}{4}) \omega_\eps(y_i) + S_L(\varphi_\star^{(\chi,\psi_\eps(y))} \circ \psi_\eps,g,\chi, \psi_\eps(y) )   \big) \\
&=    4 \pi \sum_{i=1}^n \chi_i(1 - \tfrac{\chi_i}{4}) \partial_\eps|_0 \omega_\eps(y_i) + 4\pi \sum_{i=1}^n \partial_\eps \psi_\eps(y_i) \partial_{y_i} S_L(\varphi_\star^{(\chi,y)},g,\chi,y) \,.
\end{align*}
Now \eqref{i_T} follows from Equations \eqref{i_psi} and \eqref{i_phi}.

\subsection{Path integrals and Liouville Conformal Field Theory}

The constructive approach to quantum field theory aims to construct correlation functions (and other observables) of fields as concrete integrals over an infinite-dimensional space of field configurations. In this setting, for a functional $F$ on the field configuration space, we consider an integral of the form
\begin{align}\label{intro_pi}
\langle F \rangle_{\Sigma,g} &= \int F(\varphi) e^{-S(\varphi,g)} \text{d}\varphi\,,
\end{align}
where $S$ is the action functional describing the specific theory at hand, and $\text{d} \varphi$ is a tentative uniform measure on the space of field configurations. Then properties like \eqref{intro_diffeo_weyl} should be derivable from the properties of the path integral, once it is rigorously defined.

The action functional of Liouville field theory is 
\begin{align*}
S_L(\varphi,g) &= \frac{1}{4\pi} \int_\Sigma \big( |d \varphi|_g^2 + Q K_g \varphi + 4 \pi \mu e^{\gamma \varphi} \big) dv_g\,,
\end{align*}
where $\gamma \in (0,2), \mu \in (0,\infty)$ and $Q = \frac{2}{\gamma} + \frac{\gamma}{2}$. Liouville theory turns out to be a CFT with the primary fields given by
\begin{align}\label{introliou}
V_\alpha(x) &= e^{\alpha \varphi(x)}
\end{align}
for $\alpha \in \C$. The corresponding conformal weight is
\begin{align}
\Delta_\alpha &= \tfrac{\alpha}{2} (Q-\tfrac{\alpha}{2})\,.
\end{align}
A formal computation in \eqref{introdg} shows that in conformal coordinates, where $g=e^{\sigma}|dz|^2$, the SE-tensor should be given by the field
\begin{align}\label{intro_sefield}
T_{zz}(z) &= Q \partial_z^2 \big(\varphi(z) + \tfrac{Q}{2} \sigma(z)\big) - \big(\partial_z (\varphi(z) + \tfrac{Q}{2} \sigma(z) \big)^2\,,
\end{align}
where the $(\partial_z \varphi)^2$-term is supposed to be ``Wick ordered".

A rigorous construction of the path integral for the Liouville theory was carried out in \cite{DKRV} on the Riemann sphere, and later on other compact Riemann surfaces in \cite{DRV, GRV16}. The correlation functions of the fields \eqref{introliou} have a concrete path integral construction for certain real values of $\alpha$. The constrains on the possible values of $\alpha$ are called the \emph{Seiberg bounds}, to be specified later. The conformal Ward identities on the Riemann sphere were derived in \cite{ward} using the field definition of the SE-tensor \eqref{intro_sefield} and Gaussian integration by parts. In \cite{KO} the conformal Ward identities on the Riemann sphere were derived by varying the metric.

The SE-tensor of a CFT is closely related to its Virasoro algebra, and the conformal Ward identities have algebraic consequences. From the path integral of LCFT it is possible to construct the physical Hilbert space, see \cite{Kup}. It then should be possible to construct a unitary representation of the Virasoro algebra on this Hilbert space using the conformal Ward identities, see Section 4 of \cite{KO} for discussion. The Virasoro representation for Liouville CFT was recently rigorously constructed by a different method in \cite{BGKRV}.

In this paper we use the approach of \cite{KO} to derive the conformal Ward identities on compact surfaces with genus at least two. This approach is more natural from geometric point of view, as was explained in the previous sections, and does not rely on the ansatz \eqref{intro_sefield} obtained from formal computations\footnote{The origin of the ansatz is explained in Appendix \ref{appendix_b}.}. The presence of a non-trivial moduli space complicates the analysis, since a separate proof, showing that the modular variations of the correlation functions are well-defined, is required. The proof relies on properties of the path integral of Liouville theory, and as such does not say anything general about modular variations in CFTs defined by a path integral. If the metric is varied in a way that leaves the conformal structure unchanged, the corresponding variation of the correlation functions can be computed by using the conformal symmetry, and thus this part of the computation works in other CFTs, too.

\subsection{Main result}

The conformal Ward identity can be conveniently expressed using the Green function $\mathcal{G}$ of the anti-holomorphic derivative $\nabla^z = g^{z \bar z} \partial_{\bar z}$, viewed as a map $\Gamma(T^1 \Sigma) \to \Gamma(T^2 \Sigma)$ (see Section \ref{section_geometry}). In conformal coordinates it satisfies
\begin{align}\label{introg}
\mathcal{G}^z_{ww}(z,w) &= \frac{1}{4\pi} \frac{1}{z-w} + \mathcal{R}(z,w) \,,
\end{align}
for some smooth function $\mathcal{R}$. Then the conformal Ward identity for a hyperbolic metric $g$ (i.e. a metric with constant negative scalar curvature) is
\begin{align}\label{introward}
&\frac{1}{4\pi} \langle \prod_{k=1}^n T_{z_k z_k}(z_k) \prod_{j=1}^N \Vg{g}{\alpha_j}(x_j) \rangle_g \\
&= - \frac{c}{12} \sum_{i=2}^n \nabla^3_{z_i} \mathcal{G}^{z_i}_{z_1 z_1}(z_i,z_1) \langle \prod_{k \neq 1,i} T_{z_k z_k}(z_k) \prod_{j=1}^N \Vg{g}{\alpha_j}(x_j) \rangle_g \nonumber \\
& \quad +  \sum_{i=2}^n \big( 2 \nabla_{z_i} \mathcal{G}^{z_i}_{z_1 z_1}(z_i,z_1) + \mathcal{G}^{z_i}_{z_1 z_1} (z_i,z_1) \nabla_{z_i} \big) \langle \prod_{k=2}^n T_{z_k z_k}(z_k) \prod_j \Vg{g}{\alpha_j}(x_j) \rangle_g \nonumber \\
& \quad  + \sum_{j=1}^N \big( \Delta_{\alpha_j} \nabla_{x_j} \mathcal{G}^{x_j}_{z_1 z_1} (x_j,z_1) + \mathcal{G}^{x_j}_{z_1 z_1}(x_j,z_1) \nabla_{x_j} \big) \langle \prod_{k=2}^n T_{z_k z_k}(z_k) \prod_j \Vg{g}{\alpha_j}(x_j) \rangle_g\,, \nonumber \\
& \quad + \frac{1}{4\pi} \langle T_m(z_1) \prod_{k=2}^n T_{z_k z_k}(z_k) \prod_j \Vg{g}{\alpha_j}(x_j) \rangle_g\,. \nonumber
\end{align}
Above $\nabla$ is the covariant derivative, and $c$ is the central charge that already appeared in \eqref{intro_weyl}. The last term on the right-hand side is the contribution coming from the deformation of the conformal structure of the Riemann surface, and its precise definition will be given in the proof of Theorem \ref{theorem}.

We remark that the function $\mathcal{R}$ in \eqref{introg} is not holomorphic, and thus from the formula above it is not immediately clear that the left-hand side is a meromorphic function. However, the non-holomorphic contributions from $\mathcal{R}$ should cancel out in the end, as was argued at the end of Section 2 in \cite{EgOo}. For non-hyperbolic metrics one also gets a non-holomorphic contribution, see Equation (46) in \cite{EgOo}. In the case of the sphere, $\mathcal{R}$ vanishes, and $\mathcal{G}$ essentially reduces to the Cauchy transform, so the Ward identities take their familiar form. Our result is the following.
\begin{theorem}
Let $(x_1,\hdots,x_N) \in \Sigma^N$ be non-coinciding points. Assume that $(\alpha_1,\hdots,\alpha_N) \in \R^N$ satisfy the Seiberg bounds. Then the LCFT correlation functions \eqref{intro1} on $(\Sigma,g)$ are smooth with respect to the metric $g$. The derivatives \eqref{introdg} exist and are smooth in $z_k$ and $x_j$ in the region of non-coinciding points. For hyperbolic metric $g$ the derivatives \eqref{introdg} satisfy the conformal Ward identity \eqref{introward}.
\end{theorem}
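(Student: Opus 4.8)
The plan is to interpret the functional derivatives \eqref{introdg} as derivatives along smooth one-parameter families of metrics and to exploit the decomposition \eqref{intro2} of a general metric into a Weyl factor $e^\omega$, a diffeomorphism $\psi$, and a point $\tau$ of moduli space. Concretely, the delta-function perturbation of $g^{z_1 z_1}$ at $z_1$ that defines the insertion $T_{z_1 z_1}(z_1)$ is split, exactly as in the classical computation around \eqref{i_psi}--\eqref{i_phi}, into these three kinds of infinitesimal deformation, and each kind is treated by different means: the Weyl and reparametrization parts are governed entirely by the symmetry axioms \eqref{intro_diffeo_weyl} and \eqref{intro_weyl}, while the modular part must be analyzed directly from the probabilistic construction.

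First I would establish the smoothness of $g \mapsto \langle \prod_j \Vg{g}{\alpha_j}(x_j) \rangle_g$ along smooth families of metrics. Writing \eqref{intro1} through its probabilistic definition in terms of the Gaussian Free Field and its Gaussian Multiplicative Chaos, I would show that a smooth variation of $g$ induces smooth variations of the GFF covariance and of the curvature term, and that the resulting chaos integrals depend smoothly on $g$; the Seiberg bounds supply the uniform moment and integrability estimates needed to differentiate under the expectation and to justify dominated convergence. The existence and smoothness in the insertion points of the singular derivatives \eqref{introdg} themselves will then emerge from the explicit formula obtained below, since the kernels appearing there are smooth away from the diagonal.

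For the symmetry-determined part I would linearize \eqref{intro_weyl} in the conformal factor $\omega$ and \eqref{intro_diffeo_weyl} along the flow of a vector field $u$. Differentiating \eqref{intro_weyl} produces the central-charge (anomaly) contribution and the conformal-weight terms $\Delta_{\alpha_j}$, and differentiating \eqref{intro_diffeo_weyl} produces the first-order transport terms $\nabla_{x_j}$ and $\nabla_{z_k}$. The geometric content is that the vector field $\psi$ absorbing the diffeomorphism part solves $\nabla^z \psi \propto \eps\, \delta_{z_1}$, i.e. it inverts the operator $\nabla^z$ of Section \ref{section_geometry}; on a compact surface this inversion is furnished by the Green function $\mathcal{G}$ of \eqref{introg} rather than the bare Cauchy kernel. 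Substituting the local expansion \eqref{introg} into the linearized symmetry relations converts the second-order pole of the conformal factor into the $\nabla^3 \mathcal{G}$ (central-charge) term and the first-order poles into the $2\nabla\mathcal{G} + \mathcal{G}\nabla$ and $\Delta_{\alpha_j}\nabla\mathcal{G} + \mathcal{G}\nabla$ terms, reproducing the first three lines of \eqref{introward}, the smooth remainder $\mathcal{R}$ riding along with the $\frac{1}{z-w}$ pole.

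The main obstacle is the modular term $\langle T_m(z_1) \cdots \rangle_g$ in the last line. By Riemann--Roch the map $\nabla^z : \Gamma(T^1\Sigma) \to \Gamma(T^2\Sigma)$ has, for genus at least two, trivial kernel but a cokernel of complex dimension $3\gen-3$ consisting of holomorphic quadratic differentials; this is precisely the obstruction to absorbing the full delta-function perturbation by a diffeomorphism, and the projection of that perturbation onto the cokernel is the genuine deformation of the conformal structure that $T_m$ records. Unlike the Weyl and diffeomorphism pieces, this contribution is not fixed by \eqref{intro_diffeo_weyl} or \eqref{intro_weyl}, so I must prove directly that the correlation functions are differentiable in the modular parameter $\tau$ and identify the derivative with $T_m$. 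The heavy analytic step is interchanging the $\tau$-derivative with both the chaos expectation and the integration over $\Sigma$ against the singular measure: one must control the chaos uniformly as $\tau$ varies and rule out boundary contributions as $z_1$ approaches the other insertions, and it is here that the regularity and integrability properties of the LCFT correlation functions flagged in the abstract, valid up to the Seiberg boundary, are indispensable. Once differentiability in $\tau$ is secured and $T_m$ is identified, assembling the four contributions yields \eqref{introward}.
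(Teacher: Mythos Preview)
Your proposal is correct and follows essentially the same route as the paper: decompose the perturbation via $g_\eps = e^{\varphi_\eps}\psi_\eps^* h_\eps$, handle the Weyl and diffeomorphism parts by linearizing the symmetry axioms (with $\psi_\eps$ obtained from the Beltrami equation and inverted through the Green function $\mathcal{G}$ of the conformal Killing operator), and treat the modular piece---the projection onto the $(3\gen-3)$-dimensional cokernel of $\nabla^z$---by a direct analysis of the probabilistic correlation functions. The one place where the paper is more explicit than your sketch is the smoothness of $\langle \prod_k T_m(z_k)\prod_j V^{\alpha_j}(x_j)\rangle_g$ in the insertion points $x_j$: this does not simply ``emerge from the formula'' but requires the singular-integral and fusion-estimate machinery of \cite{Oik} (Proposition~\ref{Tm_smooth}), since differentiating in $x_j$ produces non-absolutely-convergent integrals that must be renormalized via Gaussian integration by parts.
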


The structure of the article is the following. In Section \ref{section_geometry} we summarize the geometric background needed for the rest of the article. In Section \ref{section_lcft} we recall the definition of the LCFT correlation functions. In Section \ref{section_green} we compute the variations with respect to the metric of the Green function of the Laplace--Beltrami operator. In Section \ref{section_correlation} we show that the variations with respect to the metric of the LCFT correlation functions are well-defined, where the main focus is on the variations of the moduli. Finally, in Section \ref{section_ward}, we derive the conformal Ward identities.

\subsection*{Acknowledgements} We thank Antti Kupiainen for useful discussions and guidance during this project. This work is financially supported by DOMAST and Academy of Finland.

\section{Geometric background}\label{section_geometry}

Let $(\Sigma,g)$ be an orientable compact Riemann surface with genus $\gen \geq 2$. We denote the space of all Riemannian metrics on $\Sigma$ by $\met$, the space of smooth tensor fields of order $(n,m)$ by $\Gamma(T^n_m \Sigma)$, and the space of symmetric smooth tensor fields of order $(n,m)$ by $\Gamma(ST^n_m \Sigma)$. The corresponding subspaces of trace-free tensors are denoted by $\Gamma^{\opn{tf}}(T^n_m \Sigma)$ and $\Gamma^{\opn{tf}}(ST^n_m \Sigma)$. Note that $\met$ is an open subset of $\sect{}{}$, which allows us to identify the tangent space $T_g \met$ with $\sect{}{}$. We endow $T_g \met$ with the $L^2$-inner product
\begin{align}\label{tangent_ip}
(f_1, f_2)_g &= \int_\Sigma (f_1)^{\alpha \beta} (f_2)_{\alpha \beta} dv_g\,,
\end{align}
where we use the Einstein summation convention for repeated Greek indices. We denote the group of smooth diffeomorphisms on $\Sigma$ by $\dif$. It acts on $\met$ by pullback 
\begin{align}
\psi \cdot g &:=\psi^* g \,, \quad \psi \in \dif, \; g \in \met\,.
\end{align}
The group of smooth real-valued functions $C^\infty(\Sigma)$ acts on $\met$ by Weyl transformations
\begin{align}
\varphi \cdot g &:= e^\varphi g\, \quad \varphi \in C^\infty(\Sigma), \; g \in \met\,.
\end{align}
We say that a metric $g$ is hyperbolic if the scalar curvature is constant $K_g=-2$. The subspace of $\met$ consisting of hyperbolic metrics is denoted by $\meth$. It is a smooth sub manifold of $\met$. We denote the covariant derivative (or the Levi--Civita connection) of $g$ by $\nabla$ and the Lie derivative of a tensor field $f$ along a vector field $u$ by $\mathcal{L}_u f$.

We call the orbit of $g$ under the action of $C^\infty(\Sigma)$ the \emph{conformal class} of $g$. For our purposes the following classical result will be fundamental. Any metric $g \in \met$ can be uniquely written in the form
\begin{align}
g &= e^\varphi h
\end{align}
where $h \in \meth$ and $\varphi \in C^\infty(\Sigma)$. Thus, in each conformal class, there exists a unique hyperbolic metric. The moduli space $\modu$ of the surface $\Sigma$ is given by the quotient
\begin{align}
\modu &:= \big(\met / C^\infty(\Sigma)\big)/\dif = \meth/\dif \,.
\end{align}
The moduli space is an orbifold of complex dimension $3\gen-3$. It fails to be a manifold because the action of $\mathcal{D}(\Sigma)$ on $\meth$ has fixed points, and thus the quotient creates orbifold points, corresponding to the fixed points. This means that the orbifold points are (equivalence classes of) hyperbolic metrics that admit non-trivial isometries. For a reference on moduli spaces of compact Riemann surfaces, see e.g. Chapter 12 of \cite{FaMa} or Chapter 12.5.1 of \cite{handbook} and references therein.

\begin{lemma}
	Let $f \in \sym$ and define $g_\eps = g + \eps f$. Then
	\begin{align}\label{K_var}
	\partial_{\eps}|_0 K_{g_\eps}(z) &= - \hf K_g \opn{tr}_g (f) + \nabla^{\alpha}\nabla^\beta f_{\alpha \beta} - \Delta_g \opn{tr}_g(f)\,.
	\end{align}
	Furthermore, if $f_1, f_2 \in \sym$ have mutually disjoint supports and $g_\eps^{\alpha \beta} =g^{\alpha \beta}+ \eps_1 f_1^{\alpha \beta} + \eps_2 f_2^{\alpha \beta}$, then
	\begin{align}
	\partial_{\eps_1} \partial_{\eps_2} K_{g_\eps}(z)|_{(\eps_1,\eps_2)=0} = 0\,.
	\end{align}
\end{lemma}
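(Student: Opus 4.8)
The plan is to treat the two assertions separately: the first is the classical first-variation formula for scalar curvature specialized to two dimensions, while the second follows from locality of the curvature functional.

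For the first variation I would start from the dimension-independent formula for the variation of scalar curvature, which is derived from the Palatini identity. Writing the perturbation as $\delta g_{\alpha\beta} = f_{\alpha\beta}$, the Christoffel symbols vary by
\[
\delta \Gamma^\mu_{\alpha\beta} = \tfrac{1}{2} g^{\mu\nu}\big( \nabla_\alpha f_{\nu\beta} + \nabla_\beta f_{\nu\alpha} - \nabla_\nu f_{\alpha\beta} \big),
\]
and $\delta R_{\alpha\beta} = \nabla_\mu \delta\Gamma^\mu_{\alpha\beta} - \nabla_\beta \delta\Gamma^\mu_{\alpha\mu}$. Contracting $K_g = g^{\alpha\beta} R_{\alpha\beta}$, using $\delta g^{\alpha\beta} = -f^{\alpha\beta}$, and integrating the two divergence terms by parts inside the contraction yields the general identity
\[
\partial_\eps|_0 K_{g_\eps} = - R^{\alpha\beta} f_{\alpha\beta} + \nabla^\alpha \nabla^\beta f_{\alpha\beta} - \Delta_g \opn{tr}_g(f).
\]
The only genuinely two-dimensional input enters at the last step: on a surface the Ricci tensor is pure trace, $R_{\alpha\beta} = \tfrac{1}{2} K_g\, g_{\alpha\beta}$, so $R^{\alpha\beta} f_{\alpha\beta} = \tfrac{1}{2} K_g\, \opn{tr}_g(f)$, which converts the first term into $-\tfrac{1}{2} K_g \opn{tr}_g(f)$ and produces \eqref{K_var} exactly, with $\Delta_g = g^{\alpha\beta}\nabla_\alpha \nabla_\beta$ the (negative-spectrum) Laplace--Beltrami operator matching the stated sign.

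For the mixed second variation I would argue purely by locality. The scalar curvature $K_g(z)$ is a universal local expression in the metric components and their first two derivatives at the single point $z$; in particular $K_{g_\eps}(z)$ depends on the perturbations only through the $2$-jets of $f_1$ and $f_2$ at $z$. Since $\opn{supp}(f_1)$ and $\opn{supp}(f_2)$ are disjoint closed sets, for each fixed $z$ at least one of them fails to contain $z$, and then that $f_i$ vanishes identically on a neighborhood of $z$, so its entire $2$-jet at $z$ is zero. Hence $K_{g_\eps}(z)$, viewed as a function of $(\eps_1,\eps_2)$ near the origin, is constant in at least one of the two variables (the one indexing whichever support misses $z$), and therefore $\partial_{\eps_1}\partial_{\eps_2} K_{g_\eps}(z)|_{(\eps_1,\eps_2)=0} = 0$ at every point $z$.

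I expect the only place demanding genuine care to be the sign and convention bookkeeping in the first variation: distinguishing variations of $g_{\alpha\beta}$ from those of $g^{\alpha\beta}$, fixing the curvature sign convention, and pinning down the sign of $\Delta_g$ so that the final formula reads with $-\Delta_g \opn{tr}_g(f)$. The computation itself is standard and the two-dimensional reduction via $R_{\alpha\beta} = \tfrac{1}{2} K_g g_{\alpha\beta}$ is immediate. The second part is essentially obstruction-free once locality is invoked; the single point worth stating explicitly is that disjointness of supports guarantees that no point ever sees the $2$-jets of both perturbations at once.
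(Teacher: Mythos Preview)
Your proposal is correct and aligns with the paper's approach: the paper simply cites Appendix A of Tromba for the first-variation formula (which is exactly the Palatini-based computation you outline, reduced to two dimensions via $R_{\alpha\beta}=\tfrac12 K_g g_{\alpha\beta}$), and for the mixed second variation it invokes precisely the locality argument you give, namely that $K_g(z)$ depends only on the metric and its derivatives at $z$.
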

\begin{proof}
The first claim is proven in Appendix A of \cite{tromba}. The second claim follows directly from the fact that $K_g(z)$ depends only on the metric and its derivatives at the point $z$.
\end{proof}

The inner product \eqref{tangent_ip} leads to the following orthogonal decompositions of the tangent space
\begin{lemma} \label{tangent_decomp}
Let $g \in \met$.
\begin{enumerate}
\item The decomposition $f = f^{\opn{tf}} + \hf \opn{tr}_g(f) g$, where $f^{\opn{tf}} = f - \hf \opn{tr}_g(f)g$, is $L^2$-orthogonal. 
\item There is an $L^2$-orthogonal decomposition $\sym = \sect{}{d} \oplus \sect{}{m}$, where
\begin{align}
\sect{}{d} &= \{f \in T_g\met : f = \mathcal{L}_u g \text{ for some } u \in \Gamma(T^1\Sigma) \}\,, \\
\sect{}{m} &= \{f \in T_g\met: \nabla^\alpha f_{\alpha \beta} = 0 \}\,.
\end{align}
\end{enumerate}
\end{lemma}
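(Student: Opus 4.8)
The plan is to handle the two claims by different means: the first is a pointwise pairing identity special to dimension two, while the second is a form of the classical Berger--Ebin decomposition and reduces to an integration-by-parts adjunction combined with ellipticity.

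\textbf{Part (1).} It suffices to show that the trace-free part $f^{\opn{tf}}$ is $L^2$-orthogonal to $g$, since then it is orthogonal to $\hf \opn{tr}_g(f)\, g$. By the definition of the inner product,
\begin{align*}
(f^{\opn{tf}}, g)_g = \int_\Sigma (f^{\opn{tf}})^{\alpha\beta} g_{\alpha\beta}\, dv_g = \int_\Sigma \opn{tr}_g(f^{\opn{tf}})\, dv_g\,.
\end{align*}
Using $g^{\alpha\beta} g_{\alpha\beta} = 2$ in two dimensions, one computes $\opn{tr}_g(f^{\opn{tf}}) = \opn{tr}_g(f) - \hf \opn{tr}_g(f)\cdot 2 = 0$, so the integral vanishes. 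This is precisely why the coefficient $\hf$ (rather than the general $\tfrac1n$) makes the splitting orthogonal on a surface.

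\textbf{Part (2).} Let $L \colon \Gamma(T^1\Sigma) \to \sym$ be the Killing operator $Lu = \mathcal{L}_u g$, with components $(\mathcal{L}_u g)_{\alpha\beta} = \nabla_\alpha u_\beta + \nabla_\beta u_\alpha$, so that $\sect{}{d} = \opn{Image}(L)$. First I would establish the key adjunction: for a vector field $u$ and $f \in \sym$, using the symmetry of $f$ and integrating by parts on the closed surface $\Sigma$,
\begin{align*}
(\mathcal{L}_u g, f)_g = 2 \int_\Sigma (\nabla^\alpha u^\beta) f_{\alpha\beta}\, dv_g = -2 \int_\Sigma u^\beta\, \nabla^\alpha f_{\alpha\beta}\, dv_g\,.
\end{align*}
Hence the formal $L^2$-adjoint is $L^* = -2\,\opn{div}_g$, where $(\opn{div}_g f)_\beta = \nabla^\alpha f_{\alpha\beta}$, and therefore $\ker(L^*) = \{f : \nabla^\alpha f_{\alpha\beta} = 0\} = \sect{}{m}$. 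Orthogonality of the two subspaces is then immediate: if $f_1 = \mathcal{L}_u g$ and $f_2 \in \sect{}{m}$, the adjunction gives $(f_1, f_2)_g = -2(u, \opn{div}_g f_2)_g = 0$.

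It remains to show the two subspaces span all of $\sym$, i.e. that $\opn{Image}(L) = \ker(L^*)^\perp$. For this I would invoke ellipticity: the principal symbol of $L$ sends $u \mapsto \xi \odot u^\flat$, and this is injective for every $\xi \neq 0$ (from $\xi_\alpha u_\beta + \xi_\beta u_\alpha = 0$, contracting twice with $\xi$ forces $\xi\cdot u = 0$ and then $u = 0$), so $L$ is overdetermined elliptic. The standard theory of elliptic operators on a closed manifold then guarantees that $\opn{Image}(L)$ is closed and equals $\ker(L^*)^\perp$; equivalently, one may observe that $L^* L$ is a self-adjoint second-order elliptic operator with $\ker(L^*L) = \ker L$, apply Hodge theory, and read off the decomposition. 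Either route yields $\sym = \sect{}{d} \oplus \sect{}{m}$.

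\textbf{Main obstacle.} The orthogonality and the identification of the summands are elementary consequences of integration by parts; the only genuine analytic input is the \emph{closedness of the range} of the Killing operator $L$, which is where ellipticity is essential. This is the step requiring care, since orthogonality by itself would not produce a direct-sum decomposition of the whole space without it.
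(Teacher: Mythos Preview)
Your argument is correct and is precisely the standard Berger--Ebin decomposition proof; the paper itself does not give a proof but simply cites Lemma 1.2.1 and Theorem 1.4.2 in Tromba's \emph{Teichm\"uller Theory in Riemannian Geometry}, where essentially this same argument appears. In that sense you have supplied what the paper outsources.
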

\begin{proof}
Lemma 1.2.1 and Theorem 1.4.2 in \cite{tromba}.
\end{proof}
The tensors $f \in \sect{tf}{d}$, thought of as tangent vectors to $\met$, generate perturbations of the metric $g$ that are given by diffeomorphisms, meaning that $g+\eps f = \psi_\eps^* g + \mathcal{O}(\eps^2)$ for some $\psi_\eps \in \dif$. The tangent vectors $f \in \sect{tf}{m}$ generate perturbations of the metric that deform the conformal structure of $\Sigma$, i.e. change the moduli of the surface. The final degree of freedom is the trace part of $f \in \sect{}{}$. These are of the form $f = \varphi g$, where $\varphi \in C^\infty(\Sigma)$, and they generate Weyl transformations of the metric $g+\eps f = e^{\eps \varphi} g + \mathcal{O}(\eps^2)$.

\begin{definition}\label{transv1}
We denote by $T_g^{\opn{tt}}\met$ the subspace of $T_g \met$ consisting of vectors that are orthogonal to the actions of $\dif$ and $C^\infty(\Sigma)$ on $\met$, that is, $f \in T_g^{\opn{tt}}\met$ if $f$ is orthogonal to the tangent spaces $T_g(\dif \cdot g)$ and $T_g(C^\infty(\Sigma) \cdot g)$. Such vectors are called transverse traceless.
\end{definition}

\begin{corollary}\label{transv2}
$\trtr$ can be identified with $\sect{tf}{m}$.
\end{corollary}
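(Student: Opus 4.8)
The plan is to identify the two group-orbit tangent spaces through $g$ and then read off $\trtr$ as the intersection of their $L^2$-orthogonal complements, using the two decompositions already provided by Lemma \ref{tangent_decomp}. First I would compute the tangent space to the $\dif$-orbit. Since the action is by pullback, a flow $(\psi_\eps)$ generated by a vector field $u$ gives $\partial_\eps|_0 \psi_\eps^* g = \mathcal{L}_u g$, so that
\begin{align*}
T_g(\dif \cdot g) = \{\mathcal{L}_u g : u \in \Gamma(T^1\Sigma)\} = \sect{}{d}\,.
\end{align*}
Similarly, the Weyl action $\varphi \cdot g = e^\varphi g$ gives $\partial_\eps|_0 e^{\eps \varphi} g = \varphi g$, so $T_g(C^\infty(\Sigma) \cdot g) = \{\varphi g : \varphi \in C^\infty(\Sigma)\}$ is precisely the space of pure-trace tensors.

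Next I would compute the two orthogonal complements. Part (2) of Lemma \ref{tangent_decomp} directly identifies $(\sect{}{d})^\perp = \sect{}{m}$, the divergence-free symmetric tensors. For the Weyl part I would compute the pairing directly,
\begin{align*}
(f, \varphi g)_g = \int_\Sigma f^{\alpha \beta} \varphi\, g_{\alpha \beta}\, dv_g = \int_\Sigma \varphi\, \opn{tr}_g(f)\, dv_g\,,
\end{align*}
which vanishes for every $\varphi \in C^\infty(\Sigma)$ exactly when $\opn{tr}_g(f) = 0$; by part (1) of Lemma \ref{tangent_decomp} this says $(\{\varphi g\})^\perp = \st$, the trace-free tensors. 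Using the elementary identity $(A + B)^\perp = A^\perp \cap B^\perp$ together with Definition \ref{transv1}, I would then conclude
\begin{align*}
\trtr = \big(T_g(\dif \cdot g)\big)^\perp \cap \big(T_g(C^\infty(\Sigma) \cdot g)\big)^\perp = \sect{}{m} \cap \st = \sect{tf}{m}\,,
\end{align*}
the last equality being the definition of $\sect{tf}{m}$ as the trace-free, divergence-free symmetric tensors.

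I do not expect any genuine obstacle here: there is no hard analysis, and everything reduces to the orthogonal splittings established in Lemma \ref{tangent_decomp}. The only point requiring mild care is that the two decompositions are genuinely distinct — a pure-trace tensor $\varphi g$ is not divergence-free in general, since $\nabla^\alpha (\varphi g_{\alpha \beta}) = \nabla_\beta \varphi$ — so one should not conflate the splitting $\sym = \sect{}{d} \oplus \sect{}{m}$ with the trace splitting. This causes no difficulty because we only ever intersect the two complements, and the identity $(A+B)^\perp = A^\perp \cap B^\perp$ handles the interplay cleanly, yielding the claimed identification.
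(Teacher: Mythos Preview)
Your proof is correct and follows essentially the same approach as the paper: identify the tangent spaces to the two orbits as the Lie-derivative tensors and the pure-trace tensors, then take orthogonal complements using Lemma~\ref{tangent_decomp} to obtain the divergence-free and trace-free conditions respectively. The paper's proof is just a terser version of your argument; your explicit invocation of $(A+B)^\perp = A^\perp \cap B^\perp$ and the closing caveat about the two splittings being distinct are harmless elaborations, not a different route.
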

\begin{proof}
	Let $f \in \met$. A tangent vector $f \in T_g \met$ is tangent to the orbit $\dif \cdot g$ if 
	\begin{align*}
	f_{\alpha \beta} = (\mathcal{L}_v g)_{\alpha \beta} = \nabla_\alpha v_\beta + \nabla_\beta v_\alpha 
	\end{align*}
	for some smooth vector field $v$ on $\Sigma$, and tangent to the orbit of $C^\infty(\Sigma) \cdot g$ if 
	\begin{align*}
	f_{\alpha \beta} = \varphi g_{\alpha \beta}
	\end{align*}
	for some $\varphi \in C^\infty(\Sigma)$. When we require $f$ to be orthogonal to both of these actions, we get that $f$ has to be traceless with $\nabla^\alpha f_{\alpha \beta} = 0$.
\end{proof}
\begin{remark}\label{rm1}
	For a given $f \in \sect{tf}{}$, the vector field $v$ in the decomposition $f = \mathcal{L}_v g - \hf \opn{tr}(\mathcal{L}_v g)g + f_m$, where $f_m \in \sect{tf}{m}$, is unique up to a vector field $w$ satisfying $\mathcal{L}_w g - \hf \opn{tr}(\mathcal{L}_w g)g = 0$ i.e. a conformal Killing field. Such vector fields do not exist on compact surfaces with $\gen \geq 2$.

	The solution space of the equation $\nabla^\alpha f_{\alpha \beta} = 0$ has complex dimension $3 \gen -3$. It can be identified with the space of holomorphic quadratic differentials on $\Sigma$, i.e. the dual space of the space of harmonic Beltrami differentials on $\Sigma$. This can be identified with the Teichmüller space of $\Sigma$ and with the tangent space of the moduli space of $\Sigma$ at regular (non-orbifold) points.

\end{remark}

\subsection{Conformal Killing operator}\label{conformal_killing}

We define the \emph{conformal Killing operator} $P_g: \Gamma(T_1\Sigma) \to \sect{tf}{}$ by
\begin{align*}
P_g \omega &= 2 S\nabla \omega -  \opn{tr}_g(S\nabla \omega)g = 2\mathcal{L}_{\omega^\sharp}g -  \opn{tr}_g(\mathcal{L}_{\omega^\sharp}g)g\,,
\end{align*}
where $S$ denotes symmetrization and $\omega^\sharp$ is the vector field obtained from $\omega$ by raising the index. $P_g \omega=0$ implies that $\omega^\sharp$ is a conformal Killing field, and thus $P_g$ is injective if $\gen \geq 2$. The adjoint $P_g^*: \sect{tf}{} \to \Gamma(T_1\Sigma)$ is given by
\begin{align*}
(P_g^* f)_\alpha &= -  \nabla^\beta f_{\alpha \beta}\,,
\end{align*}
i.e. $P_g^*f = -\opn{tr}_g(\nabla f)$. By Remark \ref{rm1}, we have $\dim_\C \big( \opn{ran}(P_g)^\perp \big) = \dim_\C \big(\opn{ker}(P_g^*)\big)  = 3\gen-3$, and $\opn{ran}(P_g)= \sect{tf}{d}$. $P_g$ induces an operator $ P_g^\sharp: \Gamma(T^1 \Sigma) \to \Gamma^{\opn{tf}}(ST^2 \Sigma)$ by raising indices. In conformal coordinates
\begin{align*}
(P_g^\sharp u)^{zz} &=  2 \nabla^z u^z \,.
\end{align*}
\begin{remark}
Let $t=t_{z\hdots z}dz^n$ be an $n$-tensor. In conformal coordinates $g= e^\sigma |dz|^2$, and the Christoffel symbols are given by
\begin{align*}
\Gamma^z_{zz} &= \partial_z \sigma\,, \\
\Gamma^z_{z \bar z} &= \Gamma^z_{\bar z z} =  \Gamma^{z}_{\bar z \bar z} = 0\,.
\end{align*}
It follows that
\begin{align*}
\nabla^z t_{z\hdots z} &= g^{z \bar z} \partial_{\bar z} t_{z \hdots z}\,, \\
\nabla_z t_{z\hdots z} &= (\partial_z - n \partial_z \sigma) t_{z\hdots z}\,.
\end{align*}
For $t = t^{z\hdots z} \partial_z^{n}$ we have $\nabla_z t^{z\hdots z} = (\partial_z + n \partial_z \sigma) t^{z\hdots z}$.
\end{remark}
We denote the inverse of $P_g^\sharp$ by $\mathcal{G}: \Gamma^{\opn{tf}}_d(ST^2 \Sigma) \to \Gamma(T^1 \Sigma)$ and we extend it to all of $\Gamma^{\opn{tf}}(ST^2 \Sigma)$ as zero $\mathcal{G}|_{\Gamma^{\opn{tf}}_m(ST^2\Sigma)} := 0$. This implies that for any $f \in \sectt{tf}{}$
\begin{align}
P_g^\sharp \mathcal{G} f &= f-f_m = f_d \,,
\end{align}
where we used the orthogonal decomposition $f=f_{d}+f_m$ induced from the decomposition in (2) of Lemma \ref{tangent_decomp} (where the indices are down). We fix an orthonormal basis $\{\eta_1,\hdots,\eta_{3\gen-3}\}$ of $\sect{tf}{m}$, i.e. 
\begin{align*}
(\eta_i, \eta_j)_g &= \delta_{ij}\,. 
\end{align*}
We denote the integral kernel of $\mathcal{G}$ by $\mathcal{G}^z_{ww}(z,w)$. From the pair of equations
\begin{align*}
f_{d}^{zz} &= f^{zz} - \sum_{k=1}^{3\gen-3} (\eta_k,f)_g \eta_k^{zz}\,, \\
2\nabla^z (\mathcal{G} f)^z(z) &=  2\int_\Sigma f^{ww}(w) \nabla^z \mathcal{G}^z_{ww}(z,w) dv_g(w) = f_d^{zz} \,,
\end{align*}
it follows that
\begin{align}\label{G}
2\nabla^z \mathcal{G}^z_{ww}(z,w) &= \delta_g(z,w)  - \sum_{k=1}^{3\gen-3}  \eta_k^{zz}(z) \eta_{k,ww}(w)\,,
\end{align}
where $\delta_g$ is the Dirac delta distribution on $(\Sigma,g)$. Existence of local solutions to the above equation follows from the theory of the Cauchy transform and the existence of global solutions follows from the fact that the above equation is set up in the way that once one sums together the local solutions the result is orthogonal to the $\eta_k$'s, implying that the sum gives the integral kernel of the inverse of $P_g^\sharp$. Uniqueness follows from the non-existence of conformal Killing fields.

In local conformal coordinates we have
\begin{align*}
\mathcal{G}^z_{ww}(z,w) &= \frac{1}{4\pi} \frac{1}{z-w} + \mathcal{R}(z,w)\,,
\end{align*}
where $\mathcal{R}$ is smooth in $z$ and $w$.

\section{Liouville Conformal Field Theory}\label{section_lcft}

In this Section we summarize basic properties of the Gaussian Free Field (GFF), Gaussian Multiplicative Chaos (GMC) and Liouville Conformal Field Theory (LCFT) on compact Riemann surfaces. None of the results presented in this Section are new. We follow the conventions used in \cite{GRV16}.

\subsection{Gaussian Free Field}

Let $\Sigma$ be an orientable compact surface with genus $\gen \geq 2$ and $g$ a smooth Riemannian metric. The Laplace--Beltrami operator $\Delta_g$ is a self-adjoint operator on $L^2(\Sigma,dv_g)$ with a complete set of orthonormal eigenfunctions $e_{g,n}$,
\begin{align*}
-\Delta_g e_{g,n} &= \lambda_{g,n} e_{g,n}\,, \quad n = 0,1,\hdots\,.
\end{align*}
It holds that $\lambda_{g,n}>0$ for $n>0$ and $\lambda_{g,0}=0$ with $e_{g,0}$ the constant function.

Let $(a_n)_{n \in \N}$ be a sequence of i.i.d. standard Gaussian random variables. The Gaussian Free Field $X_g$ on the surface $(\Sigma,g)$ is defined as the random series
\begin{align}\label{Xseries}
X_g &= \sqrt{2 \pi} \sum_{n=1}^\infty a_n \frac{e_{g,n}}{\sqrt{\lambda_{g,n}}}\,.
\end{align}
The series converges almost surely in the Sobolev space $H^{-s}(\Sigma,dv_g)$ for any $s>0$, see \cite{Dub}. Thus, for any $f \in H^s(\Sigma,dv_g)$, the dual brackets $(X_g,f)_g$ are Gaussian random variables. The covariance of $X_g$ is given by the Green function $G_g$ of $\Delta_g$
\begin{align*}
\E[(X_g,f_1)_g (X_g,f_2)_g] &= \int_{\Sigma^2} G_g(x,y) f_1(x) f_2(y) dv_g(x,y)\,.
\end{align*}
Above $G_g$ is the zero-mean Green function, because we did not include the $n=0$ eigenfunction in the series \eqref{Xseries}, which has the effect of fixing the mean of $X_g$ to be $0$ almost surely.

\subsection{Gaussian Multiplicative Chaos}

The exponential potential in the Liouville action leads us to consider the measure
\begin{align*}
e^{-\mu \int e^{\gamma X_g} dv_g} d\mu_{\opn{GFF}}(X_g)\,,
\end{align*}
where $\mu_{\opn{GFF}}$ is the probability distribution of the GFF. Since $\mu_{\opn{GFF}}$ is supported on a negative order Sobolev space, the definition of $e^{\gamma X_g}$ requires a renormalization The theory of Gaussian Multiplicative Chaos is a general framework for constructing exponentials of $\log$-correlated Gaussian fields, and thus fits into our setting.

Let $\nug{g}{z}$ be the uniform probability measure on the geodesic circle $\partial B_g(z,\delta)$ of radius $\delta > 0 $ at $z \in \Sigma$. We define the circle average regularization of the GFF by
\begin{equation}
X_{g,\delta}(z) := \int X_g(z') d \nug{g}{z}(z')\,.
\end{equation}
The above integral is only formal, since $\nug{g}{z} \notin H^s(\Sigma,dv_g)$ for $s>0$. The precise definition of the above integral can be found in Section 3.2. of \cite{GRV16}.

To define the measure $e^{\gamma X_g} dv_g$, we first introduce the approximations
\begin{align*}
\mathbf{G}^\gamma_{g,\delta}(dx) &:= \delta^{\frac{\gamma^2}{2}} e^{\gamma X_{g,\delta}(x)} dv_g(x)\,.
\end{align*}
For $\gamma \in (0,2)$ the limit
\begin{align*}
\mathbf{G}^\gamma_g &:= \lim_{\delta \to 0} \mathbf{G}^\gamma_{g,\delta}
\end{align*}
exists (and is non-trivial) in the sense of weak convergence of measures almost surely, see \cite{Ber, GRV16}.

\subsection{Liouville expectation}

We define the Liouville expectation of a continuous bounded functional $F: H^{-s}(\Sigma,dv_g) \to \R$, $s>0$, by
\begin{align*}
\langle F \rangle_g &= Z_{\opn{GFF}}(\Sigma,g) \int_\R \E \big[ F(c+X_g) e^{- \frac{Q}{4\pi} \int_\Sigma(c+X_g)K_g dv_g - \mu e^{\gamma c} \mathbf{G}^\gamma_g(\Sigma)} \big] dc\,,
\end{align*}
where $\mu >0$, $\gamma \in (0,2)$, $Q = \frac{2}{\gamma} + \frac{\gamma}{2}$ and $Z_{\opn{GFF}}(\Sigma,g)$ is the so-called partition function of the GFF, defined as 
\begin{align*}
e^{- \hf \zeta'_{\Sigma,g}(0)} v_g(\Sigma)^{- \hf}\,,
\end{align*}
where for $s$ with real part smaller than $-1$
\begin{align*}
\zeta_{\Sigma,g}(s) &= \sum_{n=1}^\infty \lambda_{g,n}^s\,,
\end{align*}
and $\zeta_{\Sigma,g}'(0)$ is defined by analytic continuation, see \cite{Sarnak}.

The next Proposition summarizes the behaviour of the Liouville measure when the background metric is changed by a diffeomorphism or a Weyl transformation. Proof can be found in \cite{GRV16}, Propositions 4.2 and 4.3.
\begin{proposition}\label{F_diffeo_weyl}
Suppose $F: H^{-s}(\Sigma, dv_g) \to \R$ is such that $\langle |F| \rangle_g < \infty$. Then for any $\psi \in \dif$ we have the diffeomorphism covariance
\begin{align*}
\langle F \rangle_{\psi^*g} &= \langle \psi_* F \rangle_g\,,
\end{align*}
where $(\psi_* F)(X) := F(X \circ \psi)$, and for any $\varphi \in C^\infty(\Sigma)$ we have the Weyl anomaly
\begin{align*}
\langle F \rangle_{e^\varphi g} &= e^{c A(\varphi,g)} \langle F(\cdot - \tfrac{Q}{2} \varphi) \rangle_g\,,
\end{align*}
where $A$ is given by \eqref{anomaly}.
\end{proposition}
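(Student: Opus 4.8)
The plan is to prove both identities by tracking how each of the four building blocks of the Liouville expectation $\langle\cdot\rangle_g$ — the Gaussian Free Field $X_g$, the partition function $Z_{\opn{GFF}}(\Sigma,g)$, the curvature coupling $e^{-\frac{Q}{4\pi}\int(c+X_g)K_g\,dv_g}$, and the Gaussian Multiplicative Chaos $\mathbf{G}^\gamma_g$ — transforms under $g\mapsto\psi^*g$ and $g\mapsto e^\varphi g$. Diffeomorphism covariance is the soft part, so I would treat it first. Since $\psi:(\Sigma,\psi^*g)\to(\Sigma,g)$ is an isometry, all spectral and geometric data is simply pulled back: an isometry commutes with the Laplacian, so $\lambda_{\psi^*g,n}=\lambda_{g,n}$ and $e_{\psi^*g,n}=e_{g,n}\circ\psi$, whence the series \eqref{Xseries} gives $X_{\psi^*g}\eqd X_g\circ\psi$; the partition function depends only on the spectrum and on $v_g(\Sigma)$, both invariant, so $Z_{\opn{GFF}}(\Sigma,\psi^*g)=Z_{\opn{GFF}}(\Sigma,g)$; one has $K_{\psi^*g}=K_g\circ\psi$, $dv_{\psi^*g}=\psi^*dv_g$, and geodesic balls pull back to geodesic balls, so the circle average is natural and $\mathbf{G}^\gamma_{\psi^*g}(\Sigma)\eqd\mathbf{G}^\gamma_g(\Sigma)$ under the coupling $X_{\psi^*g}=X_g\circ\psi$. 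Substituting these into $\langle F\rangle_{\psi^*g}$ and applying the change of variables $y=\psi(x)$ in the curvature integral turns every $X_{\psi^*g}$ into $X_g\circ\psi$; since $F(c+X_g\circ\psi)=(\psi_*F)(c+X_g)$ by definition of $\psi_*$, we obtain $\langle F\rangle_{\psi^*g}=\langle\psi_*F\rangle_g$.

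For the Weyl anomaly I would assemble four transformation rules. First, the conformal covariance $\Delta_{e^\varphi g}=e^{-\varphi}\Delta_g$ determines the change of the Green function and yields $X_{e^\varphi g}\eqd X_g-\bar X_\varphi$, where $\bar X_\varphi:=v_{e^\varphi g}(\Sigma)^{-1}\int_\Sigma X_g\,dv_{e^\varphi g}$ is the random constant that restores the correct zero-average condition. Second, the Polyakov–Osgood–Phillips–Sarnak anomaly formula for $\zeta'_{\Sigma,g}(0)$ and $v_g(\Sigma)$ gives $Z_{\opn{GFF}}(\Sigma,e^\varphi g)=e^{A(\varphi,g)}Z_{\opn{GFF}}(\Sigma,g)$, the free-field ($c=1$) part of the anomaly. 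Third, the Weyl law of the scalar curvature gives $K_{e^\varphi g}\,dv_{e^\varphi g}=(K_g-\Delta_g\varphi)\,dv_g$, with $\Delta_g$ the Laplace–Beltrami of the paper. Fourth, because the $e^\varphi g$-geodesic radius rescales the $g$-radius by $e^{\varphi/2}$, the renormalization factor $\delta^{\gamma^2/2}$ produces a conformal weight and $\mathbf{G}^\gamma_{e^\varphi g}(dx)\eqd e^{-\gamma\bar X_\varphi}\,e^{\gamma\frac{Q}{2}\varphi(x)}\,\mathbf{G}^\gamma_g(dx)$, with $\gamma Q/2=1+\gamma^2/4$.

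With these in hand the computation closes as follows. Inserting the four rules into $\langle F\rangle_{e^\varphi g}$, the constant $\bar X_\varphi$ appears only through $c+X_g-\bar X_\varphi$ and $e^{\gamma(c-\bar X_\varphi)}$, so the translation $c\mapsto c+\bar X_\varphi$ (Lebesgue invariance of $dc$) eliminates it and leaves the argument $c+X_g$, a curvature term $-\frac{Q}{4\pi}\int(c+X_g)(K_g-\Delta_g\varphi)\,dv_g$, and a weighted chaos $\mu e^{\gamma c}\int e^{\gamma\frac{Q}{2}\varphi}\mathbf{G}^\gamma_g(dx)$, all multiplied by $e^{A(\varphi,g)}$. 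A Cameron–Martin shift $X_g\mapsto X_g-\frac{Q}{2}\varphi$ (licit since $\varphi\in C^\infty\subset H^1$ lies in the Cameron–Martin space) simultaneously converts the weighted chaos back to $\mathbf{G}^\gamma_g(\Sigma)$, the modified curvature coupling to the plain $-\frac{Q}{4\pi}\int(c+X_g)K_g\,dv_g$, and the argument of $F$ to $c+X_g-\frac{Q}{2}\varphi$. The random term $\int X_g\Delta_g\varphi\,dv_g$ cancels between the curvature correction and the Radon–Nikodym density, while the deterministic remainders $\frac{Q^2}{8\pi}\int\varphi K_g\,dv_g$ and $\frac{Q^2}{16\pi}\int|d\varphi|_g^2\,dv_g$ assemble precisely into $e^{6Q^2A(\varphi,g)}$; combined with the $e^{A}$ from $Z_{\opn{GFF}}$ this gives $e^{(1+6Q^2)A(\varphi,g)}=e^{cA(\varphi,g)}$, which is the claim.

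The main obstacle is the fourth rule, the Weyl rescaling of the chaos: making precise that the circle-average regularization transforms with conformal weight $\gamma Q/2$ requires controlling the renormalized limit as $\delta\to0$ under a change of the background metric, using the uniform integrability and $L^2$-estimates for the approximations $\mathbf{G}^\gamma_{g,\delta}$. The Polyakov anomaly for $Z_{\opn{GFF}}$ is a second nontrivial input, being a genuine spectral computation. Finally, the hypothesis $\langle|F|\rangle_g<\infty$ is exactly what is needed to justify the change of variables, the $c$-translation, and the Cameron–Martin shift inside the expectation by dominated convergence.
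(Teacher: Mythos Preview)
The paper does not give its own proof of this proposition; it simply cites \cite{GRV16}, Propositions 4.2 and 4.3. Your sketch is precisely the argument carried out there: track the four building blocks (GFF law, $Z_{\opn{GFF}}$, curvature coupling, GMC) under pullback and Weyl scaling, then use the $c$-translation together with a Girsanov/Cameron--Martin shift to reassemble the Liouville expectation, with the deterministic leftovers producing $6Q^2 A(\varphi,g)$ on top of the free-field anomaly $A(\varphi,g)$. So your approach is correct and coincides with the referenced proof.

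One small technical point worth tightening: the Cameron--Martin space of the zero-mean GFF consists of zero-mean $H^1$ functions, so the shift $X_g\mapsto X_g-\tfrac{Q}{2}\varphi$ should really be by the zero-mean part $\varphi-m_g(\varphi)$, with the constant $\tfrac{Q}{2}m_g(\varphi)$ absorbed into a second translation of $c$. This does not affect the bookkeeping of the anomaly constants, since the extra $c$-shift only touches the Gauss--Bonnet term $\int K_g\,dv_g$, which is topological.
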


\subsection{Liouville correlation functions}

In this Section we introduce the correlation functions of the primary fields of LCFT. Formally the primary fields are the exponentials
\begin{align*}
\Vg{g}{\alpha}(x) = e^{\alpha(c+X_g(x))}\,.
\end{align*}
As we learned in the discussion on GMC, to define such exponents, a renormalization is required. Thus, we begin by defining a regularized version of the primary field, and then passing to a limit. The regularized primary field at $x \in \Sigma$ with weight $\alpha \in (-\infty,Q)$ is defined by
\begin{align}\label{vgd}
\Vgd{g}{\alpha}(x) =  \delta^{\frac{\alpha^2}{2}} e^{\alpha (c+X_{g,\delta}(x) )}\,.
\end{align}
Since the regularization of the GFF depends on the metric, we will have to take into account the derivatives $\partial_\eps \Vgd{g_\eps}{\alpha}(x)$ when we differentiate the LCFT correlation functions with respect to the metric in later sections. The regularized correlation function converges to a non-trivial limit, which was shown in \cite{GRV16}.

\begin{proposition}
The correlation functions
\begin{align*}
\langle \prod_j \Vg{g}{\alpha_j}(x_j) \rangle_g &:= \lim_{\delta \to 0} \langle \prod_j \Vgd{g}{\alpha_j}(x_j) \rangle_{g}
\end{align*}
exist and are non-zero if and only if the weights $\alpha_i$ satisfy the Seiberg bounds
\begin{align*}
\sum_j \alpha_j > Q \chi(\Sigma)\,, \\
\alpha_j < Q \quad \forall j\,,
\end{align*}
where $\chi(\Sigma)$ is the Euler characteristic of the surface $\Sigma$. Furthermore, the correlation functions satisfy the diffeomorphism covariance \eqref{intro_diffeo_weyl} and the Weyl anomaly \eqref{intro_weyl}.
\end{proposition}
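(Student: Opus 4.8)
The plan is to reduce the correlation function to an explicit integral over the constant mode $c$ of a negative moment of a Gaussian Multiplicative Chaos, whose convergence and positivity are governed exactly by the two Seiberg bounds. First I would evaluate the regularized correlation function at fixed $\delta$ by inserting the definition of the Liouville expectation and applying the Cameron--Martin (Girsanov) theorem to the Gaussian factor $\prod_j e^{\alpha_j X_{g,\delta}(x_j)}$. This shifts the field $X_g \mapsto X_g + \sum_j \alpha_j G_{g,\delta}(x_j,\cdot)$ and produces a deterministic prefactor $\exp\big(\hf \sum_{j,k}\alpha_j\alpha_k \E[X_{g,\delta}(x_j)X_{g,\delta}(x_k)]\big)$. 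The diagonal terms $j=k$ diverge like $\delta^{-\alpha_j^2/2}$ and are cancelled precisely by the renormalizing factor $\delta^{\alpha_j^2/2}$ built into $\Vgd{g}{\alpha_j}$, while the off-diagonal terms converge to $\prod_{j<k}e^{\alpha_j\alpha_k G_g(x_j,x_k)}$ as $\delta\to 0$. The field shift turns the chaos mass $\mathbf{G}^\gamma_g(\Sigma)$ into the weighted mass $M_\delta := \int_\Sigma \prod_j e^{\gamma\alpha_j G_{g,\delta}(x_j,x)}\, d\mathbf{G}^\gamma_g(x)$ and contributes the curvature factor $\exp\big(-\frac{Q}{4\pi}\sum_j\alpha_j\int_\Sigma G_g(x_j,\cdot)K_g\,dv_g\big)$.

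Next I collect the $c$-dependence. The vertex operators contribute $e^{(\sum_j\alpha_j)c}$, and the curvature term contributes $e^{-\frac{Q}{4\pi}c\int K_g\,dv_g} = e^{-Q\chi(\Sigma)c}$ by Gauss--Bonnet, so the linear coefficient of $c$ is $s:=\sum_j\alpha_j - Q\chi(\Sigma)$. Substituting $u=e^{\gamma c}$ turns the $c$-integral into $\frac{1}{\gamma}\E\big[\int_0^\infty u^{s/\gamma-1}e^{-\mu u M}\,du\big]$, where $M=\lim_{\delta\to 0}M_\delta$. The inner integral equals $\Gamma(s/\gamma)(\mu M)^{-s/\gamma}$ whenever $s>0$ and diverges at $u=0$ when $s\le 0$; this yields the first Seiberg bound $\sum_j\alpha_j>Q\chi(\Sigma)$ as the condition for finiteness of the limit. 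It then remains to show $\E[M^{-s/\gamma}]\in(0,\infty)$; positivity of $M$ is immediate since the chaos charges every open set, so the only real issue is finiteness.

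The main obstacle is the multifractal analysis of $M$ near the insertion points. Near $x_j$ the Green function has the logarithmic singularity $G_g(x_j,x)\sim -\log|x-x_j|$, so the weight behaves like $|x-x_j|^{-\gamma\alpha_j}$; combining this with the almost-sure scaling $\mathbf{G}^\gamma_g(B(x_j,r))=r^{\gamma Q+o(1)}$ (with $\gamma Q=2+\gamma^2/2$) and summing over dyadic annuli shows that the integral defining $M$ converges almost surely if and only if $\gamma\alpha_j<\gamma Q$, i.e. $\alpha_j<Q$ for every $j$ --- the second Seiberg bound. When some $\alpha_j\ge Q$ one has $M=\infty$ almost surely, hence $e^{-\mu e^{\gamma c}M}=0$ and the correlation vanishes; when all $\alpha_j<Q$ the negative moment $\E[M^{-s/\gamma}]$ is finite because negative moments of GMC over regions of positive measure are finite of all orders (Kahane convexity and concentration estimates), giving a strictly positive finite limit. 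Throughout I would justify the interchange of $\lim_{\delta\to0}$, $\E$ and $\int dc$ by monotone and dominated convergence, using the almost-sure convergence $M_\delta\to M$ supplied by the GMC construction.

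Finally, the covariance properties follow from Proposition \ref{F_diffeo_weyl} applied to bounded truncations of $F=\prod_j \Vgd{g}{\alpha_j}(x_j)$, together with the transformation of the regularization. For $\psi\in\dif$, since $\psi$ is an isometry from $(\Sigma,\psi^*g)$ to $(\Sigma,g)$ its geodesic circles of radius $\delta$ map to geodesic circles of radius $\delta$, so $X_{\psi^*g,\delta}(x)=X_{g,\delta}(\psi(x))\circ\psi$ in law; Proposition \ref{F_diffeo_weyl} then gives \eqref{intro_diffeo_weyl} in the limit. For a Weyl scaling $e^\varphi g$, Proposition \ref{F_diffeo_weyl} produces the factor $e^{cA(\varphi,g)}$ with $A$ as in \eqref{anomaly} and the field shift $X\mapsto X-\tfrac{Q}{2}\varphi$, which multiplies $\Vg{g}{\alpha_j}(x_j)$ by $e^{-\frac{Q\alpha_j}{2}\varphi(x_j)}$; the change of geodesic radius under $g\mapsto e^\varphi g$ rescales $\delta\mapsto e^{-\varphi(x_j)/2}\delta$ and hence, through the factor $\delta^{\alpha_j^2/2}$, contributes an additional $e^{\frac{\alpha_j^2}{4}\varphi(x_j)}$. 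Adding the two exponents gives $e^{-\Delta_{\alpha_j}\varphi(x_j)}$ with $\Delta_{\alpha_j}=\frac{\alpha_j}{2}(Q-\frac{\alpha_j}{2})$, which is precisely the Weyl anomaly \eqref{intro_weyl}.
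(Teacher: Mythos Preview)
Your proposal is correct and follows the standard argument; note, however, that the paper does not give its own proof of this proposition but simply cites \cite{GRV16} (and the analogous earlier result on the sphere in \cite{DKRV}). The approach you outline---Girsanov shift, Gauss--Bonnet to extract the linear $c$-coefficient $s=\sum_j\alpha_j-Q\chi(\Sigma)$, reduction to $\Gamma(s/\gamma)\,\E[M^{-s/\gamma}]$, and then the local multifractal analysis near the insertions to obtain the individual bounds $\alpha_j<Q$---is exactly the scheme used in those references, so there is nothing to compare beyond saying your sketch matches the cited proof.
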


Primary fields with weight $\alpha = \gamma$ admit a special integral identity, called the KPZ-identity. This was first proven in Lemma 3.3. of \cite{ward} for $\Sigma = \mathbb{S}^2$, but the proof generalizes as such to arbitrary compact Riemann surfaces.
\begin{lemma}\label{kpz} (KPZ-identity)
For all $\delta \geq 0$
\begin{align*}
\int_\Sigma \langle \Vgd{g}{\gamma}(z) \prod_j \Vg{g}{\alpha_j}(x_j) \rangle_{g,\delta} dv_g(z) &= \frac{\sum_j \alpha_j -  \chi(\Sigma)Q}{\mu \gamma} \langle \prod_j \Vgd{g}{\alpha_j}(x_j) \rangle_{g,\delta}\,.
\end{align*}
\end{lemma}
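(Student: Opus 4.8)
The plan is to unfold the definition of the regularized Liouville expectation $\langle \cdot \rangle_{g,\delta}$ and to isolate the dependence on the constant mode $c$. Writing each field as $\Vgd{g}{\alpha_j}(x_j) = \delta^{\alpha_j^2/2} e^{\alpha_j c} e^{\alpha_j X_{g,\delta}(x_j)}$ and using the Gauss--Bonnet identity $\frac{Q}{4\pi}\int_\Sigma K_g\, dv_g = Q\chi(\Sigma)$, the dependence on $c$ of the integrand defining $\langle \prod_j \Vgd{g}{\alpha_j}(x_j) \rangle_{g,\delta}$ factors through $e^{(\sum_j \alpha_j - Q\chi(\Sigma))c}\, e^{-\mu e^{\gamma c} M_\delta}$, where $M_\delta := \mathbf{G}^\gamma_{g,\delta}(\Sigma) = \int_\Sigma \delta^{\gamma^2/2} e^{\gamma X_{g,\delta}(z)}\, dv_g(z)$ is the total mass of the regularized GMC measure, a strictly positive random variable independent of $c$. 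The decisive observation is that inserting the extra field $\Vgd{g}{\gamma}(z)$ and integrating over $z \in \Sigma$ produces precisely $\int_\Sigma \Vgd{g}{\gamma}(z)\, dv_g(z) = e^{\gamma c} M_\delta$, that is, $e^{\gamma c}$ times exactly the mass $M_\delta$ already present in the interaction term.

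The heart of the argument is then an integration by parts in $c$. Put $\beta := \sum_j \alpha_j - Q\chi(\Sigma)$. Since $\partial_c\, e^{-\mu e^{\gamma c}M_\delta} = -\mu\gamma\, e^{\gamma c} M_\delta\, e^{-\mu e^{\gamma c}M_\delta}$, the factor supplied by the integrated insertion rewrites as $e^{\gamma c} M_\delta\, e^{-\mu e^{\gamma c}M_\delta} = -\frac{1}{\mu\gamma}\, \partial_c\, e^{-\mu e^{\gamma c}M_\delta}$. Integrating against $e^{\beta c}$ and assuming the boundary terms drop, one obtains $\int_\R e^{(\beta+\gamma) c} M_\delta\, e^{-\mu e^{\gamma c}M_\delta}\, dc = \frac{\beta}{\mu\gamma}\int_\R e^{\beta c}\, e^{-\mu e^{\gamma c}M_\delta}\, dc$. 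Taking the expectation over the GFF and multiplying by $Z_{\opn{GFF}}(\Sigma,g)$ (both of which commute with the $c$-manipulations) then yields $\frac{\beta}{\mu\gamma}\langle \prod_j \Vgd{g}{\alpha_j}(x_j)\rangle_{g,\delta}$, which is the asserted right-hand side because $\beta = \sum_j \alpha_j - \chi(\Sigma)Q$.

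It remains to justify the two steps that are merely formal above. The boundary terms $e^{\beta c} e^{-\mu e^{\gamma c}M_\delta}$ evaluated at $c \to \pm\infty$ vanish: at $+\infty$ the factor $e^{-\mu e^{\gamma c}M_\delta}$ decays super-exponentially because $M_\delta > 0$ almost surely, while at $-\infty$ the prefactor $e^{\beta c} \to 0$ precisely when $\beta > 0$, which is the first Seiberg bound $\sum_j \alpha_j > Q\chi(\Sigma)$. This is exactly where the hypothesis on the weights enters. The one genuine technical point is the interchange of the integral over $z \in \Sigma$, the GFF expectation, and the integral over $c \in \R$, together with the finiteness needed to differentiate and integrate by parts under the expectation. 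For $\delta > 0$ this is immediate, since $X_{g,\delta}$ is a smooth Gaussian field possessing all exponential moments, so every integrand is finite and the use of Fubini--Tonelli is unproblematic. For $\delta = 0$ one additionally relies on the integrability of the GMC mass and the moment estimates underlying the construction of the correlation functions in the preceding Proposition. I expect this integrability bookkeeping, rather than the elementary algebra of the integration by parts, to be the only real obstacle.
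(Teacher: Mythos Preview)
Your proof is correct and follows the same route as the reference the paper cites (\cite{ward}, Lemma~3.3): isolate the $c$-dependence via Gauss--Bonnet, recognize the integrated $\gamma$-insertion as $e^{\gamma c}M_\delta$, and integrate by parts in $c$ using $\partial_c e^{-\mu e^{\gamma c}M_\delta} = -\mu\gamma\, e^{\gamma c}M_\delta\, e^{-\mu e^{\gamma c}M_\delta}$, with the boundary terms controlled by the first Seiberg bound. The paper itself gives no independent argument and simply points to that lemma, so there is nothing further to compare.
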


\section{Variation of Green function}\label{section_green}

In this section we consider variations of the Green function of the Laplace--Beltrami operator with respect to the metric $g$. Recall that we sum over repeated Greek indices.

\begin{definition}
	We say that a function $F: \met \to \C$ is differentiable if for all $f \in \sectt{}{}$ and $g_\eps^{\alpha \beta} = g^{\alpha \beta} + \eps f^{\alpha \beta}$ we have
	\begin{equation*}
	\partial_\eps|_0 F(g_\eps) = \int_\Sigma f^{\alpha \beta}(x) F_{\alpha \beta}(g,x) dv_g(x)
	\end{equation*}
for some differentiable functions $x \mapsto F_{\alpha \beta}(g,x): \Sigma \to \C$. If $F$ is differentiable, we denote 
\begin{align*}
\frac{\delta F(g)}{\delta g^{\alpha \beta }(x)} :=F_{\alpha \beta}(g,x)\,.
\end{align*}
\end{definition}

Let $z$ be a local conformal coordinate in which the metric takes the form
\begin{equation*}
g(z) = e^{\sigma(z)}|dz|^2\,.
\end{equation*}
For $f \in \sym$ and $\eps$ small enough it holds that $g_\eps = g + \eps f$ is a Riemannian metric. We will study how the Green function $G_g$ of the Laplace--Beltrami operator $\Delta_g$ is varied under such perturbation of the metric. For a hyperbolic metric $g$ the coordinate $z$ can be chosen to be such that for $z'$ close to $z$ we have
\begin{align*}
G_g(z,z') &= - \ln d_\D(z,z') + M_g(z,z')\,, \\
\Delta_g &= 4e^{-\sigma(z)} \partial_z \partial_{\bar z}\,,
\end{align*}
where $d_\D$  is the hyperbolic distance function on the Poincaré disk, and $M_g$  is a smooth function (see Lemma 2.1. in \cite{GRV16} for proof). The formula for the first variation of $G_g$ is well-known, but for completeness we derive it from scratch for an arbitrary variation of the metric.

\begin{proposition}\label{dotG}
For $x \neq y$, the function $g \mapsto G_g(x,y)$ is differentiable, and for $x \neq z \neq y$ we have
\begin{align}\label{deltag}
\frac{\delta G_g(x,y)}{\delta g^{\alpha \beta}(z)} &= - \frac{1}{4\pi}  \big( \nabla_\alpha G_g(x,z)\nabla_\beta G_g(z,y) + \nabla_\beta G_g(x,z) \nabla_\alpha G_g(z,y) \big) \\
& \quad + \frac{g_{\alpha \beta}(z)}{2 v_g(\Sigma)}  \big(G_g(x,z) + G_g(z,y)\big) \nonumber + \frac{g_{\alpha \beta}(z)}{4\pi}  \nabla_\mu G_g(x,z) \nabla^\mu G_g(z,y)\,.\nonumber
\end{align}
\end{proposition}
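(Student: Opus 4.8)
The plan is to differentiate the two defining properties of $G_g$ directly. Recall from the spectral definition of the GFF that $G_g(\cdot,y) = 2\pi\sum_{n\geq 1}\lambda_{g,n}^{-1}e_{g,n}(\cdot)e_{g,n}(y)$, so $G_g$ is the zero-mean Green function of the Laplacian, characterized by
\begin{align}
-\Delta_g G_g(\cdot,y) = 2\pi\big(\delta_g(\cdot,y) - v_g(\Sigma)^{-1}\big), \qquad \int_\Sigma G_g(x,y)\,dv_g(x) = 0.
\end{align}
Writing $g_\eps^{\alpha\beta} = g^{\alpha\beta} + \eps f^{\alpha\beta}$ and $\dot G(\cdot,y) := \partial_\eps|_0 G_{g_\eps}(\cdot,y)$, I would differentiate the first identity at $\eps = 0$, obtaining a Poisson equation
\begin{align}
-\Delta_g \dot G(\cdot,y) = \big(\partial_\eps|_0\Delta_{g_\eps}\big)G_g(\cdot,y) + 2\pi\,\partial_\eps|_0\delta_{g_\eps}(\cdot,y) - 2\pi\,\partial_\eps|_0 v_{g_\eps}(\Sigma)^{-1} =: S,
\end{align}
and solve for $\dot G$ by convolving $S$ against $\tfrac{1}{2\pi}G_g$, fixing the additive constant through the differentiated normalization condition.

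The three ingredients feeding into $S$ are standard metric variations. First $\partial_\eps|_0 dv_{g_\eps} = -\hf\opn{tr}_g(f)\,dv_g$, hence $\partial_\eps|_0\delta_{g_\eps}(\cdot,y) = \hf\opn{tr}_g(f)\,\delta_g(\cdot,y)$ and $\partial_\eps|_0 v_{g_\eps}(\Sigma)^{-1}$ is the constant $\tfrac{1}{2 v_g(\Sigma)^2}\int_\Sigma\opn{tr}_g(f)\,dv_g$. Second, the variation of the Laplacian on a fixed function, computed through the variation of the Christoffel symbols, is
\begin{align}
\big(\partial_\eps|_0\Delta_{g_\eps}\big)h = f^{\alpha\beta}\nabla_\alpha\nabla_\beta h + \big(\nabla_\beta f^{\alpha\beta} - \hf\nabla^\alpha\opn{tr}_g(f)\big)\nabla_\alpha h.
\end{align}
A short check confirms $\int_\Sigma S\,dv_g = 0$, so the Poisson equation is solvable, and the $v_{g_\eps}(\Sigma)^{-1}$ term drops out after convolution because $G_g$ has zero mean.

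The heart of the computation is the integration by parts that turns $\tfrac{1}{2\pi}\int_\Sigma G_g(x,z')S(z')\,dv_g(z')$ into a pointwise kernel against $f^{\alpha\beta}(z')$. Writing $A = G_g(x,\cdot)$ and $B = G_g(\cdot,y)$, the Hessian term $A f^{\alpha\beta}\nabla_\alpha\nabla_\beta B$ and the divergence term $A(\nabla_\beta f^{\alpha\beta})\nabla_\alpha B$ combine, after one integration by parts on the latter, so that the second-derivative contributions cancel and only the symmetric gradient product $-f^{\alpha\beta}\nabla_\alpha B\,\nabla_\beta A$ survives; this yields the first line of \eqref{deltag}, with the coefficient $\tfrac{1}{4\pi}$ arising from the prefactor $\tfrac{1}{2\pi}$ and the symmetrization. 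The trace term $-\hf(\nabla^\alpha\opn{tr}_g f)\nabla_\alpha B$ integrates by parts into $\tfrac{1}{4\pi}\opn{tr}_g(f)\big(\nabla^\mu A\,\nabla_\mu B + A\,\Delta_g B\big)$; substituting $\Delta_g B = -2\pi\delta_g(\cdot,y) + 2\pi v_g(\Sigma)^{-1}$ produces the $g_{\alpha\beta}\nabla_\mu G_g\nabla^\mu G_g$ term and the $\tfrac{g_{\alpha\beta}}{2 v_g(\Sigma)}G_g(x,z)$ term, together with a contact term at $z=y$ that cancels exactly against the contribution of $\partial_\eps|_0\delta_{g_\eps}$. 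Finally, the constant fixed by the normalization $\int_\Sigma\dot G\,dv_g = \hf\int_\Sigma G_g(z',y)\opn{tr}_g(f)\,dv_g$ supplies the remaining $\tfrac{g_{\alpha\beta}}{2 v_g(\Sigma)}G_g(z,y)$, and assembling the pieces for $z \neq x,y$ gives \eqref{deltag}.

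The main obstacle is not the algebra but the rigorous justification of differentiability and of the distributional manipulations near the diagonal. The kernel in \eqref{deltag} is singular as $z\to x$ or $z\to y$, but since $x\neq y$ the two singularities never collide: using $G_g(z,\cdot)\sim-\ln|z-\cdot|$ and $\nabla G_g(z,\cdot)\sim|z-\cdot|^{-1}$ in conformal coordinates, the worst local behaviour is $|z-x|^{-1}$ (resp. $|z-y|^{-1}$), which is integrable in two dimensions; this is what makes $g\mapsto G_g(x,y)$ differentiable in the sense of the Definition and legitimizes pairing the kernel with an arbitrary $f$. I would make the integration by parts rigorous by excising small geodesic disks around $x$, $y$ and the diagonal and checking that the boundary terms vanish as the radii shrink, using precisely these logarithmic and inverse-distance bounds; smoothness of $z\mapsto \delta G_g(x,y)/\delta g^{\alpha\beta}(z)$ away from $x,y$ then follows from the off-diagonal smoothness of $G_g$. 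Careful bookkeeping of the two contact terms at $z=y$, and of the vanishing of the $v_{g_\eps}(\Sigma)^{-1}$ contribution by zero mean, is the one place where signs and factors of $2\pi$ must be tracked exactly.
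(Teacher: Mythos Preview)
Your proposal is correct and follows the same underlying strategy as the paper: differentiate the Poisson equation characterizing $G_g$, integrate the resulting source against $G_g$, and integrate by parts to extract the pointwise kernel. The paper organizes the computation differently, however. Rather than using the covariant variation formula $(\partial_\eps|_0\Delta_{g_\eps})h = f^{\alpha\beta}\nabla_\alpha\nabla_\beta h + (\nabla_\beta f^{\alpha\beta} - \tfrac{1}{2}\nabla^\alpha\opn{tr}_g f)\nabla_\alpha h$, it passes to conformal coordinates via a partition of unity $(U_i,\rho_i)$ and treats the traceless and pure-trace parts of $f$ in two separate cases; the traceless case yields only the gradient-product term after a coordinate integration by parts, while the trace case is handled through the explicit formula $\sqrt{\det g_\eps} = \sqrt{\det g} - 2\eps g_{z\bar z}^2 f^{z\bar z} + \mathcal{O}(\eps^2)$ and the zero-mean identity for $G_g$. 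Your covariant packaging is cleaner and avoids both the chart bookkeeping and the case split, at the cost of having to track the cancellation of the two contact terms at $z=y$ (from $A\Delta_g B$ and from $\partial_\eps|_0\delta_{g_\eps}$), which the paper's trace computation absorbs more implicitly. The analytic content---excising disks around $x$ and $y$ and checking that the boundary contributions from the logarithmic and $|z-\cdot|^{-1}$ singularities vanish---is the same in both approaches, and you identify it correctly as the place where the argument needs care.
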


\begin{proof}
The Green function solves the Poisson equation
\begin{equation}\label{poisson}
\frac{1}{2\pi} \Delta_g G_g(x,y) = -  \delta_g(x,y) +  \frac{1}{v_g(\Sigma)}\,.
\end{equation}
In coordinates we have $\delta_g(x,y) = \frac{\delta(x-y)}{\sqrt{\det g(x)}}$, where $\delta(x)$ is the (flat) Dirac delta. By plugging in the perturbed metric $g_\eps$, that in conformal coordinate satisfies $g_\eps^{z \bar z} = g^{z \bar z} + \eps f^{z \bar z}$ and $g_\eps^{zz} = \eps f^{zz}$,  and by taking the $\eps$-derivative, we get
\begin{equation}\label{deps} 
\Delta_{g_\eps} \partial_\eps G_{g_\eps}(x,y) = -(\partial_\eps \Delta_{g_\eps}) G_{g_\eps}(x,y) +  \frac{2 \pi \partial_\eps \sqrt{\det g_\eps(y)}}{\sqrt{\det g_\eps(y)}} \delta_{g_\eps}(x,y) - \frac{2\pi \partial_\eps v_{g_\eps}(\Sigma)}{v_{g_\eps}(\Sigma)^2}\,.
\end{equation}
We want to integrate this identity against the Green function $G_{g_\eps}$  over the surface $\Sigma$. We know how the differential operator $\partial_\eps \Delta_{g_\eps}$  looks in conformal coordinates, thus we patch together coordinate expressions using a partition of unity $(U_i, \rho_i)$. Denote $m_g(f) = \frac{1}{v_g(\Sigma)} \int f dv_g$. By integrating in \eqref{deps} both sides against $G_{g_\eps}$ and by using \eqref{poisson} and the zero-mean property of $G_g$
\begin{align*}
\int G_g(x,z) dv_g(z) &= 0\,,
\end{align*}
we get
\begin{align}\label{g_var_eps}
&- 2 \pi \partial_\eps G_{g_\eps}(x,y) + 2 \pi m_{g_\eps}(\partial_\eps G_{g_\eps}(\cdot,y)) \\
&= -\lim_{\delta \to 0} \int_{d_g(z,y)>\delta} G_{g_\eps}(x,z) (\partial_\eps \Delta_{g_\eps}) G_{g_\eps}(z,y) dv_{g_\eps}(z)   + 2\pi \lim_{\delta \to 0} \int_{d_g(z,y)>\delta} G_{g_\eps}(x,z) \frac{\partial_\eps \sqrt{\det g_\eps(z)}}{ \sqrt{\det g_\eps(z)}} \delta_{g_\eps}(z,y) dv_{g_\eps}(z) \nonumber \\
&= - \lim_{\delta \to 0} \sum_i \int_{U_i} \mathbf{1}_{d_g(z,y)>\delta} \rho_i(z) G_{g_\eps}(x,z) \partial_\eps \big( \tfrac{1}{\sqrt{\det g_{\eps}(z)}} \partial_\alpha(\sqrt{\det g_\eps(z)} g_\eps^{\alpha \beta}(z) \partial_\beta) \big) G_{g_\eps}(z,y) dv_{g_\eps}(z) \nonumber \\
& \quad + 2 \pi G_{g_\eps}(x,y) \frac{\partial_\eps \sqrt{\det g_\eps (y)}}{\sqrt{\det g_\eps (y)}}\,. \nonumber 
\end{align} 
Note that by using the zero-mean property of $G_g$ again we get
\begin{align*}
m_{g_\eps}(\partial_\eps G_{g_\eps}(\cdot,y)) &= \partial_\eps m_{g_\eps}(G_{g_\eps}(\cdot,y)) - \frac{\partial_\eps v_{g_\eps}(\Sigma)}{v_{g_\eps}(\Sigma)} m_{g_\eps}(G_{g_\eps}(\cdot,y)) -\frac{1}{v_{g_\eps}(\Sigma)} \int G_{g_{\eps}}(z,y) \partial_\eps \sqrt{\det g_\eps (z)} d^2z \\
&= -\frac{1}{v_{g_\eps}(\Sigma)} \int G_{g_{\eps}}(z,y) \partial_\eps \sqrt{\det g_\eps (z)} d^2z\,.
\end{align*}
Next we set $\eps = 0$. Consider first the case $\opn{tr}_g(f) = 0$, i.e. $f^{z \bar z} = 0$. Then we have
\begin{align*}
\sqrt{\det g_\eps(z)} = \sqrt{\det g(z)} + \mathcal{O}(\eps^2)\,.
\end{align*}
We get 
\begin{align*}
\partial_\eps|_0 G_{g_\eps}(x,y) &=   \frac{1}{2\pi} \lim_{\delta \to 0} \sum_i \int_{U_i} \mathbf{1}_{|z-y|>\delta} \rho_i(z) G_g(x,z) \partial_z \big(e^{\sigma(z)} f^{zz}(z) \partial_z \big) G_g(z,y)  d^2z\,.
\end{align*}
Next we integrate by parts the $\partial_z$. The boundary term $\partial_z \mathbf{1}_{|z-y|>\delta}$ vanishes as $\delta \to 0$, since $\int_{|z|=\delta} \frac{h(z)}{z} d \bar z = o_\delta(1)$ for smooth $h$. Note that here we need to assume $x \neq y$ to ensure that $G_g(x,z)$ is smooth when $z$ is close to $y$. After passing to the $\delta \to 0$ limit, we get
\begin{align*}
\partial_\eps|_0 G_{g_\eps}(x,y) &=- \frac{1}{2\pi}  \sum_i \int_{U_i} \rho_i(z) f^{zz}(z) \partial_z G_g(x,z) \partial_z G_g(z,y) e^{\sigma(z)} d^2z  \\
& \quad -\frac{1}{2\pi} \sum_i \int_{U_i} \partial_z \rho_i(z) f^{zz} (z) G_g(x,z) \partial_z G_g(z,y) e^{\sigma(z)} d^2z  \\
& \quad +\frac{1}{2\pi} \sum_i \int_{\partial U_i} \rho_i(z)  f^{zz}(z) G_g(x,z) \partial_z G_g(z,y) e^{\sigma(z)}  d \bar z \,.
\end{align*}
The last two terms vanish since $(U_i, \rho_i)$ is a partition of unity and the surface $\Sigma$ has no boundary. Now it follows that $\partial_\eps|_0 G_{g_\eps}(x,y)$ is a smooth function outside the diagonal.

Next consider the trace part of $f$, i.e. let $f^{\alpha \beta} = \hf \opn{tr}_g(f) g^{\alpha \beta}$. Now $f^{zz} = f^{\bar z \bar z} = 0$, and using $\sqrt{\det g_\eps} = \sqrt{\det g} - 2 \eps g_{z \bar z}^2 f^{z \bar z} + \mathcal{O}(\eps^2)$ we get 
\begin{align*}
\partial_{\eps}|_0 G_{g_\eps}(x,y) &= \frac{1}{v_g(\Sigma)} \int G_g(z,y) g_{z \bar z}(z) f^{z \bar z} (z) dv_g(z) \\
& \quad - \frac{1}{2\pi} \sum_i \int_{U_i} \rho_i(z) f^{z \bar z}(z) \big(\partial_z G_g(x,z) \partial_{\bar z} G_g(z,y) + \partial_{\bar z} G_g(x,z) \partial_z G_g(z,y) \big) dv_g(z) \\
& \quad + \frac{1}{2\pi} \sum_i \int_{U_i} \rho_i(z)2 f_{z \bar z}(z) \partial_\alpha G_g(x,z)   g^{\alpha \beta} \partial_\beta G_g(z,y) d^2z   \\
& \quad + \frac{1}{2\pi} \sum_i \int_{U_i} \rho_i(z) 2 f_{z \bar z}(z) G_g(x,z) \frac{1}{\det g} \partial_\alpha \big(\sqrt{\det g} g^{\alpha \beta} \partial_\beta G_g(z,y)\big) dv_g(z) \\
& \quad + 2 G_g(x,y) \frac{f_{y \bar y}(y)}{\sqrt{\det (y)}}
\end{align*}
The second to last term above can be simplified
\begin{align*}
& \frac{1}{2\pi} \sum_i \int_{U_i} \rho_i(z) 2 f_{z \bar z}(z) G_g(x,z) \frac{1}{\det g(z)} \partial_\alpha \big(\sqrt{\det g} g^{\alpha \beta} \partial_\beta G_g(z,y)\big) dv_g(z) \\
&=  \frac{1}{2\pi} \sum_i \int_{U_i} \rho_i(z) 2 f_{z \bar z} (z) G_g(x,z) \frac{1}{\sqrt{\det g(z)}} \Delta_g G_g(z,y) dv_g(z) \\
&=- 2 G(x,y) \frac{f_{y \bar y}(y)}{\sqrt{\det y}}  + \frac{1}{v_g(\Sigma)}\int g_{z \bar z}(z) f^{z \bar z}(z) G_g(x,z) dv_g(z)\,.
\end{align*}
Now the result follows.
\end{proof}

The above result becomes clearer when viewed in terms of the resolvent $R_g(\lambda) = (\Delta_g-\lambda)^{-1}$ and the second resolvent formula
\begin{align}\label{resolvent_formula}
R_{g_\eps}(\lambda) - R_g(\lambda) &= -R_{g_\eps}(\lambda)(\Delta_{g_\eps}-\Delta_g) R_g(\lambda)\,,
\end{align}
which holds for any $\lambda$ belonging to resolvent sets of both $\Delta_{g_\eps}$ and $\Delta_g$. For small $\eps$ this holds for any $0<|\lambda|<\delta$ for $\delta$ sufficiently small. The resolvent formula then implies (we denote $\dot \Delta_g = \partial_\eps|_0 \Delta_{g_\eps}$ and $\dot R_g = \partial_\eps|_0 R_{g_\eps}$)
\begin{align*}
\dot R_g(\lambda) &= -R_g(\lambda) \dot \Delta_g R_g(\lambda)\,,
\end{align*}
which is a well-defined operator on $L^2(\Sigma,g)$. We also see that $\dot R_g(\lambda)$ is holomorphic in $\lambda$ in the punctured ball $B(0,\delta) \setminus\{0\} \subset \C$ and thus an application of the Cauchy integral formula implies
\begin{align*}
\dot R_g &= R_g \dot \Delta_g R_g\,,
\end{align*}
where $R_g: L^2(\Sigma,g) \to L^2(\Sigma,g)$ is the operator that has the zero-mean Green function $G_g$ as its integral kernel. It satisfies $\Delta_g R_g = \opn{Id}-\Pi_0$, where $\Pi_0$ is the projection onto constants.

Now, we can view Proposition \ref{dotG} as the computation of the integral kernel of $\dot R_g$. We saw that for traceless variations we get exactly the same operator as from the resolvent formula, but for variations with trace, one has to be more careful and take into account the variation of the volume form of the metric. For now, we will only look at traceless variations, and variations with trace will be included by applying transformation properties of $G_g$ under Weyl transformations (see Corollary \ref{corollary}).


\begin{proposition}\label{ddotG}
Let $f \in \sectt{tf}{}$ and $g_\eps^{zz} = \eps f^{zz}$, $g_\eps^{z \bar z} = g^{z \bar z}$. Then, for any $\lambda$ belonging to the resolvent set of $\Delta_g$, and $h \in C^\infty(\Sigma)$ supported in $\Sigma \setminus \{x\}$, we have
\begin{align*}
\partial_\eps^n|_0 (R_{g_\eps}(\lambda) h)(x) &= (-1)^n n! \big( R_g(\lambda) (\dot \Delta_g  R_g(\lambda))^n h \big)(x)\,,
\end{align*}
where $\dot R_g(\lambda) = \partial_\eps|_0 R_{g_\eps}(\lambda)$ and $\dot \Delta_g = \partial_{\eps}|_0 \Delta_{g_\eps}$. The operator $\partial_\eps^n|_0 R_{g_\eps}$ has an integral kernel $\dot G_g^{(n)}(\lambda;\cdot,\cdot) \in C^\infty(\Sigma^2 \setminus \opn{diag})$,
\begin{align*}
\partial_\eps^n|_0 (R_{g_\eps}(\lambda) h)(x) &= \int \dot G_g^{(n)}(\lambda;x,y) h(y) dv_g(y)\,,
\end{align*}
for all $\lambda \in B(0,\delta)$ with $\delta$ small enough, where $R_g(0) = R_g$.
\end{proposition}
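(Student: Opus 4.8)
The plan is to exploit a structural feature that is special to the \emph{holomorphic} traceless deformation $g_\eps^{zz} = \eps f^{zz}$ (with $g_\eps^{z\bar z} = g^{z\bar z}$ and the remaining inverse-metric components frozen at their background values): it leaves the volume form unchanged. First I would record that for this deformation $\det g_\eps$ is independent of $\eps$. In the complex frame the matrix of $g_\eps^{\alpha\beta}$ is $\left(\begin{smallmatrix} \eps f^{zz} & g^{z\bar z} \\ g^{z\bar z} & 0 \end{smallmatrix}\right)$, whose determinant equals $-(g^{z\bar z})^2$ for every $\eps$. Consequently, writing $\Delta_{g_\eps} = \tfrac{1}{\sqrt{\det g_\eps}} \partial_\alpha(\sqrt{\det g_\eps}\, g_\eps^{\alpha\beta} \partial_\beta)$, the only $\eps$-dependence sits in the affine factor $g_\eps^{\alpha\beta} = g^{\alpha\beta} + \eps f^{\alpha\beta}$. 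Hence $\Delta_{g_\eps} = \Delta_g + \eps \dot\Delta_g$ \emph{exactly}, where $\dot\Delta_g = \tfrac{1}{\sqrt{\det g}} \partial_z(\sqrt{\det g}\, f^{zz} \partial_z)$ is a second-order operator supported in $\opn{supp}(f)$, and all higher derivatives $\partial_\eps^k|_0 \Delta_{g_\eps}$ vanish for $k \geq 2$. This exactness is precisely what makes the clean formula possible: for a genuine real (non-holomorphic) traceless variation, where one also turns on the $\bar z\bar z$-component, the volume form would acquire an $\eps^2$ correction, $\partial_\eps^2|_0\Delta_{g_\eps}$ would be nonzero, and the identity would fail.

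Next I would invert the resolvent by a Neumann series. For $\lambda$ in the resolvent set of $\Delta_g$ I would factor $\Delta_{g_\eps} - \lambda = (\Delta_g - \lambda)(I + \eps R_g(\lambda)\dot\Delta_g)$, giving
\[ R_{g_\eps}(\lambda) = \sum_{k=0}^\infty (-\eps)^k \big(R_g(\lambda)\dot\Delta_g\big)^k R_g(\lambda). \]
The operator $R_g(\lambda)\dot\Delta_g$ is bounded on $H^s(\Sigma)$ for every $s$, since $\dot\Delta_g : H^s \to H^{s-2}$ and $R_g(\lambda) : H^{s-2}\to H^s$ by elliptic regularity, so the series converges in $B(H^s)$ for $|\eps|$ small and defines an operator-norm-analytic family in $\eps$. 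Differentiating this power series term by term yields $\partial_\eps^n|_0 R_{g_\eps}(\lambda) = (-1)^n n!\, \big(R_g(\lambda)\dot\Delta_g\big)^n R_g(\lambda) = (-1)^n n!\, R_g(\lambda)\big(\dot\Delta_g R_g(\lambda)\big)^n$, which is the claimed identity once applied to $h$ and evaluated at $x$. The case $\lambda \in B(0,\delta)$ including $\lambda = 0$ is covered identically, using that $R_g = R_g(0)$ maps $H^{s-2}$ boundedly into the mean-zero functions in $H^s$ and equals an order $-2$ operator modulo the smoothing rank-one zero-mode projection.

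Finally I would establish the kernel statement. The operator $R_g(\lambda)\big(\dot\Delta_g R_g(\lambda)\big)^n$ is a composition of the order $-2$ operator $R_g(\lambda)$ with $n$ copies of the order $2$ operator $\dot\Delta_g$, hence is pseudolocal, and its Schwartz kernel $\dot G_g^{(n)}(\lambda;\cdot,\cdot)$ is therefore smooth off the diagonal. More in the spirit of Proposition \ref{dotG}, I would build this kernel by induction on $n$ as an iterated integral of products of the Green kernel $G_g$ against $\dot\Delta_g$ in the intermediate variables; the hypothesis $\opn{supp}(h) \subset \Sigma \setminus \{x\}$ keeps the evaluation point off the outer diagonal, which is exactly why the clean kernel representation requires it.

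The main obstacle is this last step: controlling the iterated singular integrals near the intermediate diagonals. Since $\dot\Delta_g$ is second order and $G_g$ has only a logarithmic singularity, $\dot\Delta_g^{(w)} G_g(w,y)$ behaves like $(w-y)^{-2}$, which is borderline non-integrable in two dimensions. The saving feature is that $\dot\Delta_g$ carries only the holomorphic derivative $\partial_z$: integrating by parts once moves a $\partial_z$ onto the factor $G_g(x,w)$, which is smooth near $w=y$ as long as $x \neq y$, reducing the singularity to the locally integrable $(w-y)^{-1}$ and permitting differentiation under the integral sign. This is what propagates off-diagonal smoothness and the requisite near-diagonal bounds through the induction, closing the kernel estimate.
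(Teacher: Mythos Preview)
Your proof is correct and follows essentially the same route as the paper. Both arguments hinge on the observation that for this holomorphic traceless deformation the volume form is $\eps$-independent, so $\Delta_{g_\eps}=\Delta_g+\eps\dot\Delta_g$ exactly; the paper phrases this as $\partial_\eps^k|_0\sqrt{\det g_\eps}=0$ and $\partial_\eps^2|_0\Delta_{g_\eps}=0$ and then runs the second resolvent identity inductively, which is precisely your Neumann series written one step at a time. Where you streamline is twofold: you obtain off-diagonal smoothness of the kernel from pseudolocality of the order~$-2$ pseudodifferential operator $R_g(\lambda)(\dot\Delta_g R_g(\lambda))^n$, whereas the paper carries out by hand the integration-by-parts you also sketch; and you handle $\lambda=0$ directly (the unchanged volume form means the zero-mean projection $\Pi_0$ is the same for all $\eps$, so the Neumann series works verbatim with $R_g$ in place of $R_g(\lambda)$), whereas the paper passes from $\lambda\neq 0$ to $\lambda=0$ by a Cauchy contour integral around the origin.
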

\begin{proof}
We proceed by induction. The $n=1$ case follows from Proposition \ref{dotG}.

Assume that the claim holds for some $n \in \N$. For traceless perturbations $\partial_\eps^k|_0 \sqrt{\det g_\eps} = 0$ for all $k \in \N$ (traceless perturbations only contribute to the mixed derivatives $\partial_\eps^k \partial_{\bar \eps}^k|_0 \sqrt{\det g_\eps}$). Thus, we do not have to carry the $\eps$-derivatives of the volume form in our computations. This means that we can use the resolvent formula \eqref{resolvent_formula} to compute the variation, and there are no additional terms coming from the volume form. By using the resolvent formula \eqref{resolvent_formula} and $\partial_\eps^2|_0 \Delta_{g_\eps} = 0$, we get
\begin{align*}
\partial_\eps^{n+1}|_0 \big(R_{g_\eps}(\lambda) h \big)(x) &= \partial_\eps^{n+1}|_0 (- R_{g_\eps}(\lambda)(\Delta_{g_\eps}-\Delta_g) R_g(\lambda))  \\
&= -(n+1)\big( \partial_\eps^n|_0 R_{g_\eps}(\lambda) \dot \Delta_g R_g(\lambda) h \big) (x)\,.
\end{align*}
Now the first claim follows.

Next we study the integral kernel. We denote the integral kernel of $R_g(\lambda)$ by $G_g(\lambda;\cdot,\cdot)$. The behaviour close to the diagonal is given by $G(\lambda;x,y) \sim - \ln d_g(x,y)$, see Section 2.4 of \cite{GRV16}. From the induction hypothesis we get
\begin{align}\label{qwert}
\big( \partial_\eps^n|_0 R_{g_\eps}(\lambda) \dot \Delta_g R_g(\lambda) h \big)(x) &= \int \dot G_g(\lambda;x,z) (\dot \Delta_g R_g(\lambda) h)(z) dv_g(z) \nonumber \\
&= \lim_{\delta \to 0} \int_{d_g(z,y) > \delta}  \dot G_g^{(n)}(\lambda; x,z) \dot \Delta_g(z) G_g(\lambda;z,y) h(y) dv_g(z,y)\,. 
\end{align}
Above we used 
\begin{align*}
\partial_z^2  \int \log|z-y| h(y) d^2y &= \lim_{\delta \to 0} \int_{|z-y|>\delta}  \frac{-1}{2(z-y)^2} h(y) d^2y\,,
\end{align*}
which holds for any smooth and compactly supported $h$. The $\delta \to 0$ limit of \eqref{qwert} exists, which we demonstrate next. Let $z \mapsto \chi(x;z)$ be a smooth bump function supported in a small neighbourhood of $x$. Then 
\begin{align*}
&\int_{d_g(z,y) > \delta}  \dot G_g^{(n)}(\lambda; x,z) \dot \Delta_g G_g(\lambda;z,y) h(y) dv_g(z,y) \\
&= \int_{d_g(z,y) > \delta} \chi(x,z)  \dot G_g^{(n)}(\lambda; x,z) \dot \Delta_g G_g(\lambda;z,y) h(y) dv_g(z,y) \\
& \quad + \int_{d_g(z,y)>\delta} \dot \Delta_g \big( (1-\chi(x,z)) \dot G_g^{(n)}(\lambda;x,z) \big) G_g(\lambda;z,y) h(y) dv_g(z,y) \\
&\quad + r_\delta(x)
\end{align*} 
where $r_\delta$ contains boundary terms from integrating by parts the $\dot \Delta_g$
\begin{align*}
r_\delta(x) &= \int \Big( \int_{|z-y|=\delta} (1-\chi(z,x))\dot G_g^{(n)}(\lambda;x,z) f^{zz}(z) \partial_z G_g(\lambda;z,y) \sqrt{\det g(z)}d \bar z \Big) h(y) dv_g(y) \\
& \quad + \int \Big( \int_{|z-y|=\delta} \partial_z \big( (1-\chi(z,x)) \dot G_g^{(n)}(\lambda;x,z) \big) f^{zz}(z) G_g(\lambda;z,y) \sqrt{\det g(z)} d \bar z \Big) h(y) dv_g(y)\,.
\end{align*}
The first term vanishes as $\delta \to 0$ by first Taylor expanding the smooth terms around $z=y$ and by observing that $\int_{|z-y|=\delta} \frac{1}{z-y} d \bar z = 0$. The second term works similarly but with $\int_{|z-y|=\delta} \log|z-y| d \bar z = 0$.

Now we have shown that
\begin{align*}
\big( \partial_\eps^n|_0 R_{g_\eps}(\lambda) \dot \Delta_g R_g(\lambda) h \big)(x) &= \int \dot G_g^{(n)}(\lambda;x,z) F_1(\lambda;x,y,z) h(y) dv_g(z,y) \\
& \quad + \int G_g(\lambda;z,y) F_2(\lambda;x,y,z) h(y) dv_g(z,y)
\end{align*}
for some smooth functions $F_1(\lambda; \cdot)$ and $F_2(\lambda,\cdot)$ and for any $h$ supported away from $x$. Because both $\dot G_g^{(n)}$ and $G_g$ are smooth outside the diagonal, and thus we can define $\dot G_g^{(n+1)} = \dot G_g^{(n)} F_1 + G_g F_2$.

The claim for $\lambda=0$ follows by taking a contour integral around the origin and using the Cauchy integral formula.
\end{proof}

Next we control the circle averages of the integral kernels $\dot G_g^{(n)}(x,y)$ when $x$ is close to $y$. This can be done in local coordinates, so the perturbed metric can be assumed to take the form $g_\eps = \psi_\eps^*(e^{\varphi_\eps}|dz|^2)$, where $\psi_\eps$ is some locally defined diffeomorphism and $\varphi_\eps$ is some locally defined smooth function. Recall that we denote by $\nug{g}{x}$ the uniform probability measure on the geodesic circle $\partial B_g(x,\delta)$ at $x \in \Sigma$ with radius $\delta > 0$.

\begin{proposition}\label{circleaverage}
Let $f \in \sectt{tf}{}$ and set $g_\eps^{zz} = \eps f^{zz}$, $g_\eps^{z \bar z} = g^{z \bar z}$. Denote\footnote{It will turn out that $\dot G_g^{(n)}(x,y) = \dot G_g^{(n)}(\lambda;x,y)$ for $x \neq y$, so there is no risk of confusion in the notation. Note that $\dot G_g^{(n)}(\lambda;x,x)$ is not defined.}
\begin{align*}
\dot G_g^{(n)}(x,y) &:=  \lim_{\delta \to 0} \partial_\eps^n|_0 \int G_{g_\eps}(x',y') d \nug{g_\eps}{x}(x') d \nug{g_\eps}{y}(y')\,.
\end{align*}
Then $\dot G_g^{(n)} \in L^\infty(\Sigma^2)$ and $\partial_x^a \partial_y^b \dot G_g^{(n)} \in C^\infty(\Sigma^2\setminus \opn{diag})$ for all $(a,b) \in \N \times \N$. The restriction to the diagonal $x \mapsto \dot G_g^{(n)}(x,x)$ belongs to $C^\infty(\Sigma)$.
\end{proposition}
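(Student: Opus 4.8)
\subsection*{Proof proposal}

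The plan is to treat the off-diagonal region and the diagonal separately, using Proposition \ref{ddotG} together with the recursion established in its proof, and to invoke the circle averages only in order to tame the logarithmic singularity of $G_g$ at coinciding endpoints.

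First I would dispose of the statement away from the diagonal. Fix $x\neq y$ and work in conformal charts. For $\delta$ small the geodesic circles $\partial B_{g_\eps}(x,\delta)$ and $\partial B_{g_\eps}(y,\delta)$ are disjoint and stay off the diagonal, and they depend smoothly on $\eps$; hence $\partial_\eps^n|_0$ of the double average is a finite sum of averages of smooth integrands---those in which the derivatives fall on $G_{g_\eps}$, and those in which they fall on the parametrisations of the two circles. As $\delta\to0$ every such term converges to a smooth function of $(x,y)$, and the leading term is exactly the value of $\partial_\eps^n|_0 G_{g_\eps}$ at $(x,y)$, which by Proposition \ref{ddotG} equals $\dot G_g^{(n)}(0;x,y)$. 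Differentiating in $x,y$ and using the smoothness of that kernel gives $\partial_x^a\partial_y^b\dot G_g^{(n)}\in C^\infty(\Sigma^2\setminus\opn{diag})$, and in particular identifies the circle average with the kernel of Proposition \ref{ddotG} off the diagonal, as recorded in the footnote.

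The substance of the proposition is the behaviour at and near the diagonal, which I would establish by induction on $n$ using the recursion from the proof of Proposition \ref{ddotG}: up to a constant, $\dot G_g^{(n+1)}(x,y)=\int_\Sigma \dot G_g^{(n)}(x,z)\,F_1(x,y,z)\,dv_g(z)+\int_\Sigma G_g(z,y)\,F_2(x,y,z)\,dv_g(z)$, with $F_1,F_2$ smooth away from the partial diagonal $z=y$. Localising in one conformal chart, the only way the right-hand side can fail to be bounded as $y\to x$ is through the clustering of $z$ near the coinciding endpoints, where $\dot\Delta_g$ contributes a second-order pole $\tfrac{1}{(z-y)^2}$ and $G_g$ a logarithm. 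The regularising mechanism is already visible at $n=1$: one integration by parts converts $\tfrac{1}{(z-y)^2}$ into $\partial_z\tfrac{1}{z-y}$, and the telescoping identity $\tfrac{1}{(z-x)(z-y)}=\tfrac{1}{x-y}\bigl(\tfrac{1}{z-x}-\tfrac{1}{z-y}\bigr)$ rewrites the dangerous contribution as a difference quotient of a Cauchy transform of smooth data, which extends smoothly across $x=y$. Performing the endpoint circle averages replaces $-\ln|x'-z|$ by the bounded quantity $-\ln\max(|x-z|,\delta)$, so the endpoint logarithms never diverge and the $\delta\to0$ limit stays uniformly bounded in $(x,y)$; this yields $\dot G_g^{(n)}\in L^\infty(\Sigma^2)$. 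Repeating the same reduction with $x=y$ shows that, after the $z$-integration, the circle-averaged diagonal value is a Cauchy-type transform of smooth data, and therefore $x\mapsto\dot G_g^{(n)}(x,x)\in C^\infty(\Sigma)$.

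The main obstacle is propagating the estimate through the recursion. Each inductive step effectively composes with the order-zero operator $R_g\dot\Delta_g$, which does not preserve $L^\infty$, so a crude bound would lose a logarithm at every stage; boundedness can only survive because the $\dot\Delta_g$ invariably acts on a Green function and is reorganised---via the integration by parts against the cutoffs of Proposition \ref{ddotG}---into derivatives of Cauchy transforms, in which the would-be logarithms cancel. The technical heart of the argument is to carry enough of this cancellation structure (not merely the $L^\infty$ bound, but the precise singular part of $\dot G_g^{(n)}(x,\cdot)$ near $x$) through the induction, so that the cancellation persists to all orders $n$, uniformly up to the diagonal and compatibly with the $\eps$-variation of the circle measures.
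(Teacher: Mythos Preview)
Your off-diagonal argument is fine and matches the paper. The diagonal analysis is where your proposal diverges, and the gap is exactly the one you flag yourself: you propose to propagate an $L^\infty$ bound through the recursion $\dot G_g^{(n+1)}\sim R_g\dot\Delta_g\,\dot G_g^{(n)}$, but this requires at each step a sharp description of the singular part of $\dot G_g^{(n)}(x,\cdot)$ near $x$, not merely boundedness. The telescoping Cauchy-kernel trick you sketch for $n=1$ does not obviously iterate, because the output of one step is a convolution of Green functions rather than a single Green function, and the structure needed for the next cancellation is not recovered. Your scheme also does not systematically treat the $\eps$-dependence of the circle measures $\nu^\delta_{g_\eps,x}$ themselves. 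Without supplying the inductive invariant, the argument does not close.

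The paper avoids this entirely by a direct, non-inductive computation. Near the diagonal it writes the perturbed metric locally as $g_\eps=\psi_\eps^*(e^{\varphi_\eps}|dz|^2)$ with $\psi_\eps$ a smooth family of Beltrami solutions, and decomposes $G_{g_\eps}(x,y)=-\ln d_{g_\eps}(x,y)+M_{g_\eps}(x,y)$. For the logarithmic part, Appendix~\ref{appendix} gives $\ln d_{g_\eps}(x,y)=\tfrac12\varphi_\eps(\psi_\eps(x))+\ln|\psi_\eps(x)-\psi_\eps(y)|+r_\eps$ with $\partial_\eps^k|_0 r_\eps=\mathcal{O}(d_g(x,y))$; the pullback identity $\psi^*\nu^\delta_{g,\psi(x)}=\nu^\delta_{\psi^*g,x}$ reduces the double circle average to an explicit Euclidean one, which for $x\neq y$ gives $\partial_\eps^n|_0\ln|\psi_\eps(x)-\psi_\eps(y)|$ (bounded) and for $x=y$ gives an exact cancellation of the Weyl factors, hence $o_\delta(1)$. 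For the smooth remainder $M_{g_\eps}$, the paper uses a parametrix identity (the hyperbolic resolvent as local parametrix for $R_g(\lambda)$) to express it through off-diagonal values of $G_{g_\eps}$ and a universally smooth function of $d_{g_\eps}^2$, whose $\eps$-smoothness follows from Proposition~\ref{ddotG}. This geometric decomposition is the missing idea in your proposal.
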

\begin{proof}
For a smooth function $F:\Sigma \times \Sigma \to \C$ we have
\begin{align*}
\partial_\eps^n|_0 \int F(x',y') d \nug{g_\eps}{x}(x') d\nug{g_\eps}{y}(y') &= o_\delta(1)\,.
\end{align*}
Since the derivatives $\partial_\eps^n|_0 G_{g_\eps}(x,y)$ are smooth outside the diagonal by Proposition \ref{ddotG}, it suffices to consider $(x,y)$ taken from a small neighbourhood of the diagonal of $\Sigma \times \Sigma$. Thus, we can assume that $x$ and $y$ belong to some small open set $U \subset \Sigma$ and in $U$ the metric is $g_\eps = \psi_\eps^*(e^{\varphi_\eps} |dz|^2)$, where $\psi_\eps$ and $U$ are chosen in such a way that $\psi_\eps: U \to V \subset \C$, $\psi_\eps(x) = p \in V$ for all $\eps \geq 0$. Thus, in $U$, the Green function can be written as
\begin{align}\label{GdM}
G_{g_\eps}(x,y) &= -\ln d_{\psi_\eps^*(e^{\varphi_\eps} g_0)}(x,y) + M_{g_\eps}(x,y)\,,
\end{align}
where $g_0$ is the Euclidean metric on $U$. We can simplify the first term on the right-hand side
\begin{align*}
\ln d_{\psi_\eps^* (e^{\varphi_\eps} g_0)}(x,y) &= \hf \varphi_\eps( \psi_\eps(x)) + \ln |\psi_\eps(x)-\psi_\eps(y)|  + r(\eps,x,y)\,,
\end{align*}
where $r$ is smooth and $\partial_\eps^k|_0 r = \mathcal{O}(d_g(x,y))$ as $y \to x$ for all $k \geq 0$, see Appendix \ref{appendix}.

We first focus on the logarithm of the distance function. It holds that $\psi^* \nug{g}{\psi(x)} = \nug{\psi^*g}{x}$. For $x \neq y$ we have 
\begin{align*}
\int \ln d_{g_\eps}(x',y') d \nug{g_\eps}{x}(x') d \nug{g_\eps}{y}(y') &=  \int\big( \hf \varphi_\eps(x')+ \ln |x'-y'| \big)  d \nug{e^{\varphi_\eps} g_0}{p}(x') d \nug{e^{\varphi_\eps} g_0}{\psi_\eps(y)}(y') + r_\delta(\eps,x,y)\,.
\end{align*}
Denote $\delta(\eps,q) = \delta e^{-\frac{\varphi_\eps(q)}{2}}$. For small $\delta$, the geodesic circles $\partial B_{e^\varphi g}(x,\delta)$ and $\partial B_g(x, e^{- \varphi(x)/2}\delta)$ are almost the same. This allows us to write
\begin{align*}
\int \big( \hf \varphi_\eps(x')+ \ln |x'-y'| \big)  d \nug{e^{\varphi_\eps} g_0}{p}(x') d \nug{e^{\varphi_\eps} g_0}{\psi_\eps(y)}(y') &= \hf \varphi_\eps(p)+ \int \ln |x'-y'| d \nu^{\delta(\eps,p)}_{g_0,p}(x') d \nu^{\delta(\eps, \psi_\eps(y))}_{g_0,\psi_\eps(y)}  \\
& \quad + \tilde r_\delta(\eps,x,y)\,.
\end{align*}
where  $\partial_\eps^k|_0 \tilde r_\delta = o_\delta(1)$ for all $k \in \N$, which follows essentially from the computation done in Appendix \ref{appendix}. Recall that $g_0$ is the Euclidean metric. The circle average becomes
\begin{align*}
\int \ln |x'-y'| d \nu^{\delta(\eps,p)}_{g_0,p}(x') d \nu^{\delta(\eps, \psi_\eps(y))}_{g_0,\psi_\eps(y)}&=  \int \ln |p+ \delta(\eps,p)e^{i \theta}-\psi_\eps(y) - \delta(\eps,\psi_\eps(y)) e^{i \theta'}| \frac{d \theta d \theta'}{(2\pi)^2} \,.
\end{align*}
Since $\psi_\eps(y) \to q \neq p$ as $\eps \to 0$ (because we assumed $x \neq y$), we can Taylor expand 
\begin{align*}
 \ln \big( p - \psi_\eps(y) +  ( e^{- \frac{\varphi_\eps(p)}{2}+ i \theta} - e^{- \frac{\varphi_\eps(\psi_\eps(y))}{2} + i \theta'} )\delta\big) &= \ln (p - \psi_\eps(y))   - \sum_{k=1}^\infty (-\delta)^k \frac{(e^{- \frac{\varphi_\eps(p)}{2}+ i \theta} - e^{- \frac{\varphi_\eps(\psi_\eps(y))}{2} + i \theta'})^k}{k (p-\psi_\eps(y))^k}
\end{align*}
and it follows that
\begin{align*}
\partial_\eps^n|_0 \int \ln |p+ \delta(\eps,p)e^{i \theta}-\psi_\eps(y) - \delta(\eps,\psi_\eps(y)) e^{i \theta'}| \frac{d \theta d \theta'}{(2\pi)^2} &= \partial_\eps^n|_0 \ln |p-\psi_\eps(y)| \\
&= \partial_\eps^n|_0 \ln |\psi_\eps(x)-\psi_\eps(y)|\,,
\end{align*}
which is bounded together with its derivatives on $\Sigma^2 \setminus \opn{diag}$. In conclusion,
\begin{align}\label{lndgca}
\int \ln d_{g_\eps}(x',y') d \nug{g_\eps}{x}(x') d \nug{g_\eps}{y}(y') &= \hf \varphi_\eps(\psi_\eps(x)) + \ln |\psi_\eps(x)-\psi_\eps(y)| + \tilde r_\delta(\eps,x,y)\,,
\end{align}
where $\tilde r_\delta$ is smooth and $\partial_\eps^k|_0 \tilde r_\delta = o_\delta(1)$.

In the case $x=y$ we of course have $\psi_\eps(x) = \psi_\eps(y)$, and we can directly conclude that
\begin{align}\label{lndgca2}
\partial_\eps^n|_0 \int \ln d_{g_\eps}(x',y') d \nug{g_\eps}{x}(x') d \nug{g_\eps}{x}(y') &= \partial_\eps^n|_0 \big( \hf \varphi_\eps(p) + \int \ln |\delta(\eps,p)(e^{i \theta}-e^{i \theta'})| \frac{d \theta d \theta'}{(2\pi)^2}\big) + o_\delta(1) \\
&= \partial_\eps^n|_0 \big( \hf \varphi_\eps(p) - \hf \varphi_\eps(p) \big) + o_\delta(1) \nonumber \\
&= o_\delta(1)\,, \nonumber
\end{align}
where the $o_\delta(1)$-terms are such that also their derivatives in $x$ vanish as $\delta \to 0$ (by Appendix \ref{appendix}). Note that from equations \eqref{lndgca} and \eqref{lndgca2} it can be inferred that the function
\begin{align*}
(x,y) \mapsto \lim_{\delta \to 0} \partial_\eps^n|_0 \int \ln d_{g_\eps}(x',y') d \nug{g_\eps}{x}(x') d \nug{g_\eps}{y}(y')
\end{align*}
is discontinuous at the diagonal whenever $\psi_\eps$ is not a conformal map, but we do not go into the details of this.

In the case that $g_\eps$ is hyperbolic for all $\eps \geq 0$, this result is expected since in Lemma 3.2. of \cite{GRV16} it was computed that 
\begin{align*}
\int \ln d_g(x',y') d\nug{g}{x}(x') d \nug{g}{x}(y') = \ln |2 \tanh(\tfrac{\delta}{2})| + o_\delta(1)
\end{align*}
for any hyperbolic metric $g$, so the relevant contribution is independent of the (hyperbolic) metric. For non-hyperbolic perturbations we have to carry the $\varphi_\eps$ in the computation, but in the end the $\partial_\eps^n|_0$-derivative is still $o_\delta(1)$, as we just showed.

Now we have shown that the function
\begin{align*}
(x,y) \mapsto \lim_{\delta \to 0} \partial_\eps^n|_0 \int \ln d_{g_\eps}(x',y') d \nug{g_\eps}{x}(x') d \nug{g_\eps}{y}(y')
\end{align*}
together with its derivatives belong to $L^\infty(\Sigma^2)$ and the diagonal value vanishes.

Finally, we show that the contribution to the circle average from $M_g$ in \eqref{GdM} is smooth. To this end, we use the setting of Proposition 2.2. in \cite{GRV16}, where the resolvent of the hyperbolic Laplacian $\Delta_\D$ was used as a parametrix for the resolvent of $\Delta_g$. Following Section 2.4. of \cite{GRV16}, we write the integral kernel of the resolvent of $\Delta_\D$ as
\begin{align*}
G_\D(\lambda;x,y) &= F_\lambda(d_\D(x,y))\,,
\end{align*}
where $\lambda \mapsto F_\lambda(r)$ is holomorphic for $r > 0$ and 
\begin{align*}
F_\lambda(r) = \mathcal{O}(\log r)
\end{align*}
as $r \to 0$ for all $\lambda \in B(0,1/4) \subset \C$ and
\begin{align*}
F_0(r) &= - \log r + m(r^2)
\end{align*}
for some smooth function $m$.

Let $\chi$ be a smooth bump function around $x$ and let $\tilde \chi$ be slightly larger bump function such that $\tilde \chi \chi = \chi$. We assume the support of $\chi$ to be large enough to also contain the point $y$ (which we have assumed to be close to $x$ in the beginning). Let $L_g(\lambda)$ be the integral operator with integral kernel $F_\lambda(d_g(x,y))$. Then, in Proposition 2.2. of \cite{GRV16} it was shown that
\begin{align}\label{parametrix}
\chi R_g(\lambda) \chi &= \chi L_g(\lambda) \chi - \chi R_g(\lambda) [\Delta_g, \tilde \chi] L_g(\lambda) \chi\,.
\end{align} 
Here it is important to observe that for any Sobolev function $f \in H^1(\Sigma,dv_g)$, the function $[\Delta_g,\tilde \chi]f$ is supported outside the support of $\chi$. Thus, on the right-hand side above the latter term is a smoothing operator that only depends on the integral kernels of $R_g(\lambda)$ and $L_g(\lambda)$ outside the diagonal. Thus, when we restrict to functions supported in $\opn{supp}(\chi)$, the integral kernel $F_\lambda(d_g(x,y))$ approximates $G_g(x,y)$ up to some smooth function that also depends smoothly on the metric (Proposition \ref{ddotG}).

Taking a contour integral over a small circle around $\lambda=0$ of \eqref{parametrix} implies that near the diagonal $G_g(x,y)$ is given in terms of the diagonal values of $F_0(d_g(x,y))$ and the off-diagonal values of $G_g$ and $G_\D$. In the proof of Proposition 2.2. in \cite{GRV16} it was shown that the integral kernel of $\chi L_g(0) \chi$ is given by
\begin{align*}
\chi(x) \chi(y) F_0(d_g(x,y)) = -  \chi(x) \chi(y) \log d_g(x,y) + \chi(x) \chi(y) m(d_g(x,y)^2)
\end{align*}
for some smooth function $m$. Now, for $x,y \in \opn{supp}(\chi)$, the function $M_{g_\eps}$ appearing in Equation \eqref{GdM} satisfies
\begin{align*}
\partial_\eps^n|_0 \int M_{g_\eps}(x',y') d \nug{g_\eps}{x}(x') d \nug{g_\eps}{y}(y') &= \partial_\eps^n|_0 \int m(d_{g_\eps}(x,y)^2) d \nug{g_\eps}{x}(x') d \nug{g_\eps}{y}(y') \\
& \quad + \partial_\eps^n|_0 \int G_{g_\eps}(x',z) [\Delta_g,\tilde \chi](z) G_\D(z,y') d \nug{g_\eps}{x}(x') d \nug{g_\eps}{y}(y')\,.
\end{align*}
We do not have to care about the circle-average since the integrands are smooth. Since $m$ is smooth and $\eps \mapsto d_{g_\eps}(x,y)^2$ is smooth when $x$ and $y$ are close to each other, it follows that the above expression defines a smooth function in $(x,y)$. By combining this with Equations \eqref{lndgca} and \eqref{lndgca2}, we are done.
\end{proof}
The result of Proposition \ref{circleaverage} generalizes to arbitrary perturbations of the metric.

\begin{corollary}\label{corollary}
The statement of Proposition \ref{circleaverage} holds for all $f \in \sectt{}{}$.
\end{corollary}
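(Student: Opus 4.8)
The plan is to reduce the general case $f \in \sectt{}{}$ to the trace-free case already settled in Proposition \ref{circleaverage} by peeling off a Weyl transformation, and then to exploit the fact that the Green function transforms under Weyl rescalings only by a \emph{smooth} kernel. The only genuinely new ingredient compared with Proposition \ref{circleaverage} is the trace (i.e.\ Weyl) direction, which is exactly the direction that produces nonvanishing derivatives of the volume form.

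First I would record the Weyl covariance of the Green function. Since in two dimensions $\Delta_{e^\varphi g} = e^{-\varphi}\Delta_g$ and $\delta_{e^\varphi g}(x,y) = e^{-\varphi(x)}\delta_g(x,y)$, inserting $e^\varphi g$ into the Poisson equation \eqref{poisson} and multiplying by $e^\varphi$ gives $\frac{1}{2\pi}\Delta_g G_{e^\varphi g}(x,y) = -\delta_g(x,y) + \frac{e^{\varphi(x)}}{v_{e^\varphi g}(\Sigma)}$. Subtracting \eqref{poisson} for $g$ shows that $w_\varphi(x,y) := G_{e^\varphi g}(x,y) - G_g(x,y)$ solves $\frac{1}{2\pi}\Delta_g^{(x)} w_\varphi(\cdot,y) = \frac{e^{\varphi}}{v_{e^\varphi g}(\Sigma)} - \frac{1}{v_g(\Sigma)}$, whose right-hand side is smooth in $x$ and independent of $y$. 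Elliptic regularity, together with the symmetry of $G$ and the mean-zero normalizations fixing the remaining additive constant, then gives $w_\varphi \in C^\infty(\Sigma^2)$, depending smoothly on $\varphi$ as well. In particular, Weyl rescalings leave the singular part of $G_g$ unchanged.

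Next I would factor the family. Setting $\phi_\eps := \tfrac{1}{2}\log(\det g_\eps/\det g)$ and $\hat g_\eps := e^{-\phi_\eps} g_\eps$, one has $g_\eps = e^{\phi_\eps}\hat g_\eps$ with $\phi_0 = 0$, while $\det\hat g_\eps = \det g$ is $\eps$-independent and $\partial_\eps|_0 \hat g_\eps$ is trace-free (this is the splitting of Lemma \ref{tangent_decomp}(1) realized at the level of the family). Consequently $\partial_\eps^k\sqrt{\det\hat g_\eps} = 0$ for all $k$, which is precisely the property used in the proofs of Propositions \ref{ddotG} and \ref{circleaverage}: iterating the resolvent formula \eqref{resolvent_formula} for $\hat g_\eps$ produces operators $\partial_\eps^k\Delta_{\hat g_\eps} = \tfrac{1}{\sqrt{\det g}}\partial_\alpha(\sqrt{\det g}\,(\partial_\eps^k \hat g_\eps^{\alpha\beta})\partial_\beta)$, which are divergence-form, second order, with smooth coefficients and no zeroth-order term, so no volume-form derivatives ever enter. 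Hence the arguments of those two propositions apply to $\hat g_\eps$ (the near-diagonal analysis in Proposition \ref{circleaverage} is already carried out for a general local representation $\psi_\eps^*(e^{\varphi_\eps}|dz|^2)$, and the only extra feature here is that the resolvent iteration contains more terms, all of the same type), giving that the circle-averaged derivatives of $G_{\hat g_\eps}$ are bounded, smooth off the diagonal with all derivatives, and smooth on the diagonal. Finally I would assemble the two pieces: writing $G_{g_\eps} = G_{\hat g_\eps} + w_{\phi_\eps}$ and averaging over the $g_\eps$-circles, the $w_{\phi_\eps}$-contribution is a circle average of a jointly smooth kernel, hence contributes a smooth function of $(x,y)$ up to $o_\delta(1)$ with all $\eps$-derivatives controlled, while for $G_{\hat g_\eps}$ one converts the $g_\eps$-circles to $\hat g_\eps$-circles via the conformal rescaling of radii $\partial B_{g_\eps}(x,\delta) \approx \partial B_{\hat g_\eps}(x, e^{-\phi_\eps(x)/2}\delta)$, which only shifts the near-diagonal logarithm by the smooth quantity $-\tfrac{1}{2}\phi_\eps(x)$.

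The main obstacle I anticipate is exactly this last circle-average bookkeeping: one must control \emph{simultaneously} the smooth change of the Green kernel under $e^{\phi_\eps}$ and the deformation of the geodesic circles $\partial B_{g_\eps}(x,\delta)$, verifying that all $\eps$-derivatives stay bounded and the diagonal values remain smooth. This amounts to reusing the delicate $o_\delta(1)$ estimates of Proposition \ref{circleaverage} and Appendix \ref{appendix} — in particular the $\tfrac{1}{2}\varphi_\eps$ terms appearing in \eqref{lndgca} and \eqref{lndgca2} — now driven by the globally, rather than locally, defined conformal factor $\phi_\eps$.
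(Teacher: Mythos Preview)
Your proposal is correct and follows essentially the same strategy as the paper: factor $g_\eps = e^{\phi_\eps}\hat g_\eps$ into a Weyl part and a volume-preserving part, invoke Proposition \ref{circleaverage} (or its argument) for $\hat g_\eps$, and observe that the Weyl part only shifts $G$ by a smooth kernel. The one notable difference is that the paper uses the explicit integral formula
\[
G_{e^\varphi g}(x,y) = G_g(x,y) - \tfrac{1}{v_{e^\varphi g}}\!\int G_g(x,z)e^{\varphi}dv_g - \tfrac{1}{v_{e^\varphi g}}\!\int G_g(z,y)e^{\varphi}dv_g + \tfrac{1}{v_{e^\varphi g}^2}\!\int G_g(z,z')e^{\varphi+\varphi'}dv_g
\]
in place of your elliptic-regularity argument; this formula makes the joint smoothness of $w_\varphi$ in $(x,y)$ and, crucially, its smooth dependence on $\varphi$ (hence on $\eps$) immediate, whereas your abstract argument leaves the parameter-smoothness as an implicit step. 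Either route works and the remaining circle-average bookkeeping you flag is handled exactly as in Proposition \ref{circleaverage}.
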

\begin{proof}
If $\opn{tr}_g(f) \neq 0$, then $g_\eps = e^{\varphi_\eps} \tilde g_\eps$, where $\varphi_\eps = \tfrac{\eps}{2} \opn{tr}_g(f) + \mathcal{O}(\eps^2)$, $\tilde g_\eps^{zz} = \eps (f^{\opn{tf}})^{zz}$, $\tilde g_\eps^{z \bar z} = g^{z \bar z}$. The Green function has the following transformation property under Weyl transformations $g \mapsto e^\varphi g$
\begin{align}\label{green_weyl}
G_{e^\varphi g}(x,y) &= G_g(x,y) - \frac{1}{v_{e^\varphi g}(\Sigma)}\int_\Sigma G_g(x,z) e^{\varphi(z)} dv_{ g}(z) - \frac{1}{v_{e^\varphi g}(\Sigma)} \int_\Sigma G_g(z,y) e^{\varphi(z)} dv_g(z) \nonumber \\
& \quad + \frac{1}{v_{e^\varphi g}(\Sigma)^2}\int_{\Sigma^2} G_g(z,z') e^{\varphi(z) + \varphi(z')}dv_{g}(z,z')\,.
\end{align}
We plug the above formula in the case of $G_{e^{\varphi_\eps} \tilde g_\eps}$ into the circle average in the statement of Proposition \ref{circleaverage}. The derivatives
\begin{align*}
&\partial_\eps^n|_0 \int G_{\tilde g_\eps}(x',y') d \nug{\tilde g_\eps}{x}(x') d \nug{\tilde g_\eps}{y}(y') \,, \\
&\partial_\eps^n|_0 \int G_{\tilde g_\eps}(x',z) d\nug{\tilde g_\eps}{x}(x') e^{\varphi_\eps(z)} dv_{\tilde g_\eps}(z)
\end{align*}
can be treated similarly as we did in Proposition \ref{circleaverage}, because $\varphi_\eps$ depends smoothly on $\eps$. The $\eps$-derivatives of the last term in \eqref{green_weyl} are well-defined and independent of $x$ and $y$. Thus, the claim follows.
\end{proof}

\section{Variation of correlation functions}\label{section_correlation}

In this section we show that the derivatives of the LCFT correlation functions with respect to the metric exist, i.e. we study the map
\begin{align*}
f \mapsto \partial_\eps^n|_0 \langle \prod_j \Vg{g_\eps}{\alpha_j}(x_j) \rangle_{g_\eps}\,,
\end{align*}
where $g_\eps^{\alpha \beta} = g^{\alpha \beta} + \eps f^{\alpha \beta}$ for $f \in \sectt{}{}$. In this section we show that the derivatives exist. As an application of this, in the next section, we consider variations of the form $g_\eps^{\alpha \beta} = g^{\alpha \beta} + \sum_{k=1}^n \eps_k f_k^{\alpha \beta}$, where $f_k \in \sectt{}{}$ for each $k$, and the supports of the $f_k$'s are mutually disjoint. We show that the derivatives can be expressed in the form
\begin{align}\label{Tmunu}
(4\pi)^n  \prod_{k=1}^n \partial_{\eps_k}|_0 \langle \prod_j \Vg{g_\eps}{\alpha_j}(x_j) \rangle_{g_\eps} &= \int_{\Sigma^n} \prod_{k=1}^n f_k^{\mu_k \nu_k}(z_k) \langle \prod_{k=1}^n T_{\mu_k \nu_k}(z_k) \prod_j \Vg{g}{\alpha_j}(x_j) \rangle_g dv_g(\bz)
\end{align}
for some functions $\langle \prod_{k=1}^n T_{\mu_k \nu_k}(z_k) \prod_j \Vg{g}{\alpha_j}(x_j) \rangle_g$ that are smooth in the region of non-coinciding points. The object $T_{\mu \nu}$ is called the \emph{stress-energy tensor} (SE-tensor), even though precisely speaking for us it is only defined as a notation in the sense of \eqref{Tmunu}.

We write the regularized correlation function as
\begin{align}
\langle \prod_j \Vgd{g_\eps}{\alpha_j}(x_j) \rangle_{g_\eps,\delta} &= Z_{\opn{GFF}}(\Sigma,g_\eps) \int_\R \E_{g_\eps} [F_{g_\eps,\delta}(X+c)] \, dc\,,
\end{align}
where
\begin{equation}
F_{g,\delta} (X) = \prod_j \Vgd{g}{\alpha_j}(x_j) e^{- \frac{Q}{4\pi} \int K_g X_{g,\delta} dv_g - \mu \int V_{g,\delta}^\gamma dv_g }\,.
\end{equation}
Then we apply the change of covariance formula (see Proposition 9.2.1. in \cite{GJ})
\begin{align}\label{variation_of_gaussian}
\partial_\eps \E_{g_\eps} [	 F_{g,\delta}(X)] &= \hf \int_{\Sigma^2} \partial_\eps \big( G_{g_\eps}(z,y) \big) \E_{g_\eps} \big[ \tfrac{\delta^2 F_{g,\delta}(X)}{\delta X(z) \delta X(y)} \big] dv_g(z,y)
\end{align}
to get
\begin{equation}\label{def}
\partial_\eps \big( \E_{g_\eps} F_{g_\eps, \delta}(X) \big) = \hf \int_{\Sigma^2} \partial_\eps \big( G_{g_\eps}(z,y) \big) \E_{g_\eps} \big[ \tfrac{\delta^2 F_{g_\eps,\delta}(X)}{\delta X(z) \delta X(y)} \big] dv_g(z,y)+ \E_{g_\eps} [ \partial_\eps F_{g_\eps,\delta}(X)]\,.
\end{equation}
By a direct differentiation, the latter term on the right-hand side equals
\begin{align*}
\E_{g_\eps} [ \partial_\eps F_{g_\eps,\delta}(X)] &= - \tfrac{Q}{4\pi}   \int  \partial_\eps \big( K_{g_\eps}(z) \big) \E_{g_\eps} \big[ X(z) F_{g_\eps,\delta}(X) \big] dv_g(z)  \\
& \quad + \sum_j \alpha_j \int \E_{g_\eps} \big[ X(x') F_{g_\eps,\delta}(X) \big] \partial_\eps \nug{g_\eps}{x_j}(x') d^2 x' \\
& \quad - \mu \gamma \int \E_{g_\eps} \big[ X(z') \Vgd{g_\eps}{\gamma}(z) F_{g_\eps,\delta}(X) \big] \partial_\eps \nug{g_\eps}{z}(z') d^2z' dv_g(z)\,.
\end{align*}
Next, by applying Gaussian integration by parts (see Equation (9.1.32) in \cite{GJ}) to the expected values on the right-hand side, we get
\begin{align*}
\E_{g_\eps} [ \partial_\eps F_{g_\eps,\delta}(X)] &= - \tfrac{Q}{4\pi} \int  \partial_{\eps} K_{g_\eps}(z) G_{g_\eps}(z,y) \E_{g_\eps} \big[ \tfrac{\delta}{\delta X(y)} F_{g_\eps,\delta}(X) \big] dv_g(z) d^2y \\
& \quad + \sum_j \alpha_j \int G_{g_\eps}(x',y') \E_{g_\eps} \big[ \tfrac{\delta}{\delta X(y')} F_{g_\eps,\delta}(X) \big] \big(\partial_\eps \nug{g_\eps}{x_j}(x')\big) d^2x' d^2y' \\
& \quad - \mu \gamma \int G_{g_\eps}(z',y') \E_{g_\eps} \big[  \tfrac{\delta}{\delta X(y')} \big( \Vgd{g_\eps}{\gamma}(z) F_{g_\eps,\delta}(X) \big) \big] \big(\partial_\eps \nug{g_\eps}{z}(z')\big) d^2z' dv_g(z) d^2y'\,.
\end{align*}
Next we compute the functional derivatives
\begin{align*}
\tfrac{\delta}{\delta X(y)} \Vgd{g}{\alpha}(x) &= \tfrac{\delta}{\delta X(y)} \delta^{\frac{\alpha^2}{2}} e^{\alpha(c+X_g^\delta(x))} = \alpha \tfrac{\delta X_g^\delta (x)}{\delta X(y)} \Vgd{g}{\alpha}(x) = \alpha \nug{g}{x}(y) \Vgd{g}{\alpha}(x) \,, \\
\tfrac{\delta}{\delta X(y)} e^{- \tfrac{Q}{4\pi} \int K_g X_g^\delta dv_g} &= - \tfrac{Q}{4\pi} \tfrac{\delta}{\delta X(y)} \big( \int K_g X_g^\delta d v_g \big) e^{- \tfrac{Q}{4\pi} \int K_g X_g^\delta dv_g} = - \tfrac{Q}{4\pi} \int K_g (z) \nug{g}{z}(y) dv_g(z) e^{- \tfrac{Q}{4\pi} \int K_g X_g^\delta dv_g}\,, \\
\tfrac{\delta}{\delta X(y)} e^{- \mu \int \Vgd{g}{\gamma} dv_g} &= - \mu \tfrac{\delta}{\delta X(y)} \big( \int \Vgd{g}{\gamma}(z) dv_g(z) \big) e^{- \mu \int \Vgd{g}{\gamma} dv_g} = - \mu \gamma \int  \Vgd{g}{\gamma} (z) \nug{g}{z}(y) dv_g(z) e^{- \mu \int \Vgd{g}{\gamma} dv_g}\,.
\end{align*}

Now
\begin{align}
\frac{\delta}{\delta X(y)} F_{g,\delta}(X) &= \big( \sum_j \alpha_j \nug{g}{x_j}(y) - \tfrac{Q}{4\pi} \int K_g(z) \nug{g}{z}(y) dv_g(z) - \mu \gamma \int \Vgd{g}{\gamma}(z) \nug{g}{z}(y) dv_g(z) \big)   F_{g,\delta}(X)\,,
\end{align}
and
\begin{align}\label{ddF}
&\frac{\delta^2}{\delta X(y_1) X(y_2)} F_{g,\delta}(X) \\
&=  \prod_{k=1}^2  \Big(  \sum_j \alpha_j \nug{g}{x_j}(y_k) - \tfrac{Q}{4\pi} \int K_g(z_k) \nug{g}{z_k}(y_k) dv_g(z_k) - \mu \gamma \int \Vgd{g}{\gamma}(z_k) \nug{g}{z_k}(y_k) dv_g(z_k) \Big)  F_{g,\delta}(X) \nonumber \\
& \quad -\mu \gamma^2 \int  \Vgd{g}{\gamma}(z) \nug{g}{z}(y_1) \nug{g}{z}(y_2) dv_g(z)  F_{g,\delta}(X)\,. \nonumber
\end{align}

After plugging everything into \eqref{def} we get
\begin{align}\label{depsV}
&\partial_\eps \langle \prod_j \Vgd{g_\eps}{\alpha_j}(x_j) \rangle_{g_\eps} \\
&= \sum_{i,j} \tfrac{\alpha_i \alpha_j}{2} \partial_\eps \Big( \int G_{g_\eps}(y_1',y_2') d \nug{g_\eps}{x_i}(y_1') d \nug{g_\eps}{x_j}(y_2') \Big) \langle \prod_j \Vgd{g_\eps}{\alpha_j}(x_j) \rangle_{g_\eps} \nonumber \\
& \quad - \tfrac{Q}{4\pi} \sum_j \alpha_j \int_\Sigma \partial_\eps \Big( \int G_{g_\eps}(y_1',y_2') d \nug{g_\eps}{x_j}(y_1') d\nug{g_\eps}{z}(y_2') K_{g_\eps}(z)  \Big) dv_g(z) \langle \prod_j \Vgd{g_\eps}{\alpha_j}(x_j) \rangle_{g_\eps} \nonumber \\
& \quad - \mu \gamma \sum_j \alpha_j \int_\Sigma \partial_\eps \Big( \int G_{g_\eps}(y_1', y_2') d \nug{g_\eps}{x_j}(y_1') d \nug{g_\eps}{z}(y_2') \Big) \langle \Vgd{g_\eps}{\gamma}(z) \prod_j \Vgd{g_\eps}{\alpha_j}(x_j) \rangle_{g_\eps} dv_g(z) \nonumber \\
& \quad + \tfrac{Q^2}{2(4\pi)^2} \int_{\Sigma^2} \partial_\eps \Big( \int G_{g_\eps}(y_1',y_2') d\nug{g_\eps}{z_1}(y_1') d \nug{g_\eps}{z_2}(y_2') K_{g_\eps}(z_1) K_{g_\eps}(z_2)  \Big) dv_g(z_1,z_2) \langle \prod_j \Vgd{g_\eps}{\alpha_j}(x_j)\rangle_{g_\eps} \nonumber \\
& \quad + \tfrac{\mu \gamma Q}{4\pi} \int_{\Sigma^2} \partial_\eps \Big( \int G_{g_\eps}(y_1',y_2')  d\nug{g_\eps}{z_1}(y_1') d \nug{g_\eps}{z_2}(y_2') K_{g_\eps}(z_1)  \Big) \langle \Vgd{g_\eps}{\gamma}(z_2) \prod_j \Vgd{g_\eps}{\alpha_j}(x_j) \rangle_{g_\eps} dv_g(z_1,z_2) \nonumber \\
& \quad + \tfrac{\mu^2\gamma^2}{2} \int_{\Sigma^2} \partial_\eps \Big( \int G_{g_\eps}(y_1',y_2') d \nug{g_\eps}{z_1}(y_1') d \nug{g_\eps}{z_2}(y_2') \Big) \langle \Vgd{g_\eps}{\gamma}(z_1) \Vgd{g_\eps}{\gamma}(z_2) \prod_j \Vgd{g_\eps}{\alpha_j}(x_j) \rangle_{g_\eps} dv_g(z_1,z_2) \nonumber \\
& \quad - \tfrac{\mu \gamma^2}{2} \int_\Sigma  \partial_\eps \Big(\int G_{g_\eps}(y_1',y_2') d \nug{g_\eps}{z}(y_1') d \nug{g_\eps}{z}(y_2') \Big) \langle \Vgd{g_\eps}{\gamma}(z) \prod_j \Vgd{g_\eps}{\alpha_j}(x_j) \rangle_{g_\eps} dv_g(z)\,. \nonumber
\end{align}
The terms containing $\partial_\eps G_{g_\eps}$ come from the first term in \eqref{def} and terms containing $\partial_\eps \nu^\delta_{g_\eps}$ from the second term in \eqref{def}. With this formula it is easy to show the following

\begin{proposition}\label{se_tensor}
Let $g \in \meth$ and $f \in \sectt{}{}$. Define $g_\eps^{\alpha \beta} = g^{\alpha \beta} + \eps f^{\alpha \beta}$. Then the limit
\begin{align*}
\lim_{\delta \to 0}  \partial_\eps^n|_0 \langle \prod_j \Vgd{g_\eps}{\alpha_j}(x_j) \rangle_{g_\eps,\delta}
\end{align*}
exists.
\end{proposition}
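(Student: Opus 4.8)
The plan is to iterate the first-variation formula \eqref{depsV} and then pass to the $\delta\to 0$ limit by dominated convergence, using the boundedness and diagonal regularity of the circle-averaged Green-function variations together with the integrability supplied by the KPZ identity. Write $C_\delta(\eps)=\langle \prod_j \Vgd{g_\eps}{\alpha_j}(x_j)\rangle_{g_\eps,\delta}$. The right-hand side of \eqref{depsV} is a finite sum of terms, each an integral over insertion points of a coefficient — built from $\eps$-derivatives of circle-averaged Green functions $\int G_{g_\eps}\,d\nug{g_\eps}{\cdot}\,d\nug{g_\eps}{\cdot}$ and from a product of curvature factors $K_{g_\eps}$ — multiplied by a regularized correlator carrying zero, one or two additional $\Vgd{g_\eps}{\gamma}$ insertions (and the smooth factor $Z_{\opn{GFF}}(\Sigma,g_\eps)$). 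The key observation is that each such correlator is again of the same type as $C_\delta(\eps)$, only with more $\gamma$-insertions, so the same derivation that produced \eqref{depsV} applies to it. Differentiating $n$ times thus amounts to applying \eqref{depsV} recursively and expanding by the Leibniz rule; for each fixed $\delta>0$ this is legitimate because the regularized covariance depends smoothly on $\eps$ and \eqref{variation_of_gaussian} may be iterated. The outcome is a finite sum of terms of the schematic form
\begin{align*}
\int_{\Sigma^{m}} \Big(\prod_{\ell} \partial_\eps^{k_\ell}|_0\, A_\ell^\delta\Big)\, \langle \prod_{i=1}^{m} \Vgd{g}{\gamma}(z_i) \prod_j \Vgd{g}{\alpha_j}(x_j)\rangle_{g,\delta}\, dv_g(z_1,\dots,z_m)\,,
\end{align*}
where $\sum_\ell k_\ell = n$, $m\le n$, and each $A_\ell^\delta$ is either a circle-averaged Green function or a product of curvature factors.

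Next I would show that the coefficients converge boundedly as $\delta\to 0$. By Corollary \ref{corollary} (the extension of Proposition \ref{circleaverage} to all $f\in\sectt{}{}$), every $\partial_\eps^{k}|_0$ of a circle-averaged Green function converges to a function in $L^\infty(\Sigma^2)$ that is smooth off the diagonal and has a smooth diagonal restriction; in particular these factors are uniformly bounded in $\delta$ on all of $\Sigma^m$, including on diagonal contributions such as the self-energy term $\int G_{g_\eps}\,d\nug{g_\eps}{z}\,d\nug{g_\eps}{z}$ appearing in \eqref{depsV}. The curvature factors are handled by \eqref{K_var} and the locality of $K_g$: their $\eps$-derivatives are $\delta$-independent smooth bounded functions. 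Finally $\partial_\eps^{k}|_0 Z_{\opn{GFF}}(\Sigma,g_\eps)$ is a $\delta$-independent constant. Hence in every term the product of coefficients converges pointwise off the diagonals and is bounded by a $\delta$-independent constant.

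It remains to pass the limit under the insertion integrals. The regularized correlators $\langle \prod_i \Vgd{g}{\gamma}(z_i)\prod_j \Vgd{g}{\alpha_j}(x_j)\rangle_{g,\delta}$ are nonnegative and converge, as $\delta\to 0$, to the LCFT correlators at non-coinciding points. Their total mass is controlled by the KPZ identity (Lemma \ref{kpz}): integrating out the $\gamma$-insertions one at a time expresses the integral of such a correlator over $\Sigma^m$ as an explicit finite constant times $\langle \prod_j \Vgd{g}{\alpha_j}(x_j)\rangle_{g,\delta}$, which converges as $\delta\to 0$. Thus the integrands are nonnegative, converge pointwise, and have convergent integrals, so Scheffé's lemma yields their $L^1(\Sigma^m,dv_g^{\otimes m})$-convergence; combining this with the uniformly bounded, pointwise-convergent coefficients (through a standard generalized dominated convergence argument) lets me pass to the limit in each term. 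Summing the finitely many terms gives the existence of $\lim_{\delta\to 0}\partial_\eps^n|_0 C_\delta$.

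The main obstacle is the interaction between the Green-function coefficients and the integrability of the correlators near collisions of insertion points. The correlators develop integrable singularities as $z_i\to z_{i'}$ and as $z_i\to x_j$, and an unbounded coefficient could spoil integrability; it is precisely the boundedness and diagonal regularity furnished by Corollary \ref{corollary} — rather than mere off-diagonal smoothness from Propositions \ref{dotG} and \ref{ddotG} — that keeps the products integrable and lets the KPZ mass bound take effect. A secondary point, routine but necessary, is the justification of the iterated differentiation under the expectation and the $c$-integral for each fixed $\delta$, which follows from the smooth $\eps$-dependence of the regularized covariance and the Seiberg-bound control of the $c$-integral.
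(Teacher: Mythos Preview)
Your argument is correct, but it differs from the paper's in how the general perturbation is handled. The paper first restricts to perturbations that keep $g_\eps$ hyperbolic (i.e.\ $f\in\sectt{tf}{m}$), so that $K_{g_\eps}\equiv -2$ and all curvature derivatives in \eqref{depsV} drop out; the iteration then only involves the circle-averaged Green-function factors controlled by Corollary~\ref{corollary}. For a general $f\in\sectt{}{}$ the paper does \emph{not} iterate \eqref{depsV} directly: instead it writes $g_\eps=e^{\varphi_\eps}h_\eps$ with $h_\eps\in\meth$ and $\varphi_\eps$ smooth in $\eps$, and invokes the Weyl anomaly \eqref{intro_weyl} to reduce the $\eps$-derivative to derivatives of $e^{cA(\varphi_\eps,h_\eps)-\sum_j\Delta_{\alpha_j}\varphi_\eps(x_j)}$ (manifestly smooth in $\eps$) times $\langle\prod_j\Vgd{h_\eps}{\alpha_j}(x_j)\rangle_{h_\eps}$, which falls under the hyperbolic case already treated. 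Your route is more direct: you keep the curvature terms in the recursion and dispose of them via the locality of $K_g$ (so $\partial_\eps^kK_{g_\eps}$ is a $\delta$-independent smooth bounded function), and you make the $\delta\to 0$ passage explicit through the KPZ identity together with Scheff\'e's lemma, which the paper leaves implicit. Either organization works; the paper's buys cleanliness by exploiting the CFT symmetry, yours buys self-containedness at the cost of a little more bookkeeping. One small slip: each application of \eqref{depsV} can introduce up to two extra $\gamma$-insertions, so your bound should read $m\le 2n$ rather than $m\le n$; this does not affect the argument.
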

\begin{proof}
First consider a case where the perturbations $g_\eps$ all are hyperbolic metrics (such perturbations are generated by $f \in \sectt{tf}{m}$). Then the scalar curvature $K_{g_\eps}$ stays constant and thus the derivatives $\partial_\eps K_{g_\eps}$ vanish. Then from Corollary \ref{corollary} and Equation \eqref{depsV} it follows that the derivative $\partial_\eps  \langle \prod_j \Vgd{g_\eps}{\alpha_j}(x_j) \rangle_{g_\eps}$ consist of terms of the form
\begin{align*}
\int_{\Sigma^2} \mathcal{F}^\delta_{g_\eps}(z_1,z_2;\bx) \langle \Vgd{g_\eps}{\gamma}(z_1) \Vgd{g_\eps}{\gamma}(z_2) \prod_j \Vgd{g_\eps}{\alpha_j}(x_j) \rangle_{g_\eps} dv_g(z_1,z_2)\,,
\end{align*}
(also terms with $0$ or $1$ $\Vgd{g_\eps}{\gamma}$-fields appear) where $\partial_\eps^k|_0 \mathcal{F}^\delta_{g_\eps}$ is uniformly bounded in $\delta$ and converges to a bounded limit as $\delta \to 0$. Thus, by iterating \eqref{depsV}, we see that the limit
\begin{align}
\lim_{\delta \to 0} \partial_\eps^n|_0 \langle \prod_j \Vgd{g_\eps}{\alpha_j}(x_j) \rangle_{g_\eps,\delta}
\end{align}
exists for all perturbations $f$ that keep the metric hyperbolic (i.e. $f \in \sectt{tf}{m}$).

For general perturbation $f \in \sectt{}{}$, we apply the Weyl anomaly formula. For any metric there exists a unique $\varphi$ such that $e^\varphi g$ is hyperbolic. Thus, there exists a family of functions $(\varphi_\eps)_{\eps \geq 0}$ depending smoothly on $\eps$ such that $g_\eps = e^{\varphi_\eps} h_\eps$ where $h_\eps \in \meth$. We get
\begin{align*}
\partial_\eps^n|_0 \langle \prod_j \Vgd{g_\eps}{\alpha_j}(x_j) \rangle_{g_\eps} &= \partial_\eps^n|_0 \Big( e^{cA(\varphi_\eps,h_\eps) - \sum_j \Delta_{\alpha_j} \varphi_\eps(x_j)} \langle \prod_j \Vgd{h_\eps}{\alpha_j}(x_j) \rangle_{h_\eps} \Big)\,.
\end{align*}
This derivative exists since from Corollary \ref{corollary} it follows that the $\eps$-derivatives of $\langle \prod_j \Vgd{h_\eps}{\alpha_j}(x_j) \rangle_{h_\eps}$ exist, and $\varphi_\eps$ depends smoothly on $\eps$.
\end{proof}

\subsection{Variations of the moduli}\label{moduli}

Let $f_k \in \sectt{tf}{m}$ for $k=1,\hdots,n$ and denote $\eps = (\eps_1,\hdots,\eps_n)$. Consider the perturbed metric
\begin{align}\label{vm}
g_\eps^{zz} = \sum_{k=1}^n \eps_k f_k^{zz}\,, \quad g_\eps^{z \bar z} = g^{z \bar z}\,.
\end{align}
We have shown that the partial derivative
\begin{align*}
\prod_{k=1}^n \partial_{\eps_k}|_0 \langle \prod_j \Vg{g}{\alpha_j}(x_j) \rangle_{g_\eps}
\end{align*}
exists. Since $\sectt{tf}{m}^{\otimes n}$ is finite-dimensional and $f_k \in \sect{tf}{m}$ in \eqref{vm}, we can write
\begin{align}\label{tmv}
(4\pi)^n \prod_{k=1}^n \partial_{\eps_k}|_0 \langle \prod_j \Vg{g}{\alpha_j}(x_j) \rangle_{g_\eps} &= \int_{\Sigma^n} \prod_{k=1}^n f_k^{z_k z_k}(z_k) \langle \prod_{k=1}^n T_m(z_k) \prod_j \Vg{g}{\alpha_j}(x_j) \rangle_g dv_g(\bz)\,,
\end{align}
where $\langle \prod_{k=1}^n T_m(z_k) \prod_j \Vg{g}{\alpha_j}(x_j) \rangle_g$ is a function satisfying
\begin{align}\label{Tm_orthog}
\int_\Sigma f^{z_i z_i}(z_i) \langle \prod_{k=1}^n T_m(z_k) \prod_j \Vg{g}{\alpha_j}(x_j) \rangle_g dv_g(z_i) = 0
\end{align}
for all $f \in \sectt{tf}{d}$. Concretely, we have
\begin{align*}
\langle \prod_{k=1}^n T_m(z_k) \prod_j \Vg{g}{\alpha_j}(x_j) \rangle_g &= (4\pi)^n \sum_{i_1,\hdots,i_n=1}^{3 \gen - 3} \prod_{l=1}^n \eta_{i_l, z_{i_l}, z_{i_l}}(z_{i_l}) \prod_{k=1}^n \partial_{\eps_k}|_0 \langle \prod_j \Vg{g_{i,\eps} }{\alpha_j}(x_j) \rangle_{g_{i,\eps}}\,,
\end{align*}
where $g_{i,\eps}^{zz} = \sum_{k=1}^n \eps_k \eta_{i_k}^{z_k z_k}$ and $(\eta_k)_{k=1}^{3\gen-3}$ is an orthonormal basis of $\sect{tf}{m}$. The object $T_m$ transforms as a tensor of order $(0,2)$, i.e. it has two lower indices $(T_m)_{\alpha \beta}$. We choose to not write these indices explicitly, because only the $(T_m)_{zz}$ will ever appear in our computations (the trace part $(T_m)_{z \bar z}$ vanishes, since all tensors $f \in \sectt{}{m}$ are traceless).

By using Equation \eqref{depsV} and Corollary \ref{corollary} we see that 
\begin{align}\label{Tm}
\langle \prod_{k=1}^n T_m(z_k) \prod_{j=1}^N \Vg{g}{\alpha_j}(x_j) \rangle_g &= \sum_{a=0}^{2n} \sum_{i_1,\hdots,i_n=1}^{3\gen-3}  \prod _{l=1}^n \eta_{i_l, z_{i_l} z_{i_l}}(z_{i_l}) \int_{\Sigma^a} F_{g,n}(\bx, \by) \langle \prod_{k=1}^a \Vg{g}{\gamma}(y_k) \prod_{j=1}^N \Vg{g}{\alpha_j}(x_j) \rangle_g dv_g(\by)
\end{align}
where $\by=(y_1,\hdots,y_a)$ and $F_{g,n}$ is bounded on $\Sigma^{a+N}$ and smooth in the region where all the $x_j$'s and $y_k$'s are disjoint. The derivative $\partial_{x_i} F_{g,n}$ behaves like $\frac{1}{x_i-y_k}$ as $x_i \to y_k$, which can be seen by studying the $\prod_k \partial_{\eps_k}|_0$ derivative of right-hand side of \eqref{lndgca} near the diagonal.

These properties enable us to show the following.
\begin{proposition}\label{Tm_smooth}
Denote $\bz = (z_1,\hdots,z_n)$ and $\bx = (x_1,\hdots,x_N)$. Then the function
\begin{align*}
(\bz,\bx) \mapsto \langle \prod_{k=1}^n T_m(z_k) \prod_{j=1}^N \Vg{g}{\alpha_j}(x_j) \rangle_g
\end{align*}
belongs to $C^\infty(\Sigma^n \times U^N)$, where
\begin{align*}
U_N &= \{ \bx = (x_1,\hdots,x_N) \subset \C^N : x_i \neq x_j \quad \forall i \neq j \}\,.
\end{align*}
\end{proposition}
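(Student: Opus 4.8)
The plan is to exploit the representation \eqref{Tm}, which displays the object as a finite sum of terms, each factorizing into a part depending on $\bz$ and a part depending on $\bx$. First I would observe that the entire $\bz$-dependence sits in the coefficients $\prod_{l=1}^n \eta_{i_l, z_{i_l}z_{i_l}}(z_{i_l})$, which are components of a fixed orthonormal basis of $\sect{tf}{m}$ and hence belong to $C^\infty(\Sigma^n)$. Smoothness in $\bz$ is therefore immediate, and since each summand is a product of a $\bz$-function and a $\bx$-function, joint smoothness on $\Sigma^n \times U_N$ reduces to showing that for each fixed $a \leq 2n$ the map
\begin{align*}
\bx \mapsto \int_{\Sigma^a} F_{g,n}(\bx,\by)\,\langle \prod_{k=1}^a \Vg{g}{\gamma}(y_k) \prod_{j=1}^N \Vg{g}{\alpha_j}(x_j)\rangle_g\, dv_g(\by)
\end{align*}
is $C^\infty$ on $U_N$.

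To analyse this I would make the $\bx$-dependence of the correlation function explicit through the Cameron--Martin/Girsanov shift underlying its construction: inserting $\Vg{g}{\alpha_j}(x_j)$ tilts the chaos by $e^{\gamma\alpha_j G_g(\cdot,x_j)}$ and produces explicit Green-function prefactors, so that near the partial diagonal $y_k = x_j$ the integrand behaves like $|y_k - x_j|^{-\gamma\alpha_j}$ times a factor smooth in $(\bx,\by)$. The Seiberg bound $\alpha_j < Q$ together with the multifractal integrability of Gaussian multiplicative chaos -- equivalently, the finiteness furnished by the KPZ identity of Lemma \ref{kpz} -- ensures that this kernel, and the collision kernels $|y_k - y_l|^{-\gamma^2}$, are integrable against $dv_g(\by)$; combined with the boundedness of $F_{g,n}$ this already gives continuity of the displayed map. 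Since the derivatives exist by Proposition \ref{se_tensor}, the content here is purely the regularity in $\bx$.

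The differentiations in $\bx$ are the delicate point. Naive differentiation under the integral is not permissible: each $\partial_{x_j}$ falls either on $F_{g,n}$, producing the singularity $\sim (x_j - y_k)^{-1}$ recorded after \eqref{Tm}, or on the Green-function prefactors, producing $\partial_{x_j}G_g(y_k,x_j) \sim (y_k-x_j)^{-1}$; in either case the pointwise integrand acquires the non-integrable behavior $|y_k - x_j|^{-\gamma\alpha_j - 1}$. The key observation is that this kernel moves rigidly with $x_j$: to leading order it is a function of $y_k - x_j$ with smooth coefficient, whose integrability matches the zeroth-order case. I would therefore handle the $x_j$-derivative through difference quotients, performing in the singular region $y_k \approx x_j$ a local change of variables $y_k \mapsto y_k + h$ that realigns the two nearby singularities before passing to the limit, exactly as one differentiates a Newtonian potential $\int \rho(y)|x-y|^{-s}\,dy$ under the integral despite the formally divergent differentiated kernel. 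After this realignment the difference quotients are dominated, uniformly in the increment, by the same integrable kernels as in the zeroth-order estimate, so dominated convergence yields existence of $\partial_{x_j}$ and continuity of the result; iterating produces all higher $\bx$-derivatives and gives membership in $C^\infty(\Sigma^n \times U_N)$.

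The main obstacle is precisely this last step: controlling the singularities that $x_j$-differentiation creates along the moving diagonal $y_k = x_j$. The resolution rests on two inputs established earlier -- the explicit near-diagonal form \eqref{lndgca} of the varied circle-averaged Green functions entering $F_{g,n}$ (via Corollary \ref{corollary}), and the Seiberg-bound integrability of the $\gamma$-insertion correlation functions -- together with the realignment device that keeps the differentiated kernel inside the integrability window. Collisions $y_k \to y_l$ do not involve $x_j$ and are absorbed into the already-available $\by$-integrability of the correlation functions, so they play no role in the $\bx$-regularity argument.
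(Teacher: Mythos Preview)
Your reduction to the $\bx$-smoothness of the integrals in \eqref{Tm} is correct, and you are right that the differentiated kernel $|y_k-x_j|^{-\gamma\alpha_j-1}$ is not integrable, so some device is needed. The gap is in the device you propose. The realignment trick for a Newtonian-type integral $\int \rho(y)\,|x-y|^{-s}\,dy$ works because the ``density'' $\rho$ is smooth and \emph{does not depend on $x$}: after the shift $y\mapsto y+h$ the singular kernel is frozen and the difference quotient lands entirely on $\rho$. Here the analogue of $\rho$ is the correlation function (or, after the Girsanov factorisation you describe, the remainder obtained by dividing out $e^{\gamma\alpha_j G_g(y_k,x_j)}$), and this remainder still depends on \emph{both} $y_k$ and $x_j$ through the shifted chaos $\exp\big(-\mu e^{\gamma c}\int e^{\gamma^2 G_g(\cdot,y_k)+\sum_l \gamma\alpha_l G_g(\cdot,x_l)}\,d\mathbf{G}^\gamma_g\big)$. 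After your shift $y_k\mapsto y_k+h$ you must control $R(y_k+h,x_j+h,\dots)-R(y_k,x_j,\dots)$, i.e.\ you need $R$ to be differentiable in $(y_k,x_j)$ with an integrable bound. But that is precisely the smoothness-of-correlation-functions statement you are trying to prove, so the argument is circular. The fusion estimates you implicitly rely on give only upper bounds on $|R|$, not the regularity of $R$.

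The paper's route is genuinely different: it invokes \cite{Oik}, where the divergent integral produced by $\partial_{x_i}$ is not tamed by a change of variables but by an \emph{algebraic identity}. One applies Gaussian integration by parts to $\partial_z\langle V^\gamma_g(z)\prod_j V^{\alpha_j}_g(x_j)\rangle_g$ and integrates the result over a small ball $B(x_i,r)$; this expresses the problematic term $\int \partial_{x_i}G_g(x_i,z)\langle V^\gamma_g(z)\cdots\rangle\,dv_g$ as a combination of a boundary integral on $\partial B(x_i,r)$ and double integrals of the form $\int \mathbf{1}_{B(x_i,r)}(z)\mathbf{1}_{B(x_i,r)^c}(z')\partial_z G_g(z,z')\langle V^\gamma_g(z)V^\gamma_g(z')\cdots\rangle\,dv_g(z,z')$, whose convergence is then checked using the two-point fusion estimate for $z\to z'$. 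Higher derivatives iterate this, producing products of such kernels that are handled by Lemma~3.1 of \cite{Oik}. So the missing idea is not an analytic realignment but this integration-by-parts cancellation mechanism; your proposal does not supply a substitute for it.
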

\begin{proof}
In \cite{Oik} the case $n=0$ on the Riemann sphere was proven. Generalizing the $n=0$ case to arbitrary compact Riemann surface is straightforward, since the proof relies on local behaviour of the correlation functions and on the KPZ-identity (Lemma \ref{kpz}). What was actually shown in \cite{Oik} is that functions of the form
\begin{align*}
\bx \mapsto \int_{\Sigma^n} F(\bx,\bz) \langle \prod_{k=1}^n \Vg{g}{\gamma}(z_k) \prod_{j=1}^N \Vg{g}{\alpha_j}(x_j) \rangle_g dv_g(\bz)\,,
\end{align*}
where $F: \Sigma^N \times \Sigma^n \to \C$ is a function with certain prescribed singularities, are smooth on $U_N$. The integrals in Equation \eqref{Tm} fall into this category in the case when the $\partial_{x_i}$ derivaties don't hit the $F_{g,n}$ (since $F_{g,n}$ is bounded), and thus the method developed in \cite{Oik} is strong enough to prove smoothness of these terms. In the proof, one observes that the derivatives
\begin{align*}
\partial_{x_i} \langle \prod_{k=1}^a \Vg{g}{\gamma}(y_k) \prod_{j=1}^N \Vg{g}{\alpha_j}(x_j) \rangle_g
\end{align*}
will produce singular integrals of the form
\begin{align*}
\int \partial_{x_i} G_g(x_i,y_k) \langle \prod_{k=1}^{a+1} \Vg{g}{\gamma}(y_k) \prod_{j=1}^{N} \Vg{g}{\alpha_j}(x_j) \rangle_g dv_g(\by)\,, 
\end{align*}
which are not absolutely integrable, due to the combination of the  $\frac{1}{x_i-y_k}$ singularity of the term $\partial_{x_i} G_g(x_i,y_k)$ and the behavior of the correlation function as $x_i \to y_k$. One then has to regularize and use identities derived from Gaussian integration by parts to produce cancellations, which then yield a well-defined limit once the regularization is removed.

The case when we have $\partial_{x_i} F_{g,n}$ can essentially be reduced to the previous case, since the singularity of $\partial_{x_i} F_{g,n}$ is of the type $\frac{1}{x_i-y_k}$, so in the end these terms behave the same way as the terms where $\partial_{x_i}$ acts on the correlation function. Thus, we may assume that the $\partial_{x_i}$-derivative acts on the correlation function, since the produced singularities are identical. For completeness we sketch the main part of the smoothness argument presented in \cite{Oik}.

Application of Gaussian integration by parts implies that (see Section 3.2. of \cite{ward} for detailed discussion on this)
\begin{align}\label{asdf}
\partial_{x_i} \langle \prod_k \Vg{g}{\gamma}(y_k) \prod_j \Vg{g}{\alpha_j}(x_j) \rangle_g &= \alpha_i \gamma \sum_k \partial_{x_i} G_g(x_i,y_k) \langle \prod_k \Vg{g}{\gamma}(y_k) \prod_j \Vg{g}{\alpha_j}(x_j) \rangle_g \\
& \quad + \alpha_i \sum_{j \neq i} \alpha_j \partial_{x_i} G_g(x_i,x_j) \langle \prod_k \Vg{g}{\gamma}(y_k) \prod_j \Vg{g}{\alpha_j}(x_j) \rangle_g \nonumber \\
& \quad - \alpha_i \mu \gamma \int_\Sigma \partial_{x_i} G_g(x_i,z) \langle \Vg{g}{\gamma}(z) \prod_k \Vg{g}{\gamma}(y_k) \prod_j \Vg{g}{\alpha_j}(x_j) \rangle_g dv_g(z)\,.\nonumber
\end{align}
The integral on the last line is not absolutely convergent (see the Remark at the end of Section 3.2. in \cite{ward}). Thus, we need to show that a regularized version of this integral can be expressed in terms of integrals that, even after removing the regularization, converge. We do not explicitly include the regularization in our notation, since its presence does not affect the logic of the argument.

The key is to observe that the divergent integral above is almost like a Cauchy transform
\begin{align*}
\int \frac{1}{x-z} \langle \Vg{g}{\gamma}(z) \prod_k \Vg{g}{\gamma}(y_k) \prod_j \Vg{g}{\alpha_j}(x_j) \rangle_g dv_g(z)
\end{align*}
evaluated at $x=x_i$. If the above expression depends holomorphically on $x$, then we could ``guess"
\begin{align*}
&\int \frac{1}{x_i-z} \langle \Vg{g}{\gamma}(z)\prod_k \Vg{g}{\gamma}(y_k) \prod_j \Vg{g}{\alpha_j}(x_j) \rangle_g dv_g(z) \\
&= \frac{1}{2 \pi i} \oint_{|x-x_i|=r} \frac{1}{x-x_i} \Big( \int \frac{1}{x-z} \langle \Vg{g}{\gamma}(z)\prod_k \Vg{g}{\gamma}(y_k) \prod_j \Vg{g}{\alpha_j}(x_j) \rangle_g dv_g(z) \Big) dx\,.
\end{align*}
For a regularized version of the correlation function the above formula would hold, yet, it is not immediately clear how to compute the Cauchy transform. In \cite{ward} it was observed that if one instead studies the Beurling transform, where $(x-z)^{-1}$ is replaced by $(x-z)^{-2}$, one ends up with a useful identity. Stripping out the language of integral transforms, the computation boils down to the following observation. Using Gaussian integration by parts we get (we drop the $\prod_k \Vg{g}{\gamma}(y_k)$ from the notation for now)
\begin{align*}
  \partial_z \langle \Vg{g}{\gamma}(z) \prod_j \Vg{g}{\alpha_j}(x_j) \rangle_g  &= \sum_j \alpha_j \gamma \partial_z G_g(z,x_j) \langle \Vg{g}{\gamma}(z) \prod_j \Vg{g}{\alpha_j}(x_j) \rangle_g \\
 & \quad - \mu \gamma^2 \int \partial_z G_g(z,z') \langle \Vg{g}{\gamma}(z') \Vg{g}{\gamma}(z) \prod_j \Vg{g}{\alpha_j}(x_j) \rangle_g dv_g(z')\,.
\end{align*}
Now we integrate this equation in $z$ over a ball $B(x_i,r)$
\begin{align}\label{asdasd}
\int_{B(x_i,r)} \partial_z \langle \Vg{g}{\gamma}(z) \prod_j \Vg{g}{\alpha_j}(x_j) \rangle_g dv_g(z) &= \sum_j \alpha_j \gamma \int_{B(x_i,r)} \partial_z G_g(z,x_j) \langle \Vg{g}{\gamma}(z) \prod_j \Vg{g}{\alpha_j}(x_j) \rangle_g dv_g(z) \\
& \quad - \mu \gamma^2 \int_{B(x_i,r)} \int \partial_z G_g(z,z') \langle \Vg{g}{\gamma}(z') \Vg{g}{\gamma}(z) \prod_j \Vg{g}{\alpha_j}(x_j) \rangle_g dv_g(z,z')\,. \nonumber
\end{align}
In the sum the $j \neq i$ terms give absolutely convergent integrals. The $j=i$ term is essentially the problematic term we are trying to make sense of in Equation \eqref{asdf}. Thus, we need to analyze the left-hand side and the double integral on the right-hand side. In a conformal coordinate around $x_i$ we write $g = e^\sigma |dz|^2$, and integration by parts gives
\begin{align*}
\int_{B(x_i,r)} \partial_z \langle \Vg{g}{\gamma}(z) \prod_j \Vg{g}{\alpha_j}(x_j) \rangle_g dv_g(z) &= -\int_{B(x_i,r)} \partial_z \sigma(z) \langle \Vg{g}{\gamma}(z) \prod_j \Vg{g}{\alpha_j}(x_j) \rangle_g dv_g(z) \\
& \quad + \frac{i}{2} \int_{\partial B(x_i,r)} \langle \Vg{g}{\gamma}(z) \prod_j \Vg{g}{\alpha_j}(x_j) e^{\sigma(z)} d \bar z\,.
\end{align*}
Both of the integrals are absolutely convergent (we take $r$ small enough so that $\partial B(x_i,r)$  does not touch the points $\{x_j\}_j$). The double integral in \eqref{asdasd} can be written as
\begin{align}\label{asd}
& \int_{B(x_i,r)} \int \partial_z G_g(z,z') \langle \Vg{g}{\gamma}(z') \Vg{g}{\gamma}(z) \prod_j \Vg{g}{\alpha_j}(x_j) \rangle_g dv_g(z,z') \\
&= \int_{B(x_i,r)^2} \partial_z G_g(z,z') \langle \Vg{g}{\gamma}(z') \Vg{g}{\gamma}(z) \prod_j \Vg{g}{\alpha_j}(x_j) \rangle_g dv_g(z,z')  \nonumber\\
& \quad + \int \mathcal{F}_g(x_i,r;z,z') \langle \Vg{g}{\gamma}(z') \Vg{g}{\gamma}(z) \prod_j \Vg{g}{\alpha_j}(x_j) \rangle_g dv_g(z,z')\,, \nonumber
\end{align}
where 
\begin{align*}
\mathcal{F}_g(x_i,r;z,z') &= \mathbf{1}_{B(x_i,r)}(z) \mathbf{1}_{B(x_i,r)^c}(z') \partial_z G_g(z,z')\,.
\end{align*}
This is the type of the described singularity we mentioned earlier. The first term on the right-hand side of \eqref{asd} is absolutely convergent after we observe that the singular part of $\partial G_g(z,z')$ cancels by antisymmetry in $z$ and $z'$. What is left is to analyze the integrability of the second term on the right-hand side of \eqref{asd}. The fusion estimates derived in Proposition 5.1. of \cite{ward} tell us the singular behavior of $\langle \Vg{g}{\gamma}(z') \Vg{g}{\gamma}(z) \prod_j \Vg{g}{\alpha_j}(x_j) \rangle_g$ as $z' \to z$, and then it is simple to check that the resulting integral converges near $z=z'$. Now from Equation \eqref{asdasd} we can solve for the singular part of the problematic integral term in \eqref{asdf}, and we see that the result is a sum of well-defined terms.

For the higher derivatives $\partial_{x_i}^n$, one ends up with a sum of integrals of the type
\begin{align*}
\int \prod_{k=1}^n \mathcal{F}_g(x_i,2^{-k} r; z_k,z'_k) \langle \prod_{k=1}^n \Vg{g}{\gamma}(z_k) \Vg{g}{\gamma}(z_k') \Vg{g}{\alpha_j}(x_j) \rangle_g dv_g(\bz,\bz')\,,
\end{align*}
and less singular terms. Then, Lemma 3.1. in \cite{Oik} implies the convergence of this integral.

When we apply the formula for $\partial_{x_i}^n \langle \prod_{k=1}^a \Vg{g}{\gamma}(y_k) \prod_{j=1}^N \Vg{g}{\alpha_j}(x_j) \rangle_g$ for the $\partial_{x_i}^n$ derivatives of the right-hand side of Equation \eqref{Tm}, we get integrals with at worst behave as
\begin{align*}
\int_{\Sigma^{a+2n}} F_{g,n}(\bx,\by) \prod_{l=1}^n \mathcal{F}_g(x_i, 2^{-l}r,w_l,w_l') \langle \prod_{l=1}^n \Vg{g}{\gamma}(w_l) \Vg{g}{\gamma}(w_l') \prod_{k=1}^a \Vg{g}{\gamma}(y_k) \prod_{j=1}^N \Vg{g}{\alpha_j}(x_j) \rangle_g dv_g(\by,\mathbf{w}, \mathbf{w'})\,,
\end{align*}
and the integrand has an integrable dominant according to Lemma 3.1. in \cite{Oik}. The result follows, since the integrand in \eqref{Tm} is differentiable almost everywhere on the integration domain and the derivative has an integrable dominant.
\end{proof}

\section{Conformal Ward identities}\label{section_ward}

In this Section we show that the SE-tensor can be expressed in terms of a pointwise defined function. After this we derive the conformal Ward identities for the SE-tensor.

\subsection{Beltrami equation}\label{beltrami_section}

In this section we assume that $g$ is a hyperbolic metric. For a general variation $g_\eps$ of $g$ we can write
\begin{align}\label{metric_decomp}
g_\eps &= e^{\varphi_\eps}\psi_\eps^* h_{\eps}
\end{align}
where $\varphi_\eps \in C^\infty(\Sigma)$, $\psi_\eps \in \mathcal{D}(\Sigma)$ and $h_\eps$ is a hyperbolic metric, chosen to represent some equivalence class in the moduli space, for each $\eps$.
For now we consider a variation that in terms of the inverse metric is of the form $g_\eps^{\alpha \beta} = g^{\alpha \beta} + f_\eps^{\alpha \beta}$, where $\eps = (\eps_1,\hdots,\eps_n) \in \C^n$, and 
\begin{align*}
g_\eps^{zz} &= f_\eps^{zz} = \sum_{k=1}^n \eps_k f_k^{zz}\,, \\
g_\eps^{\bar z \bar z} &= \overline{g_\eps^{zz}}\,, \\
g_\eps^{z \bar z} &= g^{z \bar z}\,,
\end{align*}
with $f_k \in \sectt{tf}{}$, $k=1,\hdots,n$, having mutually disjoint supports. Equation \eqref{metric_decomp} implies that $\psi_\eps$ solves the non-linear Beltrami equation\footnote{In the case of the sphere the equation degenerates to the linear Beltrami equation, because $\nu_\eps$ vanishes if the moduli space is trivial. This is why it sufficed to study the linear case in \cite{KO}.} (see Equations (10.4)-(10.6) in Section 10.1. of \cite{astala})
\begin{align}\label{beltr0}
\partial_{\bar z} \psi_\eps &= (\mu_\eps)^z_{\bar z} \partial_z \psi_\eps - (\nu_\eps)^z_{\bar z} \overline{\partial_z \psi_\eps}\,,
\end{align}
where
\begin{align*}
(\mu_\eps)^z_{\bar z} &= \frac{\tilde g_{\eps, \bar z \bar z}}{\tilde g_{\eps, z \bar z} + \tilde h_{\eps, z \bar z}}\,, \quad (\nu_\eps)^z_{\bar z} = \frac{\tilde h_{\eps, \bar z \bar z}}{\tilde g_{\eps, z \bar z} + \tilde h_{\eps, z \bar z}}\,, \\
\tilde g_\eps &= \frac{g_\eps}{\sqrt{\det g_\eps}} \,, \quad \tilde h_\eps = \frac{h_\eps \circ \psi_\eps }{\sqrt{\det (h_\eps \circ \psi_\eps)}}\,.
\end{align*}
Let $D \psi_\eps$ be the tangent map $\Gamma(T^1\Sigma) \to \Gamma(T^1\Sigma)$ of $\psi_\eps$. In terms of this, Equation \eqref{beltr0} becomes
\begin{align}\label{beltr}
(D\psi_\eps)^{zz} &= (\mu_\eps)^z_{\bar z} (D\psi_\eps)^{\bar z z} - (\nu_\eps)^z_{\bar z}  (D \psi_\eps)^{z \bar z}\,.
\end{align}
For the perturbations we have the orthogonal decomposition $f_k = f_{k,d} + f_{k,m}$, where $f_{k,d} \in \sect{tf}{d}$ and $f_{k,m} \in \sect{tf}{m}$ (Lemma \ref{tangent_decomp}). Since $f_{k,m}$ is the part generating the deformation of the conformal structure, the hyperbolic metrics $h_\eps$ in \eqref{metric_decomp} depend only on this component of $f_k$. This means that we have $h_\eps^{zz} = \sum_k \eps_k f_{k,m}^{zz} + \mathcal{O}(\eps^2)$, $h_\eps^{z \bar z} = g^{z \bar z} + \mathcal{O}(\eps^2)$, from which we get
\begin{align*}
  (\dot \mu_i)^{z}_{\bar z} &:= \partial_{\eps_i}|_0 (\mu_\eps)^{z}_{\bar z} = - \hf g^{z \bar z} f_{i,\bar z \bar z} = - \hf g_{z \bar z}  f_i^{zz}\,, \\
 ( \dot \nu_i)^{z}_{\bar z} &:= \partial_{\eps_i}|_0 (\nu_\eps)^z_{\bar z} = -\hf g_{z \bar z} f_{i,m}^{zz}\,.
\end{align*}

To show smoothness of $\psi$ in $(\eps,z)$, we can reduce \eqref{beltr} to a pair of ordinary Beltrami equations. Because the metrics $g$ and $h_\eps$ are hyperbolic, we can find a locally defined diffeomorphism $\psi_{2,\eps}$ such that
\begin{align}\label{bel1}
h_\eps = \psi_{2,\eps}^* g\,.
\end{align}
Define $\psi_{1,\eps} := \psi_{2,\eps} \circ \psi_\eps$. Then 
\begin{align}\label{bel2}
g_\eps &= e^{\varphi_\eps } \psi_{1,\eps}^* g\,.
\end{align}
Equations \eqref{bel1} and \eqref{bel2} imply that $\psi_{1,\eps}$ and $\psi_{2,\eps}$ solve the standard Beltrami equations
\begin{align*}
\partial_{\bar z} \psi_{2,\eps} &= \hat \nu_\eps \partial_z \psi_{2,\eps}\,, \\
\partial_{\bar z} \psi_{1,\eps} &= \hat \mu_\eps \partial_z \psi_{1,\eps}\,,
\end{align*}
where
\begin{align*}
\hat \nu_\eps &= \frac{\tilde h_{\eps, \bar z \bar z}}{\tilde g_{z \bar z} + \tilde h_{\eps,z \bar z}}\,, \quad \hat \mu_\eps = \frac{\tilde g_{\eps, \bar z \bar z}}{\tilde g_{z \bar z} + \tilde g_{\eps, z \bar z}}\,.
\end{align*}
Thus, the function $\psi_\eps$ can always be locally decomposed as $\psi_\eps = \psi_{2,\eps}^{-1} \circ \psi_{1,\eps}$, where $\psi_{1,\eps}$ and $\psi_{2,\eps}$ are solutions of standard Beltrami equations.
\begin{proposition}
The map $\psi: \Sigma \times \C \to \Sigma$, $\psi(z,\eps) = \psi_\eps(z)$, is smooth.
\end{proposition}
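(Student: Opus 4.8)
The plan is to exploit the local decomposition $\psi_\eps = \psi_{2,\eps}^{-1}\circ\psi_{1,\eps}$ just established, which reduces the claim to the joint smoothness in $(z,\eps)$ of the solutions $\psi_{1,\eps},\psi_{2,\eps}$ of the two \emph{standard} (linear) Beltrami equations $\partial_{\bar z}\psi_{1,\eps}=\hat\mu_\eps\partial_z\psi_{1,\eps}$ and $\partial_{\bar z}\psi_{2,\eps}=\hat\nu_\eps\partial_z\psi_{2,\eps}$. Since smoothness is a local property, and since composition and inversion of smoothly parametrized families of local diffeomorphisms again produce smoothly parametrized families (the inversion step being the implicit function theorem with parameters), it suffices to prove that a normalized solution of a single Beltrami equation $\partial_{\bar z}\psi_\eps=\mu_\eps\partial_z\psi_\eps$, with coefficient $\mu_\eps$ smooth in $(z,\eps)$ and satisfying $\mu_0=0$, is smooth in $(z,\eps)$ for $\eps$ in a neighborhood of the origin. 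Working locally I may also extend $\mu_\eps$ by zero outside the coordinate chart and take the principal normalized solution, since two solutions on a common domain differ by post-composition with a holomorphic map and are therefore equally smooth there.

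First I would record the regularity of the coefficients. By the explicit formulas, $\hat\mu_\eps$ and $\hat\nu_\eps$ are rational expressions in the components of $\tilde g_\eps$ and $\tilde h_\eps$; since $g_\eps^{zz}=\sum_k\eps_k f_k^{zz}$ is polynomial in $\eps$ with smooth-in-$z$ coefficients and $h_\eps$ depends smoothly on $\eps$, both coefficients are smooth in $(z,\eps)$, and they vanish at $\eps=0$ because $g_0=h_0=g$ forces $\tilde g_{\bar z\bar z}=\tilde h_{0,\bar z\bar z}=0$. In particular $\|\mu_\eps\|_\infty<1$ for $\eps$ small, so we are in the uniformly elliptic regime covered by \cite{astala}.

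Then I would invoke two classical facts from Beltrami theory. (i) \emph{Interior regularity:} for fixed $\eps$, if $\mu_\eps$ is $C^\infty$ in $z$ then the normalized solution $\psi_\eps$ is $C^\infty$ in $z$; this follows from the representation of $\psi_\eps$ via the Cauchy transform together with boundedness of the Beurling transform on Hölder spaces, bootstrapping $\mu_\eps\in C^{k,\alpha}$ to $\psi_\eps\in C^{k+1,\alpha}$. (ii) \emph{Smooth parameter dependence} (Ahlfors--Bers with parameters): since $\mu_\eps$ depends smoothly on $\eps$ with $\|\mu_\eps\|_\infty\le k<1$ uniformly, the normalized solution depends smoothly on $\eps$ as a map into a suitable function space. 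To upgrade (i) and (ii) to genuine joint smoothness I would differentiate the Beltrami equation in the parameters: writing $\dot\psi=\partial_{\eps_i}\psi_\eps$, differentiation gives the inhomogeneous linear equation $\partial_{\bar z}\dot\psi-\mu_\eps\partial_z\dot\psi=(\partial_{\eps_i}\mu_\eps)\partial_z\psi_\eps$, with the \emph{same} coefficient $\mu_\eps$ and a source already known to be smooth in $z$ by (i). Solving this through the Cauchy transform yields $\dot\psi$ smooth in $z$, and iterating over $\eps_1,\dots,\eps_n$ shows that every mixed $z$- and $\eps$-derivative exists and is continuous, which is precisely joint smoothness on $\Sigma\times\C$ near $\eps=0$ (all that is needed, since the variations are infinitesimal).

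The main obstacle is the passage from the \emph{separate} regularity statements in $z$ and in $\eps$ to \emph{joint} smoothness; the bootstrap on the differentiated Beltrami equations is what makes this rigorous, and it relies crucially on the source term $(\partial_{\eps_i}\mu_\eps)\partial_z\psi_\eps$ inheriting its $z$-regularity from the already-established smoothness of $\psi_\eps$ in $z$ and of $\mu_\eps$ in both variables. Once $\psi_{1,\eps}$ and $\psi_{2,\eps}$ are jointly smooth, reassembling $\psi_\eps=\psi_{2,\eps}^{-1}\circ\psi_{1,\eps}$ and using that all chart constructions agree on overlaps completes the proof.
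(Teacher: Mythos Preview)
Your proposal is correct and follows essentially the same route as the paper: reduce via the local factorization $\psi_\eps=\psi_{2,\eps}^{-1}\circ\psi_{1,\eps}$ to joint smoothness of solutions of standard Beltrami equations with smoothly varying, small coefficients, and then assemble. The only cosmetic differences are that the paper localizes explicitly by cutting off $\hat\mu_\eps$ with a partition of unity and invokes Stoilow factorization together with Proposition~3.1 of \cite{KO} as a black box for the joint $(z,\eps)$-smoothness, and it handles $\psi_{2,\eps}^{-1}$ by observing (Theorem~5.5.6 in \cite{astala}) that the inverse again solves a standard Beltrami equation rather than appealing to the implicit function theorem; your bootstrap sketch of differentiating the Beltrami equation in the parameters is exactly what underlies that cited result.
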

\begin{proof}
It is enough to fix a point $x \in \Sigma$ and study the maps $\psi_1$ and $\psi_2$, satisfying $\psi = \psi_2^{-1} \circ \psi_1$, defined in some neighbourhood of $x$. Let $(U_n, \rho_n)_n$ be a partition of unity, and define
\begin{align*}
\hat \mu_{\eps,n} &:= \rho_n \hat \mu_{\eps}\,,
\end{align*}
where $n$ is such that $x \in U_n$ and $\rho_n = 1$ around $x$. We will apply Stoilow factorization (Theorem 5.5.1. in \cite{astala}). Let $\psi_{1,\eps}^{(n)}$ be a solution to $\partial_{\bar z} \psi_{1,\eps}^{(n)} = \hat \mu_{\eps,n} \partial_z \psi_{1,\eps}^{(n)}$. Because $\hat \mu_{\eps,n} = \hat \mu_\eps$ around $x$, Stoilow factorization says that there exists a conformal map $\xi_\eps$ such that $\xi_\eps \circ \psi_{1,\eps}^{(n)} =  \psi_{1,\eps}$ in some neighbourhood of $x$. Now, by the argument in Proposition 3.1. of \cite{KO} (which we can now use since we have localized everything to planar domains), $\psi_1^{(n)}$ is jointly smooth in $(\eps,z)$, and since $\xi_\eps \circ \psi_{1,\eps}^{(n)}$ solves the same Beltrami equation, the same argument shows that also $\xi \circ \psi_{1}^{(n)}$ is jointly smooth in $(z,\eps)$. Thus $\psi_1$ is jointly smooth in $(z,\eps)$.

Smoothness of $\psi_2^{-1}$ follows similarly, since by Theorem 5.5.6. in \cite{astala} $\psi_2^{-1}$ also solves a standard Beltrami equation. Thus the same argument works for $\psi_2^{-1}$.
\end{proof}

Next we  compute the partial derivatives $\prod_k \partial_{\eps_k}|_0 \psi_\eps$ and $\prod_k \partial_{\eps_k}|_0 \varphi_\eps$. Denote $u_i = \partial_{\eps_i}|_0 \psi_\eps$. By taking $\partial_{\eps_i}|_0$ of \eqref{beltr} and using $\nabla^z = g^{z \bar z} \partial_{\bar z}$ we get
\begin{align*}
\nabla^z u_i^z &=  ( (\dot\mu_i)^z_{\bar z} -  (\dot \nu_i)^z_{\bar z})g^{\bar z  z} = - \hf (f_i^{zz} - f_{i,m}^{zz})= -\tfrac{1}{2}  f_{i,d}^{zz}\,,
\end{align*}
which implies that, written in terms of the inverse $\mathcal{G}$ of the conformal Killing operator $P_g$ introduced in Section \ref{conformal_killing}, we get $u_i = - \mathcal{G} f_{i,d} = -  \mathcal{G} f_i$.\footnote{In \cite{KO} the role of $\mathcal{G}$ was played by Cauchy transform. One should think that $\mathcal{G}$ maps the perturbation $f$ to the vector field $u$ for which $g+\eps f_d = g + \eps \mathcal{L}_u g$.}

Next we compute the derivative $\partial_{\eps_i} \partial_{\eps_j}|_0$ of \eqref{beltr}. Most of the $\eps^2$-dependency is in mixed powers $\eps_i \overline{\eps_j}$, so we do not get many terms. We have 
\begin{align*}
&\partial_{\eps_i} \partial_{\eps_j}|_0 \sqrt{\det g_\eps} = \partial_{\eps_i} \partial_{\eps_j}|_0 \sqrt{\det h_\eps } = 0\,,\\
& \partial_{\eps_i}|_0 \partial_{\eps_j}|_0 g_{\eps, \bar z \bar z} = \partial_{\eps_i}|_0 \partial_{\eps_j}|_0 h_{\eps, \bar z \bar z} = 0 \\
&(\nu_\eps)^z_{\bar z} = e^{- \sigma \circ \psi_\eps} h_{\eps, \bar z \bar z} \circ \psi_\eps + \mathcal{O}(\eps^2) = - \tfrac{1}{4} (g_{z \bar z} \circ \psi_\eps ) (h_\eps^{zz} \circ \psi_\eps) + \mathcal{O}(\eps^2)\,.
\end{align*}
In the last equation the minus sign comes from computing the inverse matrix of $h_\eps$, because $h_\eps^{\alpha \beta}$ denotes the inverse of $h_\eps$ and not just $h_{\eps,\alpha \beta}$ with indices raised. It follows that 
\begin{align}
\partial_{\eps_i}|_0 \partial_{\eps_j}|_0 (\nu_\eps)^{z}_{\bar z}  &= u_i^z \nabla_z  (\dot \nu_j)^{z}_{\bar z} + u_j^z \nabla_z (\dot \nu_i)^{z}_{\bar z} = 0\,,
\end{align}
where in the last equality we used $(\dot \nu_i)^{z}_{\bar z} = - \hf g_{z \bar z} f_{i,m}^{zz}$ and $\nabla_z f_{i,m}^{zz} = 0$. Now for for $ u_{ij}^z(z) := \partial_{\eps_i} \partial_{\eps_j} \psi_\eps(z)|_0$ we get the equation
\begin{align*}
(D u_{ij})^{zz} &= \dot \mu_{i,\bar z}^z (D u_j)^{\bar z z} + \dot \mu_{j,\bar z}^z (D u_i)^{\bar z z} 
\end{align*}
which in turn implies that $u_{ij} = - 2\mathcal{G} (\dot \mu_i \mathcal{B} f_j) - 2 \mathcal{G}(\dot \mu_j \mathcal{B} f_i)$, where\footnote{The notation $\mathcal{B}$ is chosen to be analogous to the integral transform $\mathcal{B} = \partial_z \mathcal{C}$ used in \cite{KO}.}
\begin{align*}
(\mathcal{B} f)^{\bar z z} &:= \nabla^{\bar z} (\mathcal{G} f)^z\,.
\end{align*}
In general for $I \subset \{1,\hdots,n\}$ we denote  $\prod_{i \in I} \partial_{\eps_i} \psi_\eps|_0 = u_I^z$ and one obtains
\begin{align}\label{u_I}
u_I^z &= (-1)^{|I|+1}\sum_{\pi \in S(I)} 2 \big(\mathcal{G} \dot \mu_{\pi(1)} \mathcal{B}  f_{\pi(2)} \hdots \mathcal{B} f_{\pi(|I|)}\big)^z \,.
\end{align}
For $\varphi_\eps$ we get the equation (by computing similarly as in Section 3.1. of \cite{KO})
\begin{align*}
\varphi_\eps(z) &= -  \log \det D \psi_\eps - \hf \log \frac{\det h_\eps \circ \psi_\eps }{\det g_\eps}\,.
\end{align*}
Since $f_k^{z \bar z} = 0$ for all $k$, it follows that $\prod_{k \in I} \partial_{\eps_k} \log  \det g_\eps|_{\eps=0} = 0$ for all $I \subset \{1,\hdots,k\}$, and that 
\begin{align*}
\prod_{k\in I} \partial_{\eps_k} \log \det h_\eps \circ \psi_\eps |_{\eps=0}  &= \prod_{k \in I} \partial_{\eps_k} (2 \sigma \circ \psi_\eps) |_{\eps=0}\,.
\end{align*}
Thus,
\begin{align}\label{phi_I}
\dot \varphi_I(z) &= -  \prod_{i \in I} \partial_{\eps_i}|_0  \log \nabla_z \psi_\eps  = \sum_{P \in \mathcal{P}(I)} C_P \prod_{J \in P} \nabla_z u_J^z
\end{align}
where $\mathcal{P}(I)$ is the set of partitions of $I$ and $C_P$ are some constants with $C_P = 1$ when $|I|=1$.

\subsection{Stress-Energy Tensor}
Recall from Section \ref{moduli} that we denote by $\langle \prod_i T_m(z_i) \prod_j \Vg{g}{\alpha_j}(x_j) \rangle_g$ the pointwise defined function that describes the modular variation of the correlation function (Equation \eqref{tmv}). This and the transformation properties of the correlation functions imply the following
\begin{proposition}\label{T_m_transformation}
Let $(x_1,\hdots,x_N) \in \Sigma^N$ and $(z_1,\hdots,z_n) \in (\Sigma \setminus \{x_j\}_j)^n$ be tuples of disjoint points. Then for all $\psi \in \dif$ 
\begin{align}\label{tm_diffeo}
\langle \prod_{k=1}^n T_m(z_k) \prod_{j=1}^N \Vg{\psi^* g}{\alpha_j}(x_j) \rangle_{\psi^* g} &= \langle \prod_{k=1}^n (\psi^* T_m)(z_k) \prod_{j=1}^N \Vg{g}{\alpha_j}(\psi(x_j)) \rangle_g\,,
\end{align}
where $\psi^* T_m = (D^T\psi)_z^{\; \; \alpha} (T_m \circ \psi)_{\alpha \beta} (D \psi)^\beta_{\;\;z}$, and
for all $\varphi \in C^\infty(\Sigma)$
\begin{align}\label{tm_weyl}
\langle \prod_{k=1}^n T_m(z_k) \prod_{j=1}^N \Vg{e^\varphi g}{\alpha_j}(x_j) \rangle_{e^\varphi g} &= e^{cA(\varphi,g)- \sum_{j=1}^N \Delta_{\alpha_j} \varphi(x_j)} \langle \prod_{k=1}^n \big( T_m(z_k) + a(z_k,\varphi,g)\big) \prod_{j=1}^N \Vg{g}{\alpha_j}(x_j) \rangle_g\,,
\end{align}
where $a(z,\varphi,g) = 4 \pi c \frac{\delta A(\varphi,g)}{\delta g^{zz}(z)}$.
\end{proposition}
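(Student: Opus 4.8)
The plan is to feed the transformation properties of the primary-field correlation functions (Proposition \ref{F_diffeo_weyl}) into the defining relation \eqref{tmv} for the modular component $T_m$ and to read off how $T_m$ must transform. Since the orthogonality \eqref{Tm_orthog} fixes all freedom in $\langle\prod_k T_m(z_k)\prod_j\Vg{g}{\alpha_j}(x_j)\rangle_g$, it suffices to show that the two sides of \eqref{tm_diffeo} and of \eqref{tm_weyl} pair identically against an arbitrary tuple of transverse traceless tensors and that both sides are themselves transverse traceless; this determines them uniquely. The case of general $n$ follows from the $n=1$ analysis by the same Leibniz bookkeeping, so I describe the structure for one insertion of $T_m$.

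\emph{Diffeomorphism covariance.} I would evaluate \eqref{tmv} at the metric $\psi^* g$: pairing the left side against $\prod_k f_k^{z_k z_k}$ with $f_k\in\sectt{tf}{m}$ transverse traceless for $\psi^* g$ gives $(4\pi)^n\prod_k\partial_{\eps_k}|_0\langle\prod_j\Vg{(\psi^*g)_\eps}{\alpha_j}(x_j)\rangle_{(\psi^*g)_\eps}$ with $(\psi^*g)_\eps=\psi^* g+\sum_k\eps_k f_k$. Writing $(\psi^*g)_\eps=\psi^*\!\big(g+\sum_k\eps_k\psi_* f_k\big)$ and using that $\psi\colon(\Sigma,\psi^*g)\to(\Sigma,g)$ is an isometry—so $\psi_*$ carries $\sectt{tf}{m}$ for $\psi^* g$ onto $\sectt{tf}{m}$ for $g$—I apply the diffeomorphism covariance of Proposition \ref{F_diffeo_weyl} to move $\psi$ onto the insertion points. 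The derivative then equals the right side of \eqref{tmv} at $g$, with perturbations $\psi_* f_k$ and insertions $\psi(x_j)$. Changing variables $z_k=\psi(w_k)$, using $dv_g(\psi(w))=dv_{\psi^*g}(w)$ and the identity $(\psi_* f)^{ab}(\psi(w))(T_m)_{ab}(\psi(w))=f^{cd}(w)(\psi^* T_m)_{cd}(w)$ with $\psi^* T_m=(D^T\psi)_z^{\;\;\alpha}(T_m\circ\psi)_{\alpha\beta}(D\psi)^\beta_{\;\;z}$, the pairing turns into that of $\prod_k f_k^{w_k w_k}$ against $\langle\prod_k(\psi^* T_m)(w_k)\prod_j\Vg{g}{\alpha_j}(\psi(x_j))\rangle_g$ in the measure $dv_{\psi^* g}$. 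As $\psi^* T_m$ is transverse traceless for $\psi^* g$, comparison of the two pairings yields \eqref{tm_diffeo} pointwise.

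\emph{Weyl transformation.} Here I would evaluate \eqref{tmv} at $\hat g=e^\varphi g$ and invoke the decomposition \eqref{metric_decomp}, writing the modular perturbation $\hat g_\eps=\hat g+\sum_k\eps_k f_k$ as $\hat g_\eps=e^{\Phi_\eps}\psi_\eps^* h_\eps$ with $h_\eps$ running through a slice based at $g$ (so that $\partial_{\eps_k}|_0 h_\eps\in\sectt{tf}{m}$), $\psi_\eps\in\dif$ and $\Phi_0=\varphi$. The Weyl anomaly followed by diffeomorphism covariance gives $\langle\prod_j\Vg{\hat g_\eps}{\alpha_j}(x_j)\rangle_{\hat g_\eps}=e^{cA(\Phi_\eps,\psi_\eps^* h_\eps)-\sum_j\Delta_{\alpha_j}\Phi_\eps(x_j)}\langle\prod_j\Vg{h_\eps}{\alpha_j}(\psi_\eps(x_j))\rangle_{h_\eps}$. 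Differentiating in $\eps$, the slice (moduli) variation of $\langle\prod_j\Vg{h_\eps}{\alpha_j}\rangle_{h_\eps}$ at fixed points reproduces $\langle T_m(z)\prod_j\Vg{g}{\alpha_j}(x_j)\rangle_g$, the conformal-weight factor $e^{-\sum_j\Delta_{\alpha_j}\varphi(x_j)}$ comes from $\sum_j\Delta_{\alpha_j}\Phi_0(x_j)$, and the modular-direction variation of the exponent produces precisely $a(z,\varphi,g)=4\pi c\,\delta A(\varphi,g)/\delta g^{zz}(z)$.

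\emph{Main obstacle.} The difficulty is that conformal rescaling does not preserve the splitting $\sectt{tf}{}=\sectt{tf}{d}\oplus\sectt{tf}{m}$: a tensor that is transverse traceless for $\hat g$ becomes, after removing the factor $e^\varphi$, transverse traceless for $g$ only up to a $\sectt{tf}{d}$ component, so in \eqref{metric_decomp} one necessarily has $\partial_{\eps_k}|_0\psi_\eps\neq0$. Differentiation therefore generates reparametrization terms $\sum_j(\partial_{\eps_k}|_0\psi_\eps)(x_j)\,\nabla_{x_j}\langle\cdots\rangle_g$, together with the diffeomorphism variation of $A(\Phi_\eps,\psi_\eps^* h_\eps)$ through $\mathcal{L}_u g$ and the induced variation of $\Phi_\eps$; none of these appears in \eqref{tm_weyl}. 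The crux is to show that they cancel. I would do this by using the diffeomorphism covariance \eqref{tm_diffeo} just proven to absorb the action of $\psi_\eps$, the diffeomorphism invariance $A(\varphi\circ\psi,\psi^* g)=A(\varphi,g)$ to tie the $\mathcal{L}_u g$-variation of the anomaly to the conformal-factor variation, and the orthogonality \eqref{Tm_orthog} of $T_m$ to $\sectt{tf}{d}$ to discard what remains, leaving exactly the modular term $T_m$ and the anomaly term $a$. Carrying out this cancellation, and confirming that $a$ is the modular piece of $4\pi c\,\delta A/\delta g$, is where the real work lies.
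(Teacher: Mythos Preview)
Your treatment of the diffeomorphism covariance \eqref{tm_diffeo} matches the paper's proof essentially line for line: pair against $f_k\in\sectt{tf}{m}$ for $\psi^*g$, push $\psi^*$ outside, use that $\psi_*$ preserves transverse-tracelessness, and change variables.

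For the Weyl part \eqref{tm_weyl} you take a longer road than necessary, and the ``main obstacle'' you identify is not actually there. The paper's proof simply writes $g_\eps = e^\varphi g + \sum_k \eps_k f_{k,m} = e^\varphi\big(g + \sum_k \eps_k e^{-\varphi} f_{k,m}\big) =: e^\varphi \tilde g_\eps$ and applies the Weyl anomaly directly. The key observation you are missing is that in two dimensions the transverse-traceless condition depends only on the conformal class: in conformal coordinates a traceless $f$ satisfies $\nabla^\alpha f_{\alpha z}=g^{z\bar z}\partial_{\bar z} f_{zz}$, so $\nabla^\alpha f_{\alpha\beta}=0$ is equivalent to $\partial_{\bar z} f_{zz}=0$, which is independent of $\sigma$. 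Equivalently, $\sect{tf}{m}$ is the space of holomorphic quadratic differentials, a purely complex-analytic object. Hence if $f$ is transverse traceless for $e^\varphi g$ then (after absorbing a power of $e^\varphi$) it is transverse traceless for $g$, and $\tilde g_\eps$ is a \emph{pure} modular variation of $g$: no diffeomorphism $\psi_\eps$ ever enters, and there are no reparametrization terms to cancel. Differentiating $e^{cA(\varphi,\tilde g_\eps)}$ in $\eps$ produces exactly the $a(z,\varphi,g)$ term, and the remaining derivative of $\langle\prod_j\Vg{\tilde g_\eps}{\alpha_j}(x_j)\rangle_{\tilde g_\eps}$ is already \eqref{tmv} at $g$.

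So your proposed argument is not wrong in spirit, but you have manufactured the hard step: the decomposition $\hat g_\eps = e^{\Phi_\eps}\psi_\eps^* h_\eps$ with nontrivial $\psi_\eps$ is unnecessary here because the splitting $\sectt{tf}{}=\sectt{tf}{d}\oplus\sectt{tf}{m}$ \emph{is} preserved under conformal rescaling. Once you use this, the proof is a two-line computation rather than a delicate cancellation.
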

\begin{remark} The trace part of $T_m$ vanishes, i.e. $(T_m)_{z \bar z} = 0$, because the perturbations $f$ that deform the conformal structure are traceless. Thus, only the $(z,z)$ and $(\bar z, \bar z)$ components of $T_m$ appear in Equation \eqref{tm_diffeo}.
\end{remark}
\begin{proof}
\begin{enumerate}
\item By definition
\begin{align*}
\int_{\Sigma^n} \prod_{i=1}^n f_i^{z_i z_i}(z_i) \langle \prod_{i=1}^n T_m(z_i) \prod_j \Vg{g}{\alpha_j}(x_j) \rangle_{\psi^* g} dv_{\psi^* g}(\bz) &= (4\pi)^n \prod_{i=1}^n \partial_{\eps_i}|_0 \langle \prod_j \Vg{g}{\alpha_j}(x_j) \rangle_{g_\eps}
\end{align*}
where $g_\eps =  \psi^* g + \sum_i \eps_i f_{i,m} + \mathcal{O}(\eps^2) = \psi^* (g + \sum_i \eps_i \psi_* f_{i,m}) + \mathcal{O}(\eps^2) =: \psi^* \tilde g_\eps$. Thus
\begin{align*}
(4\pi)^n \prod_{i=1}^n \partial_{\eps_i}|_0 \langle \prod_j \Vg{g}{\alpha_j}(x_j) \rangle_{g_\eps} &= (4\pi)^n \prod_{i=1}^n \partial_{\eps_i}|_0 \langle \prod_j \Vg{g}{\alpha_j}(\psi(x_j)) \rangle_{\tilde g_\eps} \\
&=  \int_{\Sigma^n} \prod_{i=1}^n (\psi_* f_{i,m})^{z_i z_i}(z_i) \langle \prod_{i=1}^n T_m(z_i) \prod_j \Vg{g}{\alpha_j}(\psi(x_j)) \rangle_g dv_g(\bz) \\
&=  \int_{\Sigma^n} \prod_{i=1}^n f_{i,m}^{z_i z_i}(z_i) \langle \prod_{i=1}^n (\psi^* T_m)(z_i) \prod_j \Vg{g}{\alpha_j}(\psi(x_j)) \rangle_g dv_{\psi^* g}(\bz)\,.
\end{align*}
Above in the second equality we used that if $f \in T_{\psi^*g}^{\opn{tt}} \met$, then $\psi_* f \in \trtr$ (recall Definition \ref{transv1}, Corollary \ref{transv2}), which means that $\psi_* f_{i,m}$ generates a perturbation of $g$ that is purely a variation of the moduli (and not a pull-back by a diffeomorphism or a Weyl transformation). This allows us to write the partial derivative in terms of the pointwise defined function $\langle \prod_i T_m(z_i) \prod_j \Vg{g}{\alpha_j}(\psi(x_j)) \rangle_g$ defined in Equation \eqref{tmv}. 
\item By definition
\begin{align*}
\int_{\Sigma^n} \prod_{i=1}^n f_i^{z_i z_i} \langle \prod_{i=1}^n T_m(z_i) \prod_j \Vg{e^\varphi g}{\alpha_j}(x_j) \rangle_{e^\varphi g} dv_{e^\varphi g}(\bz) &= (4\pi)^n \prod_{i=1}^n \partial_{\eps_i}|_0 \langle \prod_j \Vg{g_\eps}{\alpha_j}(x_j) \rangle_{g_\eps}\,,
\end{align*}
where $g_\eps = e^\varphi g + \sum_i \eps_i f_{i,m} = e^\varphi(g+\sum_i \eps_i e^{-\varphi} f_{i,m})+ \mathcal{O}(\eps^2) =: e^\varphi \tilde g_\eps$. Thus
\begin{align*}
\prod_{i=1}^n \partial_{\eps_i}|_0 \langle \prod_j \Vg{g_\eps}{\alpha_j}(x_j) \rangle_{g_\eps} &= \prod_{i=1}^n \partial_{\eps_i}|_0 e^{cA(\varphi, \tilde g_\eps) - \sum_j \Delta_{\alpha_j} \varphi(x_j)} \langle \prod_j \Vg{\tilde g_\eps}{\alpha_j}(x_j) \rangle_{\tilde g_\eps}\,.
\end{align*}
We denote $a_{\alpha \beta}(z,\varphi,g) = \frac{\delta A(\varphi,g)}{\delta g^{\alpha \beta}(z)}$. Then, by definition of the functional derivative,
\begin{align*}
\partial_{\eps_i}|_0 A(\varphi,\tilde g_\eps) &=  \int_\Sigma e^{\varphi(z)} f_{i,m}^{zz}(z) a_{zz}(z,\varphi,g) d^2z\,.
\end{align*}
Combining this with Equation \eqref{tmv} yields
\begin{align*}
& (4\pi)^n \prod_{i=1}^n \partial_{\eps_i}|_0 e^{cA(\varphi, \tilde g_\eps) - \sum_j \Delta_{\alpha_j} \varphi(x_j)} \langle \prod_j \Vg{\tilde g_\eps}{\alpha_j}(x_j) \rangle_{\tilde g_\eps} \\
&= e^{cA(\varphi, g) - \sum_j \Delta_{\alpha_j} \varphi(x_j)} \int_{\Sigma^n} \prod_{i=1}^n e^{\varphi(z_i)} f_{i,m}^{z_i z_i}(z_i)\langle \prod_{i=1}^n \big( T_m(z_i) + a(z_i,\varphi,g) \big) \prod_j \Vg{g}{\alpha_j}(x_j) \rangle_g dv_g(\bz) \\
&= e^{cA(\varphi, g) - \sum_j \Delta_{\alpha_j} \varphi(x_j)} \int_{\Sigma^n} \prod_{i=1}^n  f_{i,m}^{z_i z_i}(z_i)\langle \prod_{i=1}^n \big( T_m(z_i) + a(z_i,\varphi,g) \big) \prod_j \Vg{g}{\alpha_j}(x_j) \rangle_g dv_{e^\varphi g}(\bz)\,,
\end{align*}
where we used that if $f \in T_{e^\varphi g}^{\opn{tt}} \met$, then $e^{\varphi} f \in \trtr$, to make sure that $e^\varphi f$ generates a perturbation of $g$ that is purely a variation of the moduli (so that we can use Equation \eqref{tmv}).
\end{enumerate}
\end{proof}

Now we are ready to show that the SE-tensor is given by a pointwise defined function for mutually disjoint perturbations.
\begin{proposition}\label{T_pointwise}
Let $f_k \in \sectt{tf}{}$ for $k=1,\hdots,n$ be symmetric tensors with mutually disjoint supports and denote $\eps = (\eps_1,\hdots,\eps_n)$. Define
\begin{align*}
g_\eps^{zz} &= \sum_{k=1}^n \eps_k f_k^{zz}\,, \\ 
g_\eps^{z \bar z} &= g^{z \bar z}\,, \\
g_{\eps}^{\bar z \bar z} &= \overline{g_\eps^{zz}}\,, \;  g_\eps^{ \bar z z} = g_\eps^{ z \bar z}\,.
\end{align*}
Then there exists functions $\langle \prod_{k=1}^n T_{z_k z_k}(z_k) \prod_j \Vg{g}{\alpha_j}(x_j)\rangle_g$ satisfying
\begin{align}\label{tpw}
(4\pi)^n \prod_{k=1}^n \partial_{\eps_k}|_0 \langle \prod_j \Vg{g}{\alpha_j}(x_j) \rangle_{g_\eps} &= \int_{\Sigma^n} \prod_{k=1}^n f_k^{z_k z_k}(z_k)\langle \prod_{k=1}^n T_{z_k z_k}(z_k) \prod_j \Vg{g}{\alpha_j}(x_j) \rangle_g dv_g(\bz)\,.
\end{align}
The functions $\langle \prod_{k=1}^n T_{z_k z_k}(z_k) \prod_j \Vg{g}{\alpha_j}(x_j)\rangle_g$ are smooth in the region of non-coinciding points.
\end{proposition}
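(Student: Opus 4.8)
The plan is to exploit the decomposition \eqref{metric_decomp}, writing $g_\eps = e^{\varphi_\eps}\psi_\eps^* h_\eps$, and thereby to split the metric variation into the three elementary variations whose effect on the correlation functions is understood: the point displacements produced by $\psi_\eps$, the Weyl factor $\varphi_\eps$, and the moduli variation $h_\eps$. Existence of the left-hand side of \eqref{tpw} in the $\delta \to 0$ limit is already guaranteed by Proposition \ref{se_tensor} (the mixed partial being recovered from the pure powers by polarization), so the task is to produce the pointwise kernel and to establish its smoothness. Combining the diffeomorphism covariance \eqref{intro_diffeo_weyl} with the Weyl anomaly \eqref{intro_weyl} gives
\begin{align*}
\langle \prod_j \Vg{g_\eps}{\alpha_j}(x_j) \rangle_{g_\eps} &= e^{cA(\varphi_\eps,\psi_\eps^* h_\eps) - \sum_j \Delta_{\alpha_j} \varphi_\eps(x_j)} \langle \prod_j \Vg{h_\eps}{\alpha_j}(\psi_\eps(x_j)) \rangle_{h_\eps}\,,
\end{align*}
so that the entire $\eps$-dependence is carried by the objects $\varphi_\eps$, $\psi_\eps$, which are smooth in $(\eps,z)$ by the preceding Proposition, and by the hyperbolic representative $h_\eps$, which depends only on the transverse-traceless parts $f_{k,m}$.

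Next I would apply the Leibniz rule to $\prod_k \partial_{\eps_k}|_0$, distributing the $n$ derivatives among the exponential prefactor, the displaced points $\psi_\eps(x_j)$, and the metric $h_\eps$. This yields three types of terms: (i) derivatives hitting $h_\eps$ only, which by \eqref{tmv} assemble into the moduli function $\langle \prod_k T_m(z_k) \prod_j \Vg{g}{\alpha_j}(x_j)\rangle_g$; (ii) derivatives hitting the arguments $\psi_\eps(x_j)$, which by the chain rule become $\partial_{x_j}$-derivatives of lower correlation functions multiplied by factors $\prod_{i\in I}\partial_{\eps_i}|_0\psi_\eps(x_j) = u_I(x_j)$ from \eqref{u_I}; and (iii) derivatives hitting the prefactor, which produce the anomaly coefficient $a(z,\varphi,g)$ together with the factors $\dot\varphi_I$ from \eqref{phi_I}. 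Since $u_I$ and $\dot\varphi_I$ are built from $\mathcal{G}$ and $\mathcal{B} = \nabla^{\bar z}\mathcal{G}$, and $\mathcal{G}$ has the explicit kernel $\mathcal{G}^z_{ww}(z,w) = \tfrac{1}{4\pi(z-w)} + \mathcal{R}(z,w)$ with $\mathcal{R}$ smooth, every such factor is an integral against $\prod_i f_i^{z_i z_i}(z_i)$ with a kernel that is smooth away from the diagonals.

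At this point I would read off the representation \eqref{tpw}. The resulting functional is $n$-linear in $(f_1,\dots,f_n)$, and the \emph{mutually disjoint supports} of the $f_k$ ensure that each $\eps_k$-derivative localizes at a single point $z_k$ and that no coincidences among the $z_k$ arise; hence the whole expression can be written as $\int_{\Sigma^n}\prod_k f_k^{z_kz_k}(z_k)\,[\text{kernel}]\,dv_g(\bz)$, and this kernel is by definition $\langle \prod_k T_{z_kz_k}(z_k)\prod_j \Vg{g}{\alpha_j}(x_j)\rangle_g$. Collecting the three families above expresses it explicitly through $\mathcal{G}$-kernels, the local anomaly coefficient $a$, the moduli function $T_m$, and $x_j$- and $z_k$-derivatives of lower correlation functions. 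Smoothness in the non-coinciding region then follows term by term: $\mathcal{G}$ and $\mathcal{R}$ are smooth off the diagonal, $A$ and $a$ are local smooth functionals of the metric, the maps $\psi,\varphi$ are smooth in $(\eps,z)$, the moduli contributions are smooth by Proposition \ref{Tm_smooth}, and the required $x_j$-derivatives of the Liouville correlation functions are smooth on the configuration space by the regularity theory of \cite{Oik} recalled in the proof of Proposition \ref{Tm_smooth}.

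The main obstacle is concentrated entirely in the moduli contribution (i): in contrast to the diffeomorphism and Weyl variations, which are dictated by conformal symmetry and reduce to derivatives of correlation functions against explicit meromorphic kernels, the deformation of the conformal structure is genuinely non-local, and its regularity is not implied by symmetry alone. This difficulty has, however, already been resolved in Proposition \ref{Tm_smooth}, whose proof rests on the GMC integrability and the KPZ-type cancellations of \cite{ward,Oik}; granting it, the remainder is the essentially bookkeeping combination of the conformally elementary pieces described above.
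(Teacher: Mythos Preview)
Your approach is essentially the same as the paper's: decompose $g_\eps = e^{\varphi_\eps}\psi_\eps^* h_\eps$, apply diffeomorphism covariance and Weyl anomaly, and distribute the $\partial_{\eps_k}$'s among the three factors using \eqref{u_I}, \eqref{phi_I}, and \eqref{tmv}.

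There is one technical point you pass over that the paper treats explicitly. When you split each $f_k = f_{k,d} + f_{k,m}$, the individual pieces $f_{k,d}$ and $f_{k,m}$ need \emph{not} have mutually disjoint supports, so you cannot localize the kernel against them separately and still claim smoothness off the diagonals of the $z_k$'s. For the pure moduli contribution this is harmless: by the orthogonality \eqref{Tm_orthog} one may replace $f_{k,m}$ by $f_k$ in the integral. For the diffeomorphism contribution it is automatic, since $\mathcal{G}f_k = \mathcal{G}f_{k,d}$ by construction. But for the anomaly term $\partial_{\eps_I}|_0 A(\varphi_\eps,\psi_\eps^* h_\eps)$ both $h_\eps$ (contributing $f_{k,m}$) and $\varphi_\eps,\psi_\eps$ (contributing $f_{k,d}$ through $\partial_{\bar z}\psi_\eps = -\tfrac12 g_{z\bar z} f_{k,d}^{zz} + \mathcal{O}(\eps^2)$) appear, and the paper checks by a short computation that the two always arise in the combination $f_{k,d}+f_{k,m}=f_k$. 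Your sketch's line ``the mutually disjoint supports of the $f_k$ ensure that each $\eps_k$-derivative localizes at a single point $z_k$'' is therefore not quite the right justification; the localization against $f_k$ is the \emph{conclusion} of this recombination check, not an input. Also, your item (iii) conflates the functional-derivative coefficient $a(z,\varphi,g)$ of Proposition~\ref{T_m_transformation} with the $\eps$-derivatives of $A(\varphi_\eps,\psi_\eps^* h_\eps)$ needed here; these are related but not the same object.
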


\begin{proof} By definition we have that
\begin{align}\label{zxc}
&\prod_{k=1}^n \partial_{\eps_k}|_0 \langle \prod_j \Vg{g}{\alpha_j}(x_j) \rangle_{g_\eps} \\
&= \prod_{k=1}^n \partial_{\eps_k}|_0\Big( e^{c A(\varphi_\eps,\psi_\eps^* h_\eps) - \sum_j \Delta_{\alpha_j} \varphi_\eps(x_j)} \langle \prod_j \Vg{g}{\alpha_j}(\psi_\eps(x_j)) \rangle_{h_\eps} \Big) \nonumber\\
&= \sum_{I,J,K \subset \mathcal{P}(n)} \partial_{\eps_I}|_0 \Big( e^{c A(\varphi_\eps,\psi_\eps^* h_\eps) - \sum_j \Delta_{\alpha_j} \varphi_\eps(x_j)} \Big) \tfrac{1}{(4\pi)^{|K|}} \partial_{\eps_J}|_0 \int_{\Sigma^{|K|}} \prod_{k \in K} f_{k,m}(z_k) \langle \prod_{k \in K} T_m(z_k) \prod_j \Vg{g}{\alpha_j}(\psi_\eps(x_j)) \rangle_{g} dv_g(\bz)\,,\nonumber
\end{align}
where the sum is over partitions $\{I,J,K\}$ of $\{1,\hdots,n\}$ and we denote $\partial_{\eps_I} = \prod_{i \in I} \partial_{\eps_i}$. Above we split the $\eps_i$-derivatives into 3 groups $I$,$J$ and $K$, where $I$ is for derivatives acting on the Weyl anomaly, $J$ for derivatives acting on the diffeomorphism $\psi_\eps$, and $K$ for derivatives acting on the $h_\eps$ (variations of the moduli).

Using Equations \eqref{phi_I} and \eqref{u_I} we get
\begin{align}\label{t1}
\partial_{\eps_I}|_0 \Big( e^{ - \sum_j \Delta_{\alpha_j} \varphi_\eps(x_j)} \Big) &= \int \prod_{i \in I} f_i^{z_i z_i}(z_i) F_I(\bx,z_I) dv_g(z_I)\,,
\end{align}
where $F_I$ is smooth in the region of non-coinciding points. To see that the contribution of $c A(\varphi_\eps,\psi_\eps^* h_\eps)$ is of the wanted form \eqref{tpw}, a small computation is required. We first write
\begin{align}\label{Adm}
A(\varphi_\eps, \psi_\eps^* h_\eps) &= A(\varphi_\eps \circ \psi_\eps^{-1}, h_\eps) = \frac{1}{96 \pi} \int_\Sigma \big( h_\eps^{\alpha \beta} \partial_\alpha (\varphi_\eps \circ \psi_\eps^{-1}) \partial_\beta (\varphi_\eps \circ \psi_\eps^{-1}) + 2 K_{h_\eps} \varphi_\eps \circ \psi_\eps^{-1} \big) dv_g\,.
\end{align}
We have $K_{h_\eps} = -2$ for all $\eps \geq 0$, thus the second term is simple to deal with. We have to be careful with the first term, since $\partial_{\eps_1}|_0 h_\eps^{zz} = f_{1,m}^{zz}$ and $\partial_{\eps_1}|_0 \partial_{\bar z} \psi_\eps = g_{z \bar z} \nabla^z (\mathcal{G} f_1)^z = -\hf g_{z \bar z} f_{1,d}^{zz}$, and we want the end result to be of the form \eqref{tpw}, where only $f_1$ appears. Thus, we have to check that these terms always occur as a sum $f_{1,m}+f_{1,d}=f_1$.\footnote{The reason we want the whole perturbations $f_i$ to appear in \eqref{tpw} and not just the components $f_{i,d}$ and $f_{i,m}$ is that we assumed the supports of $f_i$ to be disjoint, but even with this assumption, we have no knowledge of the supports of $f_{i,d}$ and $f_{i,m}$.}

We split $h_\eps^{\alpha \beta} = g^{\alpha \beta} + \sum_{k=1}^n \eps_k f_{k,m}^{\alpha \beta}$ in \eqref{Adm} and look at each of the terms separately. Denote $\xi_\eps := \psi_\eps^{-1}$. For the $\partial_{\eps_I}|_0$-derivative of the first term we get
\begin{align}\label{qwe1}
2 \int_\Sigma g^{z \bar z} \prod_{i \in I} \partial_{\eps_i}|_0 \big(  \partial_{\bar z} (\varphi_\eps \circ \xi_\eps) \partial_z(\varphi_\eps \circ \xi_\eps ) \big) dv_g &= 2 \int_\Sigma g^{z \bar z} \hf g_{z \bar z} \sum_{i \in I} f_{i,d}^{zz} \prod_{j \neq i} \partial_{\eps_j}|_0 \big( \partial_z (\varphi_\eps \circ \xi_\eps) \big)^2 dv_g\,,
\end{align}
where we used $\varphi_\eps = \mathcal{O}(\eps)$, which implies that one of the derivatives must operate  on the $\partial_{\bar z}(\varphi_\eps \circ \xi_\eps)$, and then we used the $\eps$-derivative formulas \eqref{u_I} and \eqref{phi_I} for $\psi_\eps$ and $\varphi_\eps$, and $\partial_{\bar z} (\mathcal{G} f)^z =  \hf g_{z \bar z} f_{d}^{zz}$. 

For the $\partial_{\eps_I}$-derivative of the second term we use $f_{k,m}^{z \bar z} = 0$ and get
\begin{align}\label{qwe2}
\prod_{i \in I} \partial_{\eps_i}|_0 \int_\Sigma \sum_{k=1}^n \eps_k f_{k,m}^{\alpha \beta} \partial_\alpha (\varphi_\eps \circ \xi_\eps) \partial_\beta (\varphi_\eps \circ \xi_\eps) dv_g  &= \sum_{i \in I} \int f_{i,m}^{zz} \prod_{j \neq i} \partial_{\eps_j}|_0 \big( \partial_z (\varphi_\eps \circ \xi_\eps) \big)^2 dv_g\,.
\end{align}
By summing Equations \eqref{qwe1} and \eqref{qwe2} and using $f_{i,d} + f_{i,m} = f_i$, we get
\begin{align*}
\prod_{i \in I} \partial_{\eps_i}|_0 \int_\Sigma h_\eps^{\alpha \beta} \partial_\alpha (\varphi_\eps \circ \xi_\eps) \partial_\beta (\varphi_\eps \circ \xi_\eps) dv_g &=  \sum_{i \in I} \int_\Sigma f_i^{zz} \prod_{j \neq i} \partial_{\eps_j}|_0 \big( \partial_z (\varphi_{\hat \eps} \circ \xi_{\hat \eps}) \big)^2 dv_g\,,
\end{align*}
where $\hat \eps = (\eps_1,\hdots,\hat \eps_i, \hdots, \eps_n)$, where $\hat \eps_i$ means that $\eps_i$ is omitted. It is then easy to see that the above integral takes the form
\begin{align}\label{t2}
\int \prod_{j \in J} f_j^{z_j z_j}(z_j) H(\bz) dv_g(\bz)
\end{align}
for some function $H$ smooth in the region of non-coinciding points.

Next we evaluate the $\partial_{\eps_J}$-derivative in Equation \eqref{zxc}. From equations \eqref{u_I} and \eqref{Tm_orthog} it follows that
\begin{align}\label{t3}
&\partial_{\eps_J}|_0 \int_{\Sigma^{|K|}} \prod_{k \in K} f_{k,m}^{z_k z_k}(z_k) \langle \prod_{k \in K} T_m(z_k) \prod_j \Vg{g}{\alpha_j}(\psi_\eps(x_j)) \rangle_{g} dv_g(\bz) \\
&= \int_{\Sigma^{|K|+|J|}} \prod_{k \in K} f_{k,m}^{z_k z_k}(z_k) \prod_{j \in J} f_j^{z_j z_j}(z_j) D_J(\bx, z_J) \langle \prod_{k\in K}T_m(z_k) \prod_j \Vg{g}{\alpha_j}(x_j) \rangle dv_g(\bz) \nonumber \\
&= \int_{\Sigma^{|K|+|J|}} \prod_{k \in K} f_{k}^{z_k z_k}(z_k) \prod_{j \in J} f_j^{z_j z_j}(z_j) D_J(\bx, z_J) \langle \prod_{k\in K}T_m(z_k) \prod_j \Vg{g}{\alpha_j}(x_j) \rangle dv_g(\bz)\,, \nonumber 
\end{align}
where $D_J$ is a sum of terms of the form $F(\bx,z_J) \partial_{\bx}^\beta$, where $\partial_{\bx}^\beta = \partial_{x_i}^{\beta_1} \hdots \partial_{x_N}^{\beta_N}$ with $\sum_i |\beta_i| \leq |J|$, and $F$ is smooth in the region of non-coinciding points. From Proposition \ref{Tm_smooth} it follows that $\langle \prod_{k\in K}T_m(z_k) \prod_j \Vg{g}{\alpha_j}(x_j) \rangle $ is smooth on the whole domain of integration.

By combining Equations \eqref{t1}, \eqref{t2} and \eqref{t3}, the result follows.
\end{proof}

\begin{proposition}\label{T_transf}
Let $(x_1,\hdots,x_N) \in \Sigma^N$ and $(z_1,\hdots,z_n) \in (\Sigma \setminus \{x_j\}_j)^n$ be tuples of disjoint points. Then the stress-energy tensor satisfies
\begin{align}\label{t_diffeo}
\langle \prod_{k=1}^n T_{z_k z_k}(z_k) \prod_{j=1}^N \Vg{\psi^* g}{\alpha_j}(x_j) \rangle_{\psi^* g} &= \langle \prod_{k=1}^n (\psi^* T)_{z_k z_k}(z_k) \prod_{j=1}^N \Vg{g}{\alpha_j}(\psi(x_j)) \rangle_g\,,
\end{align}
where $(\psi^* T)_{zz} = (D^T\psi)_z^{\; \; \alpha} (T \circ \psi)_{\alpha \beta} (D \psi)^\beta_{\;\;z}$.

Denote $a_{\alpha \beta}(z,\varphi,g) = 4 \pi c \frac{\delta}{\delta g^{\alpha \beta}(z)} A(\varphi,g)$. Then
\begin{align}\label{t_weyl}
\langle \prod_{k=1}^n T_{z_k z_k}(z_k) \prod_{j=1}^N \Vg{e^\varphi g}{\alpha_j}(x_j) \rangle_{e^\varphi g} &= e^{cA(\varphi,g)- \sum_{j=1}^N \Delta_{\alpha_j} \varphi(x_j)} \langle \prod_{k=1}^n (T_{z_k z_k}(z_k) + a_{z_k z_k}(z_k,\varphi,g)) \prod_{j=1}^N \Vg{g}{\alpha_j}(x_j) \rangle_g\,.
\end{align}
\end{proposition}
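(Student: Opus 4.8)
The plan is to establish both identities in direct parallel with Proposition~\ref{T_m_transformation}, feeding the diffeomorphism covariance \eqref{intro_diffeo_weyl} and the Weyl anomaly \eqref{intro_weyl} of the correlation functions into the full definition \eqref{tpw} of the stress--energy tensor (Proposition~\ref{T_pointwise}) rather than into the moduli definition \eqref{tmv}. In each case I would pair the left-hand side against an arbitrary family $f_1,\hdots,f_n \in \sectt{tf}{}$ of test tensors with mutually disjoint supports, use \eqref{tpw} to express the resulting integral as a mixed derivative $(4\pi)^n\prod_k\partial_{\eps_k}|_0$ of a correlation function, transport the perturbed background metric across the relevant symmetry, re-apply \eqref{tpw} with background $g$, and finally read off the pointwise statement from the arbitrariness of the $f_k$. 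The only structural input I need beyond Proposition~\ref{T_pointwise} is that tracelessness and disjointness of supports are preserved both under push-forward by a diffeomorphism and under multiplication by $e^\varphi$, so that the transported perturbations are again admissible in the sense of Proposition~\ref{T_pointwise}; I would also use repeatedly that, since the $f_k$ have disjoint supports, at each point the perturbed metric depends on a single $\eps_k$, so all genuinely nonlinear cross terms drop out of the mixed derivative.

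For the diffeomorphism covariance \eqref{t_diffeo} I would take the background to be $\psi^* g$ and write the perturbation as $g_\eps = \psi^* g + \sum_k \eps_k f_k = \psi^*\big(g + \sum_k \eps_k\,\psi_* f_k\big) =: \psi^*\tilde g_\eps$, using linearity of the pullback and $\psi^*\psi_* = \opn{id}$. Applying \eqref{intro_diffeo_weyl} converts $\langle\prod_j \Vg{g}{\alpha_j}(x_j)\rangle_{g_\eps}$ into $\langle\prod_j \Vg{g}{\alpha_j}(\psi(x_j))\rangle_{\tilde g_\eps}$, and a second use of \eqref{tpw}, now with background $g$ and perturbations $\psi_* f_k$, gives $\int_{\Sigma^n}\prod_k(\psi_* f_k)^{z_k z_k}\langle\prod_k T_{z_k z_k}\prod_j \Vg{g}{\alpha_j}(\psi(x_j))\rangle_g\,dv_g(\bz)$. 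The change of variables that undoes the push-forward then turns $(\psi_* f_k)^{z_k z_k}\,dv_g$ into $f_k^{z_k z_k}\,dv_{\psi^* g}$ and simultaneously conjugates $T$ by the tangent map, producing exactly $(\psi^* T)_{z_k z_k} = (D^T\psi)_{z_k}^{\;\;\alpha}(T\circ\psi)_{\alpha\beta}(D\psi)^\beta_{\;\;z_k}$; since the $f_k$ are arbitrary, \eqref{t_diffeo} follows.

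For the Weyl identity \eqref{t_weyl} I would take the background $e^\varphi g$ and factor the perturbation as $g_\eps = e^\varphi\tilde g_\eps$, where $\tilde g_\eps^{zz} = e^\varphi\sum_k\eps_k f_k^{zz}$ and $\tilde g_\eps^{z\bar z} = g^{z\bar z}$, so that the perturbation of $g$ is by the traceless tensors $e^\varphi f_k$. The Weyl anomaly \eqref{intro_weyl} rewrites the correlation function as $e^{cA(\varphi,\tilde g_\eps) - \sum_j\Delta_{\alpha_j}\varphi(x_j)}\langle\prod_j \Vg{\tilde g_\eps}{\alpha_j}(x_j)\rangle_{\tilde g_\eps}$, and I would expand $\prod_k\partial_{\eps_k}|_0$ over this product by the Leibniz rule. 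Derivatives landing on the correlation function produce the fields $T_{z_k z_k}$ through \eqref{tpw}, while each derivative landing on the anomaly produces, by the definition of the functional derivative, the factor $\partial_{\eps_k}|_0 A(\varphi,\tilde g_\eps) = \int_\Sigma e^{\varphi(z_k)} f_k^{z_k z_k}(z_k)\,\tfrac{\delta A(\varphi,g)}{\delta g^{z_k z_k}(z_k)}\,d^2 z_k$, i.e. precisely $a_{z_k z_k}(z_k,\varphi,g)$ after the normalisation $4\pi c$. Because the $f_k$ have disjoint supports, all mixed second and higher derivatives of $A$ vanish---here I invoke the pointwise dependence of $K_g$ on the metric recorded in \eqref{K_var} together with the fact that traceless perturbations leave the volume form unchanged to first order---so the exponential contributes only products of single-index factors and the Leibniz expansion collapses to the product $\prod_k\big(T_{z_k z_k} + a_{z_k z_k}\big)$. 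Finally the $n$ factors of $e^\varphi$ convert $dv_g(\bz)$ into $dv_{e^\varphi g}(\bz)$, restoring the original $f_k^{z_k z_k}$, and comparison of both sides for arbitrary $f_k$ yields \eqref{t_weyl}.

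I expect the computations to be routine given Proposition~\ref{T_pointwise}, so that the main obstacle is purely organizational. In the diffeomorphism case the delicate point will be to check that the Jacobian from $dv_g\mapsto dv_{\psi^*g}$ and the transformation of the components $(\psi_* f_k)^{z_k z_k}$ assemble into exactly the two tangent-map factors conjugating $T$, with no residual conformal weight left over; in the Weyl case the delicate point will be the vanishing of the mixed derivatives of the anomaly, without which the clean product $\prod_k(T_{z_k z_k} + a_{z_k z_k})$ would pick up cross terms. Both are resolved exactly as in Proposition~\ref{T_m_transformation} and require no analytic input beyond the smoothness and well-definedness already furnished by Proposition~\ref{T_pointwise}.
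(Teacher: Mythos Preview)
Your proposal is correct and follows precisely the route the paper takes: the paper's entire proof is the one-line remark that the argument is ``almost identical as for $T_m$ in Proposition~\ref{T_m_transformation}'', and you have spelled out exactly that parallel, replacing the modular perturbations $f_{k,m}\in\sect{tf}{m}$ by general $f_k\in\sectt{tf}{}$ with disjoint supports and invoking Proposition~\ref{T_pointwise} in place of \eqref{tmv}. The two technical points you flag---that push-forward (resp.\ multiplication by $e^\varphi$) preserves tracelessness and disjointness so that \eqref{tpw} applies again on the $g$ side, and that the mixed $\eps_i\eps_j$ derivatives of $A(\varphi,\tilde g_\eps)$ vanish by locality of $K_g$ and the disjoint-support hypothesis---are exactly the small adaptations needed beyond the $T_m$ case.
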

\begin{proof}
The proof is almost identical as for $T_m$ in Proposition \ref{T_m_transformation}.
\end{proof}

\subsection{Conformal Ward identities}

Now we are ready to derive the conformal Ward identities.

\begin{theorem}\label{theorem}
Let $g$ be a hyperbolic metric and $f_k \in \sectt{tf}{}$ for $k=1,\hdots,n$ be symmetric tensors with mutually disjoint supports. Denote $\eps = (\eps_1,\hdots, \eps_n) \in \C^n$. Let $g_\eps$ be the perturbed metric
\begin{align*}
g_\eps^{zz} &= \overline{g_\eps^{\bar z \bar z}} =  \sum_{k=1}^n \eps_k f_k^{zz}\,, \\
g_\eps^{z \bar z} &= g_\eps^{\bar z z} = g^{z \bar z}\,.
\end{align*}
Then the function $\langle \prod_{k=1}^n T_{z_k z_k}(z_k) \prod_j \Vg{g}{\alpha_j}(x_j) \rangle_g$ defined by Equation \eqref{tpw} satisfies the conformal Ward identity \eqref{introward}.
\end{theorem}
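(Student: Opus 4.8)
The plan is to single out the variation at $z_1$ and to transport the diffeomorphism, Weyl and moduli content of that single variation through the transformation rules of Proposition~\ref{T_transf}. Fix $z_2,\dots,z_n$ and $x_1,\dots,x_N$, and let only $f_1$ vary, its support being disjoint from all the other marked points. Applying \eqref{tpw} once with $n$ and once with $n-1$ insertions and then differentiating in $\eps_1$ gives
\begin{align*}
\frac{1}{4\pi}\int_\Sigma f_1^{z_1z_1}(z_1)\,\langle \prod_{k=1}^n T_{z_kz_k}(z_k)\prod_j \Vg{g}{\alpha_j}(x_j)\rangle_g\,dv_g(z_1)
&=\partial_{\eps_1}|_0\,\langle \prod_{k=2}^n T_{z_kz_k}(z_k)\prod_j \Vg{g_{\eps_1}}{\alpha_j}(x_j)\rangle_{g_{\eps_1}}\,,
\end{align*}
where $g_{\eps_1}$ carries only the $f_1$-perturbation and is no longer hyperbolic (the $(n-1)$-point function on the right is well defined and smooth by Proposition~\ref{T_pointwise}). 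I would then insert the decomposition $g_{\eps_1}=e^{\varphi_{\eps_1}}\psi_{\eps_1}^*h_{\eps_1}$ of \eqref{metric_decomp}, write $g_{\eps_1}=\psi_{\eps_1}^*\big(e^{\varphi_{\eps_1}\circ\psi_{\eps_1}^{-1}}h_{\eps_1}\big)$, and apply the diffeomorphism covariance \eqref{t_diffeo} followed by the Weyl anomaly \eqref{t_weyl}, rewriting the right-hand side as
\begin{align*}
&e^{cA(\varphi_{\eps_1}\circ\psi_{\eps_1}^{-1},h_{\eps_1})-\sum_j\Delta_{\alpha_j}\varphi_{\eps_1}(x_j)} \\
&\qquad\times \langle \prod_{k=2}^n\big((\psi_{\eps_1}^*T)_{z_kz_k}(z_k)+a_{z_kz_k}(z_k,\varphi_{\eps_1}\circ\psi_{\eps_1}^{-1},h_{\eps_1})\big)\prod_j \Vg{h_{\eps_1}}{\alpha_j}(\psi_{\eps_1}(x_j))\rangle_{h_{\eps_1}}\,,
\end{align*}
in which the only remaining metric is the hyperbolic $h_{\eps_1}$, whose $\eps_1$-dependence is purely modular.

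Next I would expand $\partial_{\eps_1}|_0$ of this product by the Leibniz rule, using $\psi_0=\mathrm{id}$, $\varphi_0=0$, $h_0=g$ and the first-order data $u_1:=\partial_{\eps_1}|_0\psi_{\eps_1}=-\mathcal{G}f_1$ from \eqref{u_I}, $\dot\varphi_1:=\partial_{\eps_1}|_0\varphi_{\eps_1}=-\nabla_z u_1^z$ from \eqref{phi_I}, and $\partial_{\eps_1}|_0 h_{\eps_1}=f_{1,m}$. As the identity is first order, each term differentiates exactly one factor. The derivative of the anomaly exponent yields a multiple of $\int_\Sigma\dot\varphi_1\,dv_g$ times the $(n-1)$-point function, which vanishes since $\int_\Sigma\nabla_z u_1^z\,dv_g=0$ by Stokes' theorem on the closed surface $\Sigma$. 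The derivative of $e^{-\sum_j\Delta_{\alpha_j}\varphi_{\eps_1}(x_j)}$ produces the conformal-weight term $\sum_j\Delta_{\alpha_j}\nabla_{x_j}\mathcal{G}^{x_j}_{z_1z_1}$ of \eqref{introward} via $\dot\varphi_1(x_j)=\nabla_{x_j}(\mathcal{G}f_1)^{x_j}$ and the kernel \eqref{introg}; the derivative of the field positions $\psi_{\eps_1}(x_j)$ produces the transport term $\sum_j\mathcal{G}^{x_j}_{z_1z_1}\nabla_{x_j}$; the derivative of each pulled-back tensor $(\psi_{\eps_1}^*T)_{z_iz_i}$ produces the covariant Lie derivative $(\mathcal{L}_{u_1}T)_{z_iz_i}=u_1^{z_i}\nabla_{z_i}T_{z_iz_i}+2(\nabla_{z_i}u_1^{z_i})T_{z_iz_i}$, giving the $\sum_{i=2}^n\big(2\nabla_{z_i}\mathcal{G}^{z_i}_{z_1z_1}+\mathcal{G}^{z_i}_{z_1z_1}\nabla_{z_i}\big)$ term; and the derivative hitting $h_{\eps_1}$ produces, by the definition \eqref{tmv} of $T_m$ together with the orthogonality \eqref{Tm_orthog} that allows one to replace $f_{1,m}$ by $f_1$, the modular term $\tfrac{1}{4\pi}\langle T_m(z_1)\prod_{k=2}^n T_{z_kz_k}(z_k)\prod_j\Vg{g}{\alpha_j}(x_j)\rangle_g$. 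The constants and signs in each of these are fixed by the explicit formulas \eqref{u_I} and \eqref{phi_I}.

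The delicate contribution, which I expect to be the main obstacle, is the derivative of the Weyl-anomaly shift $a_{z_iz_i}(z_i,\varphi_{\eps_1}\circ\psi_{\eps_1}^{-1},h_{\eps_1})$ inside the $T$-product, which should reproduce the central-charge term $-\tfrac{c}{12}\sum_{i=2}^n\nabla^3_{z_i}\mathcal{G}^{z_i}_{z_1z_1}(z_i,z_1)\langle\prod_{k\neq 1,i}T_{z_kz_k}(z_k)\prod_j\Vg{g}{\alpha_j}(x_j)\rangle_g$. Since $a_{zz}=4\pi c\,\delta A/\delta g^{zz}$ and $A$ from \eqref{anomaly} is quadratic plus linear in its scalar argument, the quadratic $|d\varphi|_g^2$-part drops after one $\eps_1$-derivative at $\varphi_0=0$, while $a(z_i,0,h)$ vanishes identically in $h$ so the $h$-variation contributes nothing; only the linear $K_g\varphi$-part survives, and by the curvature variation \eqref{K_var} it contributes a multiple of $\partial_{z_i}^2\dot\varphi_1(z_i)$. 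Writing $\partial_{z_i}^2\dot\varphi_1(z_i)$ in terms of $\nabla_{z_i}^3(\mathcal{G}f_1)^{z_i}$ and transferring the three covariant derivatives onto the kernel \eqref{introg} yields the claimed $\nabla^3_{z_i}\mathcal{G}^{z_i}_{z_1z_1}$. The care here lies in extracting the precise Schwarzian structure of $\delta A/\delta g^{zz}$ with the correct constant $c/12$, and in carrying the non-holomorphic remainder $\mathcal{R}$ of the kernel consistently through the integrations by parts. Finally, since $f_1$ is an arbitrary symmetric tensor supported away from the remaining insertions, I would strip it from under the integral over $z_1$ to obtain the pointwise identity \eqref{introward}; the spectator insertions at $z_2,\dots,z_n$ play no special role, their disjoint supports keeping $f_1$ away from them throughout.
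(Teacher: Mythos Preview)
Your proposal is correct and follows essentially the same route as the paper: isolate the $\eps_1$-variation, decompose $g_{\eps_1}=e^{\varphi_{\eps_1}}\psi_{\eps_1}^*h_{\eps_1}$, push through the transformation rules of Proposition~\ref{T_transf}, and expand by Leibniz, identifying the diffeomorphism, Weyl, central-charge and modular contributions exactly as you describe. The paper keeps the integrals over $z_2,\dots,z_n$ until the end rather than fixing these points from the start, and applies Weyl before diffeomorphism rather than after, but these are cosmetic differences. Two small points worth tightening: the vanishing of $\partial_{\eps_1}|_0 A(\varphi_{\eps_1},\psi_{\eps_1}^*h_{\eps_1})$ uses not just Stokes but also that $K_g\equiv -2$ is constant (so that $\int K_g\dot\varphi_1\,dv_g$ reduces to a multiple of $\int\nabla_z u_1^z\,dv_g$); and the object $\langle T_m(z_1)\prod_{k\geq 2}T_{z_kz_k}(z_k)\prod_j V^{\alpha_j}(x_j)\rangle_g$ is actually \emph{defined} in the course of this proof (the paper spells this out by re-ordering the $\eps$-derivatives and unpacking via Proposition~\ref{T_m_transformation}), so your last modular step is really a definition rather than an identification with a pre-existing quantity.
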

\begin{proof}
Recall from Section \ref{beltrami_section} that we can write $g_\eps = e^{\varphi_\eps} \psi_\eps^* h_\eps$, where $\varphi_\eps \in C^\infty(\Sigma)$, $\psi_\eps \in \dif$ and $h_\eps \in \meth$. By Equation \eqref{tpw}
\begin{align*}
& \int_{\Sigma^n} \prod_{k=1}^n f_k^{z_k z_k}(z_k) \langle \prod_{k=1}^n T_{z_k z_k}(z_k) \prod_j \Vg{g}{\alpha_j}(x_j) \rangle_{g} dv_g(\bz) \\
&= 4 \pi \partial_{\eps_1}|_0 \int_{\Sigma^{n-1}} \prod_{k=2}^n f_k^{z_k z_k}(z_k) \langle \prod_{k=2}^n T_{z_k z_k}(z_k) \prod_j \Vg{g}{\alpha_j}(x_j) \rangle_{e^{\varphi_{\eps_1}} \psi_{\eps_1}^* h_{\eps_1}} dv_g(\bz)
\end{align*}
Next we use Proposition \ref{T_transf} to write the above expression as
\begin{align*}
& 4 \pi  \partial_{\eps_1}|_0 \int_{\Sigma^{n-1}} \prod_{k=2}^{n} f_k^{z_k z_k}(z_k) \Big( e^{c A(\varphi_{\eps_1},\psi_{\eps_1}^* h_{\eps_1}) - \sum_j \Delta_{\alpha_j} \varphi_{\eps_1}(x_j)} \\
& \qquad \qquad \qquad \qquad \qquad  \times  \langle \prod_{k=2}^{n} \big( (\psi_{\eps_1}^* T)_{z_k z_k}(z_k) + a(\psi_{\eps_1}(z_k), \varphi_{\eps_1}, h_{\eps_1}) \big) \prod_j \Vg{g_{\eps_1}}{\alpha_j}(\psi_{\eps_1}(x_j)) \rangle_{h_{\eps_1}} \Big) dv_g(\bz)
\end{align*}
\begin{align*}
&= 4 \pi c \int_{\Sigma^{n-1}} \prod_{k=2}^n f_k^{z_k z_k}(z_k) \partial_{\eps_1}|_0 A(\varphi_{\eps_1}, \psi_{\eps_1}^* h_{\eps_1}) \langle \prod_{k=2}^n T_{z_k z_k}(z_k) \prod_j \Vg{g}{\alpha_j}(x_j) \rangle_g dv_g(\bz) \\
& \quad - 4 \pi \sum_j \Delta_{\alpha_j} \int_{\Sigma^{n-1}} \prod_{k=2}^n f_k^{z_k z_k}(z_k) \partial_{\eps_1}|_0 \varphi_{\eps_1}(x_j) \langle \prod_{k=2}^n T_{z_k z_k}(z_k) \prod_j \Vg{g}{\alpha_j}(x_j) \rangle_g dv_g(\bz) \\
& \quad + 4 \pi \sum_{j=2}^n \int_{\Sigma^{n-1}}  \prod_{k=2}^n f_k^{z_k z_k}(z_k)  \partial_{\eps_1}|_0 a(\psi_{\eps_1}(z_k), \varphi_{\eps_1}, h_{\eps_1}) \langle \prod_{k \neq 1,j} T_{z_k z_k}(z_k) \prod_j \Vg{g}{\alpha_j}(x_j) \rangle_g dv_g(\bz) \\
& \quad +4 \pi  \int_{\Sigma^{n-1}}  \prod_{k=2}^n f_k^{z_k z_k}(z_k)  \partial_{\eps_1}|_0 \langle \prod_{k=2}^n (\psi_{\eps_1}^* T)_{z_k z_k}(z_k) \prod_j \Vg{g}{\alpha_j}(x_j) \rangle_g dv_g(\bz) \\
& \quad + 4 \pi \int_{\Sigma^{n-1}} \prod_{k=2}^n f_k^{z_k z_k}(z_k)  \partial_{\eps_1}|_0 \langle \prod_{k=2}^n T_{z_k z_k}(z_k) \prod_j \Vg{g}{\alpha_j}(\psi_{\eps_1}(x_j)) \rangle_g dv_g(\bz) \\
& \quad + 4 \pi \partial_{\eps_1}|_0  \int_{\Sigma^{n-1}} \prod_{k=2}^n f_k^{z_k z_k} (z_k) \langle \prod_{k=2}^n T_{z_k z_k}(z_k) \prod_j \Vg{h_{\eps_1}}{\alpha_j}(x_j) \rangle_{h_{\eps_1}} dv_g(\bz)\,.
\end{align*}
Next, we evaluate all the 6 terms above. Recall the definition $\eqref{anomaly}$ of $A$. We assumed that $g$ is hyperbolic, so $K_{\psi_{\eps_1}^* h_{\eps_1}} = K_g =  -2$ for all $\eps \geq 0$. Now from $\varphi_\eps = \mathcal{O}(\eps)$ and Equation \eqref{phi_I} it follows that
\begin{align*}
\partial_{\eps_1}|_0 A(\varphi_{\eps_1}, \psi_{\eps_1}^* h_{\eps_1}) &= \partial_{\eps_1}|_0 A(\varphi_{\eps_1},g) = \frac{1}{48 \pi} \int_\Sigma K_g(z) \partial_{\eps_1}|_0 \varphi_{\eps_1}(z) dv_g(z) = -\frac{1}{24\pi} \int_\Sigma \nabla_z u^z dv_g(z) = 0\,.
\end{align*}

We move on to the term $\partial_{\eps_1}|_0 a(\psi_{\eps_1}(z_k), \varphi_{\eps_1}, h_{\eps_1})$.  Let $g_\eps^{z\bar z} = g^{z \bar z}$ and $g_\eps^{z  z} = \eps \phi^{zz}$. By using definition of $a(z,\varphi,g)$ and Equations \eqref{K_var} and \eqref{phi_I}, we get
\begin{align*}
\int \phi^{zz}(z) \partial_{\eps_1}|_0 a(\psi_{\eps_1}(z),\varphi_{\eps_1},h_{\eps_1}) dv_g(z) &= 4 \pi c \partial_{\eps_1}|_0 \int  \phi^{zz}(z)  \frac{\delta}{\delta g^{zz}(z)} A(\varphi_{\eps_1},g) dv_g(z) \\
&= 4 \pi c \partial_{\eps_1}|_0 \partial_{\eps}|_0 \frac{1}{48\pi} \int K_{g_\eps} (z)\varphi_{\eps_1}(z) dv_g(z) \\
&= -\frac{c}{12} \int \phi^{zz}(z) \nabla_z^3 u_1^z(z) dv_g(z)\,,
\end{align*}
implying that $\partial_{\eps_1}|_0 a(\psi_{\eps_1}(z),\varphi_{\eps_1},h_{\eps_1}) = - \frac{c}{12} \nabla_z^3 u_1^z$.

From Equation \eqref{phi_I} we get $\partial_{\eps_1}|_0 \varphi_{\eps_1}(x_j) = - \nabla_{x_j} u_1^{x_j}(x_j)$, and it is straightforward to check, by using $\partial_{\eps_1}|_0 \overline{\psi_{\eps_1}} = 0$, that
\begin{align*}
\partial_{\eps_1}|_0 (\psi_{\eps_1}^* T)_{zz}(z) &= g_{z \bar z}^2 \partial_{\eps_1}|_0 \big( (D \psi_{\eps_1})^{\alpha \bar z} (T \circ \psi_{\eps_1})_{\alpha \beta} (D \psi_\eps)^{\beta \bar z} \big) \\
&= (2 \partial_z u_1^z + u_1^z \partial_z) T_{zz} \\
&= (2 \nabla_z u_1^z + u_1^z \nabla_z)T_{zz}\,,
\end{align*}
where we used $\nabla_z u_1^z = (\partial_z + \partial_z \sigma) u_1^z$ and $\nabla_z T_{zz} = (\partial_z - 2 \partial_z \sigma) T_{zz}$.

Finally,
\begin{align*}
\partial_{\eps_1}|_0 \Vg{g}{\alpha_j}(\psi_\eps(x_j)) &= u_1^z \nabla_z \Vg{g}{\alpha_j}(x_j)\,.
\end{align*}

By collecting all the terms together, we get

\begin{align*}
& \int \prod_{k=1}^n f_k^{z_k z_k}(z_k) \langle \prod_{k=1}^n T_{z_k z_k}(z_k) \prod_j \Vg{g}{\alpha_j}(x_j) \rangle_{g} dv_g(\bz) \\
&= - \frac{4 \pi c}{12} \sum_{i=2}^n  \int \prod_{k=1}^n f_k^{z_k z_k}(z_k)  \nabla_{z_i}^3 \mathcal{G}^{z_i}_{z_1 z_1} dv_g(z) \langle \prod_{k \neq 1,i} T_{z_k z_k}(z_k) \prod_j \Vg{g}{\alpha_j}(x_j) \rangle_g \\
&\quad +  4 \pi  \int \prod_{k=1}^n f_k^{z_k z_k}(z_k) \Big(  \sum_{i=2}^n \big( 2 \nabla_{z_i} \mathcal{G}^{z_i}_{z_1 z_1} + \mathcal{G}^{z_i}_{z_1 z_1} \nabla_{z_i} \big) + \sum_j ( \Delta_{\alpha_j} \nabla_{x_j} \mathcal{G}^{x_j}_{z_1 z_1} + \mathcal{G}^{x_j}_{z_1 z_1} \nabla_{x_j}) \Big) \\
& \qquad \qquad \qquad \qquad \qquad \times \langle \prod_{k=2}^{n} T_{z_k z_k}(z_k) \prod_j \Vg{g}{\alpha_j}(x_j) \rangle dv_g(z) \\
& \quad +4 \pi  \partial_{\eps_1}|_0 \int \prod_{k=2}^n f_k^{z_k z_k}(z_k) \langle \prod_{k=2}^n T_{z_k z_k}(z_k) \prod_j \Vg{g}{\alpha_j}(x_j) \rangle_{h_{\eps_1}} dv_g(\bz)\,.
\end{align*}
Denote $\tilde g_\eps^{zz} = \eps_1 f_{1,m}^{zz}+\sum_{k=2}^n \eps_k f_k^{zz}$. We write the last term on the right-hand side in the following way
\begin{align*}
&\partial_{\eps_1}|_0 \int \prod_{k=2}^n f_k^{z_k z_k}(z_k) \langle \prod_{k=2}^n T_{z_k z_k}(z_k) \prod_j \Vg{g}{\alpha_j}(x_j) \rangle_{h_{\eps_1}} dv_g(\bz) \\
&= \prod_{k=1}^n \partial_{\eps_k}|_0 \langle \prod_j \Vg{g}{\alpha_j}(x_j) \rangle_{\tilde g_\eps} \\
&= \prod_{k=2}^n \partial_{\eps_k}|_0 \int f_1^{z_1 z_1}(z_1) \langle T_m(z_1) \prod_j \Vg{g}{\alpha_j}(x_j) \rangle_{e^{\varphi_{\hat \eps}} \psi_{\hat \eps}^* h_{\hat \eps}} dv_g(z_1)\,,
\end{align*}
where $\hat \eps = (0,\eps_2,\hdots,\eps_n)$. Above we were free to change the order of the $\eps$-derivatives due to Proposition \ref{se_tensor}. Next we use the transformation properties from Proposition \ref{T_m_transformation} to get
\begin{align*}
&  \prod_{k=2}^n \partial_{\eps_k}|_0 \int f_1^{z_1 z_1}(z_1) \langle T_m(z_1) \prod_j \Vg{g}{\alpha_j}(x_j) \rangle_{e^{\varphi_{\eps}} \psi_{\eps}^* h_{\eps}} dv_g(z_1) \\
&= \prod_{k=2}^n \partial_{\eps_k}|_0 \int f_1^{z_1 z_1}(z_1) e^{c A(\varphi_\eps, \psi_\eps^* h_\eps) - \sum_j \Delta_{\alpha_j} \varphi_\eps(x_j)} \langle (\psi_\eps^* T_m)(z_1) \prod_j \Vg{g}{\alpha_j}(\psi_\eps(x_j)) \rangle_{h_\eps} dv_g(z_1)
\end{align*}
By separating the derivatives that operate on the $h_\eps$ from the rest, we get
\begin{align*}
& \prod_{k=2}^n \partial_{\eps_k}|_0 \int f_1^{z_1 z_1}(z_1) e^{c A(\varphi_\eps, \psi_\eps^* h_\eps) - \sum_j \Delta_{\alpha_j} \varphi_\eps(x_j)} \langle (\psi_\eps^* T_m)(z_1) \prod_j \Vg{g}{\alpha_j}(\psi_\eps(x_j)) \rangle_{h_\eps} dv_g(z_1) \\
&= \sum_{I \subset \{2,\hdots,n\}} \prod_{k \in I^c} \partial_{\eps_k}|_0 \int f_1^{z_1 z_1}(z_1) \prod_{k \in I} f_k^{z_k z_k}(z_k) e^{cA(\varphi_{\eps_{I^c}}, \psi_{\eps_{I^c}}^* g) - \sum_j \Delta_{\alpha_j} \varphi_{\eps_{I^c}}(x_j)} \\
& \qquad \qquad \qquad \qquad \times \langle (\psi^*_{\eps_{I^c}} T_m)(z_1) \prod_{k \in I} (\psi_{\eps_{I^c}}^* T_m)(z_k) \prod_j \Vg{g}{\alpha_j}(\psi_{\eps_{I^c}}(x_j)) \rangle_{g} dv_g(\bz) \\
&= \sum_{I \subset \{2,\hdots,n\}} \int \prod_{k \in I} f_k^{z_k z_k} (z_k) F_I(f_{I^c};\bx,\bz) D_{z}^{\alpha_I} D_x^{\beta_I} \langle T_m(z_1) \prod_{k \in I} T_m(z_k) \prod_j \Vg{g}{\alpha_j}(x_j) \rangle_g dv_g(\bz)\,,
\end{align*}
where $F_I(f_{I^c};\cdot,\cdot)$ is smooth in the region of non-coinciding points for each $I$ and $D_z^{\beta_I} = \prod_k \partial_{z_k}^{\beta_k}$. We denote the above distribution by $\langle T_m(z_1) \prod_{k=2}^n T_{z_k z_k}(z_k) \prod_j \Vg{g}{\alpha_j}(x_j) \rangle_g$. Now we have shown that
\begin{align*}
&\langle \prod_{k=1}^n T_{z_k z_k}(z_k) \prod_j \Vg{g}{\alpha_j}(x_j) \rangle_g \\
&= - \tfrac{4 \pi c}{12} \sum_{i=2}^n \nabla_{z_i}^3 \mathcal{G}^{z_i}_{z_1 z_1}  \langle \prod_{k \neq 1,i} T_{z_k z_k}(z_k) \prod_j \Vg{g}{\alpha_j}(x_j) \rangle_g \\
& \quad + 4\pi  \Big( \sum_{i=2}^n \big( 2 \nabla_{z_i} \mathcal{G}^{z_i}_{z_1 z_1} + \mathcal{G}^{z_i}_{z_1 z_1} \nabla_{z_i} \big) + \sum_j ( \Delta_{\alpha_j} \nabla_{x_j} \mathcal{G}^{x_j}_{z_1 z_1} + \mathcal{G}^{x_j}_{z_1 z_1} \nabla_{x_j}) \Big) \langle \prod_{k=2}^n T_{z_k z_k}(z_k) \prod_j \Vg{g}{\alpha_j}(x_j) \rangle_g \\
& \quad + \langle T_m(z_1) \prod_{k=2}^n T_{z_k z_k}(z_k) \prod_j \Vg{g}{\alpha_j}(x_j) \rangle_g\,,
\end{align*}
and it follows that $\langle \prod_{k=1}^n T_{z_k z_k}(z_k) \prod_j \Vg{g}{\alpha_j}(x_j) \rangle_g$ is smooth in $z_i$ and $x_j$ for non-coincident points.
\end{proof}

\appendix

\section{Distance function and Weyl transformations}\label{appendix}

Let $f \in \sectt{tf}{}$ and set $g_\eps^{zz} = \eps f^{zz}$, $g_\eps^{z \bar z} = g^{z \bar z}$. Fix a point $x \in \Sigma$ and a conformal coordinate $z$ defined on a chart $U$ around $x$. Now $g=e^\sigma |dz|^2$ on $U$ and it is possible to find a neighbourhood $U$ of $x$ and smooth maps $\psi_\eps: U \to \Sigma$ and $\varphi_\eps: U \to \R$ satisfying $g_\eps = e^{\varphi_\eps}\psi_\eps^* (e^\sigma |dz|^2)$. The function $\psi_\eps$ solves the Beltrami equation
\begin{equation}
\partial_{\bar z} \psi_\eps(z) = \mu_\eps(z) \partial_z \psi_\eps(z)
\end{equation}
with $\mu_\eps = \frac{\gamma_{\eps,\bar z \bar z}}{ \gamma_{ z \bar z} + \gamma_{\eps,z \bar z}}$ for $\gamma = \frac{g}{\sqrt{\det g}}$, $\gamma_\eps = \frac{g_\eps }{\sqrt{g_\eps}}$. From this it follows that (see Section \ref{beltrami_section} or Section 3.3. in \cite{KO})
\begin{align*}
\varphi_\eps = - \eps ( u^z \partial_z \sigma + \partial_z u^z) + \mathcal{O}(\eps^2)  = - \eps \nabla_z u^z + \mathcal{O}(\eps^2)\,,
\end{align*}
where $u^z = \partial_\eps|_0 \psi_\epsilon$.
For the distance function of $g_\eps$ we get
\begin{align*}
\ln d_{g_\eps}(x,y) &= \ln \Big( e^{\hf \varphi_\eps(x)} d_g(\psi_\eps(x),\psi_\eps(y))  + d_{g_\eps}(x,y) - e^{\hf \varphi_\eps(x)} d_g(\psi_\eps(x),\psi_\eps(y)) \Big) \\
&= \ln \big( e^{\hf \varphi_\eps(x)} d_g(\psi_\eps(x),\psi_\eps(y)) \big) + \frac{ d_{g_\eps}(x,y) - e^{\hf \varphi_\eps(x)} d_g(\psi_\eps(x),\psi_\eps(y))}{e^{\hf \varphi_\eps(x)} d_g(\psi_\eps(x),\psi_\eps(y))} + \hdots 
\end{align*}
where we used $\ln(x+y) = \ln x + \frac{y}{x} + \hdots$. By using the definition of the distance function, we get that
\begin{align*}
d_{g_\eps}(x,y) - e^{\hf \varphi_\eps(x)}  d_g(\psi_\eps(x),\psi_\eps(y)) &= \int_\gamma (\nabla_{\dot \gamma(t)} e^{\hf \varphi_\eps(\gamma(t))})d_g(x,\gamma(t)) \sqrt{(\psi^*_\eps g)_{\mu \nu}(\gamma(t)) \dot \gamma^\mu(t) \dot \gamma^\nu(t)} d t + \hdots 
\end{align*}
where we have Taylor expanded the function $e^{\hf \varphi_\eps}$ along the length minimizing geodesic $\gamma$. The absolute value of the $\partial_\eps|_0$ derivative of the above expression is bounded above by $C d_g(x,y)^2$, because the integrand is bounded from above by $C d_g(x,y)$, and the integration is over a path of length $d_g(x,y)$. It follows that
\begin{align}
\ln d_{g_\eps}(x,y) &= \hf \varphi_\eps(x) + \ln d_g(\psi_\eps(x),\psi_\eps(y)) + r_\eps(x,y)\,,
\end{align}
where $\partial_\eps^n|_0 r_\eps(x,y) = \mathcal{O}(d_g(x,y))$ for all $n \geq 0$.

\section{Relation to the Stress--Energy field}\label{appendix_b}

\noindent In this appendix we explain the equivalence of the two different definitions of the SE-tensor that have appeared in the literature. The rest of the article is completely independent of the discussion here, so some details are skipped. For simplicity, we assume that $\Sigma$ is the Riemann sphere endowed with the spherical metric
\begin{align}\label{spherical_metric}
g(z) &= \frac{4}{(1+|z|^2)^2}|dz|^2\,,
\end{align}
so that $K_g(z)=2$ for all $z$. We denote $e^{\sigma(z)} = \frac{4}{(1+|z|^2)^2}$. We want to comment on the relation between the $T_{zz}$-component of the Stress--Energy tensor and the \emph{Stress--Energy field} $\mathcal{T}$, which we formally\footnote{The expression has to be regularized for it to make rigorous sense, but to keep this appendix short we do not explicitly work with the regularizations.} define as
\begin{align}\label{se_field_ward}
\mathcal{T}(z) &= Q \partial_z^2 \big( X(z) + \tfrac{Q}{2} \sigma(z) \big) - \big(\partial_z (X(z)+\tfrac{Q}{2}\sigma(z) \big)^2 + \tfrac{1}{12} \big( \partial_z^2 \sigma(z) - \tfrac{1}{2} (\partial_z \sigma(z))^2 \big) +  \E_g[\big( \partial_z X(z) \big)^2]\,.
\end{align}
This definition agrees with the one used in \cite{ward} after one notices that
\begin{align}\label{t=0}
\partial_z^2 \sigma(z) - \tfrac{1}{2} (\partial_z \sigma(z))^2 =0
\end{align}
for the metric \eqref{spherical_metric}. The Stress--Energy field of Liouville theory was also studied in \cite{GKRV1, GKRV2}, where conformal Ward identities are derived in a regularized setting. Note that the above definition is written in a coordinate dependent way, and is not natural geometrically. First, we would like to write down the above formula in a more natural language. A simple computation shows that we have
\begin{align}\label{se_field_covariant}
\mathcal{T}(z) &= Q \nabla_z^2 X(z) - \big( \nabla_z X(z) \big)^2 +\E_g[\big( \nabla_z X(z) \big)^2] + \tfrac{c_L}{12} \big( \partial_z^2 \sigma(z) - \tfrac{1}{2} (\partial_z \sigma(z))^2  \big)\,,
\end{align}
where $c_L=1+6Q^2$ is the central charge of Liouville CFT. If the object $\sigma$ was a scalar ($0$-form), then the last term above would be a $2$-tensor, because then it would satisfy
\begin{align*}
t_{zz}(z) &:=  \partial_z^2 \sigma(z) - \tfrac{1}{2} (\partial_z \sigma(z))^2 = \nabla_z^2 \sigma(z) + \tfrac{1}{2} (\nabla_z \sigma(z))^2\,.
\end{align*}
In reality, $\sigma$ is not a scalar. A quick way to see this is to notice that $\partial_z \sigma = \Gamma^{z}_{zz}$ where $\Gamma$ is the Christoffel connection, so $\partial_z \sigma = \nabla_z \sigma$ does not transform as a $1$-form. It can also be checked that for a holomorphic coordinate change $z'=\psi(z)$ one has
\begin{align*}
t_{z'z'} (dz')^2 = t_{zz} dz^2 - \{\psi,z\}dz^2\,,
\end{align*}
where
\begin{align*}
\{f,z\} &:= \frac{\psi'''(z)}{\psi'(z)} - \frac{3}{2} \Big( \frac{\psi''(z)}{\psi'(z)} \Big)^2\,,
\end{align*}
is the Schwarzian derivative of $\psi$. This makes \eqref{se_field_covariant} look unnatural at first, but it turns out that $t_{zz}=0$ for the canonical metrics (spherical, flat, hyperbolic), and that one can always choose an atlas of coordinates consisting of isometries of the natural metrics, for which $\{\psi,z\} = 0$ always (for example the Möbius transforms have a vanishing Schwarzian derivative). Thus, we can discard $t_{zz}$ and we are then left with
\begin{align}
\tilde{\mathcal{T}}(z) &= \mathcal{T}(z) - \tfrac{c_L}{12} t_{zz}(z) = Q \nabla_z^2 X(z) - \big( \nabla_z X(z) \big)^2 +\E_g[\big( \nabla_z X(z) \big)^2]\,,
\end{align}
which is automatically a tensor. In what follows we will explain why the above expression agrees with $T_{zz}$.

Recall the definition of the Liouville expectation
\begin{align*}
\langle F \rangle_g &= Z_{\opn{GFF}}(\Sigma,g) \int_\R \E_g \big[ F(c+X) e^{- \frac{Q}{4\pi} \int_\Sigma(c+X)K_g dv_g - \mu e^{\gamma c} \mathbf{G}^\gamma_g(\Sigma)} \big] dc\,.
\end{align*}
The $g$-dependent terms are the partition function of the GFF $Z_{\opn{GFF}}(\Sigma,g)$, the expected value with respect to the GFF $\E_g$, the scalar curvature $K_g$ and the volume form $v_g$. The $g$-dependency of the GMC measure $\mathbf{G}_g^\gamma$ depends on $g$ only through the volume form $v_g$ and possibly the regularization procedure used to define it. These turn out not to be important, as we will soon see.

As we are only interested in the $T_{zz}$ component of the SE-tensor, we work with a perturbation $g_\eps$ of the metric $g$ that in terms of the inverse metric takes the form
\begin{align*}
g_\eps^{zz} &= \eps f^{zz}\,, \\
g_\eps^{\bar z \bar z} &= 0\,,\\
g_\eps^{z \bar z} &= g^{z \bar z}\,. \\
\end{align*}
On the Riemann sphere the $g$-dependency of $Z_{\opn{GFF}}(\Sigma,g)$ is fully described by the transformation laws
\begin{align*}
Z_{\opn{GFF}}(\Sigma,e^\omega g) &= e^{A(\omega,g)}Z_{\opn{GFF}}(\Sigma,g)\,, \\
Z_{\opn{GFF}}(\Sigma,\psi^* g) &= Z_{\opn{GFF}}(\Sigma,g)\,,
\end{align*}
where $A$ is the function defined in \eqref{anomaly}, because every metric on the sphere takes the form $e^\omega \psi^* g$, where $g$ is for example the spherical metric. To compute the derivative $\partial_\eps|_0 Z_{\opn{GFF}}(\Sigma,g_\eps)$ we recall Equation (3.47) in \cite{KO}, which states that
\begin{align*}
\partial_\eps|_0 A(\omega_\eps,\psi_\eps^*g ) &= - \tfrac{1}{48\pi}\int f^{zz} \big( \partial_z^2 \sigma(z) - \tfrac{1}{2} (\partial_z \sigma(z))^2 \big) dv_g(z) = 0\,,
\end{align*}
where in the latter equality we used \eqref{t=0}.

To compute the derivative $\partial_\eps| e^{- \frac{Q}{4\pi} \int_\Sigma (c+X)K_{g_\eps} dv_g}$ we invoke Lemma 4.2. in \cite{KO}, which states that 
\begin{align*}
\partial_\eps|_0 \int_\Sigma X(z) K_{g_\eps}(z) dv_g(z) &= \int_\Sigma f^{zz}(z) \big(-\partial_z^2 X(z) + \partial_z \sigma (z) X(z) \big) dv_g(z) = -\int_\Sigma f^{zz}(z) \nabla_z^2 X(z) dv_g(z)\,,
\end{align*}
which is just a restatement of $\partial_\eps|_0 K_{g_\eps}(z) = - \nabla_z^2 f^{zz}(z)$. This implies that
\begin{align}\label{b1}
\partial_\eps|_0 \int_\R \E_g[ F(c+X)e^{- \frac{Q}{4\pi} \int_\Sigma(c+X)K_{g_\eps} dv_g - \mu e^{\gamma c} \mathbf{G}_g^\gamma(\Sigma)}] &= \tfrac{Q}{4\pi} \int_\Sigma f^{zz}(z) \langle \nabla_z^2 X(z) F \rangle_g dv_g(z)\,.
\end{align}
Next, recall the formula \eqref{variation_of_gaussian}, which together with \eqref{deltag} and Gaussian integration by parts implies that
\begin{align*}
\partial_\eps|_0 \E_{g_\eps}[F(X)] &= \int_\Sigma f^{zz}(z) \E_g[\big( - \big( \nabla_z X(z) \big)^2 +\E_g[\big( \nabla_z X(z) \big)^2] \big) F(X)] dv_g(z)\,.
\end{align*}
It can be readily checked that for the type of perturbation $g_\eps$ we are considering,
\begin{align*}
dv_{g_\eps}(z) &= \sqrt{\det g_\eps (z)} |dz|^2 = \sqrt{\det g(z)} |dz|^2 + \mathcal{O}(\eps^2)\,,
\end{align*}
so that $\partial_\eps dv_{g_\eps}|_0=0$. Now, by combining the above observations (to be more precise, one should also work a little to see that there are no extra contributions from metric dependent regularizations), we get
\begin{align*}
\partial_\eps|_0 \langle F \rangle_{g_\eps} &= \tfrac{1}{4\pi} \int_\Sigma f^{zz}(z) \langle \mathcal{T}(z) F \rangle_g dv_g(z)\,,
\end{align*}
which explains the formula \eqref{se_field_ward}.

\end{document}